\documentclass[10pt, compsoc, journal, doublecolumn]{IEEEtran}
\usepackage{cite}
\pdfoutput=1
\usepackage{graphicx}
\usepackage[utf8]{inputenc}
\usepackage[english]{babel}
\usepackage{amsthm}
\usepackage{amsmath}
\usepackage{url}
\usepackage{bbm}
\usepackage[outdir=./]{epstopdf}
\usepackage[linesnumbered,ruled,vlined]{algorithm2e}
\usepackage{algpseudocode}
\algrenewcommand\algorithmicrequire{\textbf{Input:}}
\algrenewcommand\algorithmicensure{\textbf{Output:}}
\usepackage{threeparttable}
\usepackage{makecell}
\usepackage{amsmath,amssymb,amsfonts}
\usepackage{amsmath}
\usepackage{textcomp}
\usepackage{xcolor}

\usepackage{mathtools}

\usepackage[top=1in,bottom=0.75in,right=0.75in,left=0.75in]{geometry}
\usepackage{subfigure}
\def\BibTeX{{\rm B\kern-.05em{\sc i\kern-.025em b}\kern-.08em
    T\kern-.1667em\lower.7ex\hbox{E}\kern-.125emX}}

\newtheorem{theorem}{Theorem}
\newtheorem{corollary}{Corollary}[theorem]
\newtheorem{lemma}[theorem]{Lemma}
\newcommand{\equref}[1]{Eq.~(\ref{#1})}
\newcommand{\ineqref}[1]{Ineq.~(\ref{#1})}
\newcommand{\figref}[1]{Fig.~\ref{#1}}
\newcommand{\tabref}[1]{Table~\ref{#1}}
\newcommand{\algoref}[1]{Algorithm~\ref{#1}}
\newcommand{\secref}[1]{Section~\ref{#1}}
\newcommand{\theoref}[1]{Theorem~\ref{#1}}
\newcommand{\lemmaref}[1]{Lemma~\ref{#1}}

\SetKwInput{KwInput}{Input}
\SetKwInput{KwOutput}{Output}
\usepackage{etoolbox}
\ifCLASSINFOpdf
\else

\fi

\hyphenation{op-tical net-works semi-conduc-tor}

\begin{document}

%
\title{On Batch-Processing Based Coded Computing for Heterogeneous Distributed Computing Systems}

\author{Baoqian~Wang,~\IEEEmembership{Student Member,~IEEE}, 
    Junfei~Xie,~\IEEEmembership{Member,~IEEE}, 
    Kejie~Lu,~\IEEEmembership{Senior Member,~IEEE}, 
    Yan~Wan,~\IEEEmembership{Senior Member,~IEEE},
    and Shengli~Fu,~\IEEEmembership{Senior Member,~IEEE}
\thanks{Baoqian Wang is with the Department of Electrical and Computer Engineering, San Diego State University, and University of California, San Diego, San Diego, CA, 92182 (e-mail: {\tt\small bwang4848@sdsu.edu)}.}
\thanks{Junfei Xie is with the Department of Electrical and Computer Engineering, San Diego State University, San Diego, CA, 92182 (e-mail: {\tt\small jxie4@sdsu.edu}). Corresponding author.} 
	\thanks{Kejie Lu is with the Department of Computer Science and Engineering, University of Puerto Rico at Mayag\"{u}ez, Mayag\"{u}ez, Puerto Rico, 00681, e-mail:  ({\tt\small kejie.lu@upr.edu}).	}
\thanks{Yan Wan is with the Department of Electrical Engineering, University of Texas at Arlington, Arlington, Texas, 76019 (e-mail:  {\tt\small yan.wan@uta.edu}).}
\thanks{Shengli Fu is with the Department of Electrical Engineering, University of North Texas, Denton, Texas, 76201 (e-mail: {\tt\small Shengli.Fu@unt.edu}).}
\thanks{Manuscript received October 1, 2020. 
}}

\IEEEtitleabstractindextext{\begin{abstract}
In recent years, \emph{coded distributed computing} (CDC) has attracted significant attention, because it can efficiently facilitate many delay-sensitive computation tasks against unexpected latencies in  distributed computing systems. Despite such a salient feature, many design challenges and opportunities remain. In this paper, we focus on practical computing systems with heterogeneous computing resources, and design a novel CDC approach, called \textit{batch-processing based coded computing} (BPCC), which exploits the fact that every computing node can obtain some coded results before it completes the whole task. To this end, we first describe the main idea of the BPCC framework, and then formulate an optimization problem for BPCC to minimize the task completion time by configuring the computation load.  Through formal theoretical analyses, extensive simulation studies, and comprehensive real experiments on the Amazon EC2 computing clusters, we demonstrate  promising performance of the proposed BPCC scheme, in terms of high computational efficiency and robustness to uncertain disturbances.
\end{abstract}

\begin{IEEEkeywords}
Coded distributed computing, heterogeneous computing cluster, batch-processing, asymptotic optimality, latency.
\end{IEEEkeywords}}

\maketitle

\section{Introduction}
In recent years, distributed computing has been widely adopted to perform various computation tasks in different computing systems \cite{kartik97task,hong04distributed, lu19toward}. For instance, to perform big data analytics in cloud computing systems, MapReduce \cite{dean08mapreduce} and Apache Spark \cite{zaharia10spark} are the two prevalent modern distributed computing frameworks that  process data in the order of petabytes. 

Despite the importance of distributed computing, many design challenges remain. One major challenge is that many computing frameworks are vulnerable to uncertain  disturbances, such as node/link failures, communication congestion, and slow-downs  \cite{dean13tail}. Such  disturbances, which can be modeled as  stragglers that are slow or even fail in returning results, have been observed in many large-scale computing systems such as cloud computing\cite{yang2019timely}, mobile edge computing\cite{kim2020coded}, and fog computing\cite{yue2020coding}.

A variety of  solutions have been developed in the literature to address  stragglers. For example, the authors of \cite{zaharia08improving} proposed to identify and blacklist nodes that are in bad health and to run tasks only on well-performed nodes. However, empirical studies show that stragglers can occur in non-blacklisted nodes as well \cite{dean12achieving,ananthanarayanan13effective}. As another type of solution, delayed computation tasks can be re-executed in a speculative manner \cite{dean08mapreduce, ananthanarayanan10reining, zaharia08improving, melnik10dremel}. Nevertheless, such speculative execution techniques have to wait to collect the performance statistics of the tasks before generating speculative copies and thus have limitations in dealing with small jobs \cite{ananthanarayanan13effective}. To avoid waiting and predicting stragglers, the authors of \cite{ananthanarayanan12why,ananthanarayanan13effective} suggested to execute multiple clones of each task and use results generated by the fastest clones. Although their results show the promising performance of this approach in reducing the average completion time of small jobs, the extra resources required for launching clones can be considerably large, because  multiple clones are executed for each task.

Instead of directly replicating the whole task, the \textit{coding} techniques can be adopted to introduce arbitrary redundancy into the computation in a systematic way. However, until a few years ago, the coding techniques have been mostly known for their capability in improving the resilience of communication, storage and cache systems to uncertain disturbances \cite{liu2020esetstore, li15coded, zhu2013speedup}. Lee \textit{et al.} \cite{lee16speeding,lee18speeding} presented the first \emph{coded distributed computing} (CDC) scheme to speed up matrix multiplication and data shuffling. Since then, CDC has attracted significant attention in the distributed computing community. Although a variety of CDC schemes have been developed to solve different computation problems, most of these schemes assume homogeneous computing nodes, which is not a common case in realistic scenarios. Moreover, they require each worker node to first complete the computation task and then send back the whole result to the master node, which introduces significant delays \cite{reisizadeh17coded, reisizadeh19coded, lee18speeding, lee16speeding}.

In this paper, we focus on a classical CDC task: matrix-vector multiplication and propose a novel coding scheme, called batch-processing based coded computing (BPCC), to speed up the computational efficiency of 
general distributed computing systems with heterogeneous computing nodes and improve their robustness to uncertain disturbances.
Unlike most existing CDC schemes, 
our BPCC allows each node to return partial computing results to the master node in batches before the whole computation task is completed. Therefore, BPCC 
achieves lower latency.  
Also worthy of note is that the partial results can be used to generate approximated solutions, e.g., by applying the singular value decomposition (SVD) approach in \cite{ferdinand16anytime}, which is very useful for  applications that require timely but unnecessarily optimized decisions such as emergency response. To the best of our knowledge, such a BPCC framework has not been fully investigated in the literature. 

This paper extends our earlier work presented in \cite{wang19coding}, and further makes the following new contributions.
\begin{enumerate}
\item \textit{An optimal load allocation strategy.}
For systems with heterogeneous computing nodes, equally distributing the computation load may lead to bad performance. To optimize the computational efficiency, we formulate an optimization problem for general BPCC with the assumption that the processing time of each computing node follows a shifted exponential distribution. To solve the optimization problem, we formulate alternative optimization problems, based on which we design an optimal load allocation scheme that  assigns proper amount of load to each node to achieve the minimal expected task completion time. 

\item \textit{Comprehensive theoretical analyses. 
}
We conduct formal theoretical analyses to prove the asymptotic optimality of BPCC and the impact of its important parameter.  We also prove that it outperforms a state-of-the-art CDC scheme for heterogeneous systems, called Heterogeneous Coded Matrix Multiplication (HCMM)\cite{reisizadeh17coded, reisizadeh19coded}. 

\item \textit{Extensive simulation and real experimental studies. }
To further demonstrate the performance of BPCC, we compare it with three benchmark schemes, including the Uniform Uncoded, Load-Balanced Uncoded, and HCMM. The simulation results show the impact of BPCC parameters including  number of batches and number of worker nodes. Specifically, the efficiency of BPCC improves with the increase of the number of batches and the solution of BPCC is optimal when the number of worker nodes approaches infinity. A sensitivity study shows the performance of BPCC when parameters in the computing model take erroneous values. 
Moreover, the simulation results also demonstrate that BPCC can improve computing performance by reducing the latency up to 73\%, 56\%, and 34\% over the aforementioned three benchmark schemes, respectively.
In the real experiments, we test all distributed computing schemes in the Amazon EC2 computing clusters. In particular, we deploy a heterogeneous computing cluster that consists of different machine instances in Amazon EC2. 
The results show that our BPCC scheme is more efficient and robust to uncertain disturbances than the  benchmark schemes.
\end{enumerate}


The rest of this paper is organized as follows.  \secref{sec:related} presents the related work. In \secref{sec:batch}, we introduce the system model and the BPCC framework, and then formulate an optimization problem for BPCC. To solve the optimization problem, in \secref{sec:optimalload}, we provide a two-step alternative formulation for which we design the BPCC scheme and conduct solid theoretical analysis to prove its optimality and understand the impact of its parameter. 
We then present extensive simulation and experimental results in \secref{sec:simulation} and \secref{sec:experiment}, respectively, before concluding the paper in \secref{sec:conclusion}. For
better readability, we move the proofs of all lemmas, theorems and corollaries to the Appendix.

\section{Related Work}
\label{sec:related}

Following the seminal work in \cite{li15coded,lee16speeding,lee18speeding}, many different computation problems have been explored using codes, such as the gradients \cite{tandon17gradient}, large matrix-matrix multiplication \cite{lee17high}, linear inverse problems \cite{yang17coded}, and nonlinear operations \cite{lee17coded}. Other relevant coded computation solutions include the ``Short-Dot'' coding scheme \cite{dutta17short-dot} that offers computation speed-up by introducing additional sparsity to the coded matrices and the unified coded framework \cite{li16unified,li18fundamental} that achieves the trade-off between communication load and computation latency. 

While most CDC schemes consider homogeneous computing nodes, there have been a few recent studies that investigated CDC over heterogeneous computing clusters. 
In particular, Kim \textit{et al.} \cite{kim19coded,kim19optimal} considered the matrix-vector multiplication problem and presented an optimal load allocation method that achieves a lower bound of the expected latency. Reisizadeh \textit{et al.} \cite{reisizadeh17coded} introduced a different approach, namely Heterogeneous Coded Matrix Multiplication (HCMM), that can maximize the expected computing results aggregated at the master node. In \cite{reisizadeh17coded, reisizadeh19coded}, the authors proved that the HCMM  is asymptotically optimal under the assumption that the processing time of each computing node follows a shifted exponential or Weibull distribution. Also of interest, Keshtkarjahromi \textit{et al.} \cite{keshtkarjahromi2018dynamic} considered the scenario when computing nodes have time-varying computing powers, and introduced a coded cooperative computation protocol that allocates tasks in a dynamic and adaptive manner. Narra \textit{et al.} \cite{narra2019slack} also developed an adaptive load allocation scheme and utilized a LSTM-based model to predict the computation capability of the worker nodes.

To reduce the output delay, 
there have been some attempts to enable early return of partial results \cite{ferdinand16anytime, ferdinand18hierarchical,mallick2019rateless}. In particular, an anytime coding technique was introduced in \cite{ferdinand16anytime}, which adopts the SVD to allow early output of approximated result. Also of interest is the study presented in \cite{ferdinand18hierarchical}, which introduced a hierarchical approach to address the limitations of above coding techniques in terms of wastefully ignoring the work completed by slow worker nodes. In particular, to better utilize the work completed by each worker node, it partitions the total computation at each worker node into layers of sub-computations, with each layer encoding part of the job. It then processes each layer sequentially. The final result can be obtained after the master node recovers all layers. The simulation results demonstrate the effectiveness of this approach in reducing the computation latency. However, as the worker nodes have to process the layers in the same order, the results obtained by slow worker nodes for layers that have already been recovered are useless. Furthermore, this approach, as well as aforementioned approaches, assumes homogeneous computing nodes. 
Another relevant study is presented in \cite{mallick2019rateless}, which introduced a rateless fountain coding scheme that can utilize partial results returned by  worker nodes. 


\section{System Models}
\label{sec:batch}

In this section, we first introduce the computing system for distributed matrix-vector multiplication. We then illustrate three computing schemes, including the proposed batch processing-based coded computing (BPCC). Finally, we formulate an optimization problem for BPCC.

\subsection{Computing System}
\label{sec:batch:system}

We consider a distributed computing system that consists of one master node and $N$ ($N \in \mathbb{Z}^+$) computing nodes, a.k.a., worker nodes. Using this system, we investigate how to quickly solve a matrix-vector multiplication problem, which is one of the most basic building blocks of many computation tasks. Specifically, we consider a matrix-vector multiplication problem $\boldsymbol{y}=\boldsymbol{A}\boldsymbol{x}$, where $\boldsymbol{y} \in \mathbb{R}^{r}$ is the output vector to be calculated, $\boldsymbol{x} \in \mathbb{R}^{m}$ is the input vector to be distributed from a master node to multiple workers, and $\boldsymbol{A} \in \mathbb{R}^{r \times m}$ is an $r \times m$ dimensional matrix pre-stored in the system. Both $r$ and $m$ can be very large, which implies that calculating $\boldsymbol{A} \boldsymbol{x}$ at a single computing node is not feasible. Finally, we define $[n] = \{1,2,\ldots,n\}$, where $n$ is an arbitrary positive integer, i.e., $n \in \mathbb{Z}^+$.

\subsection{Computing Schemes}

\subsubsection{Uncoded Distributed Computing} 

To solve the above problem, a traditional distributed computing scheme divides matrix $\boldsymbol{A}$ into a set of sub-matrices $\boldsymbol{A}_1, \boldsymbol{A}_2, \cdots, \boldsymbol{A}_N$, and pre-stores each sub-matrix $\boldsymbol{A}_i \in \mathbb{R}^{\ell_i\times m}$ in computing node $i$,  where $\forall i \in [N], \ell_i \in \mathbb{Z}^+$ and $\sum_{i=1}^N \ell_i = r$. Upon receiving the input vector $\boldsymbol{x}$, the master node sends vector $\boldsymbol{x}$ to all worker nodes. Each worker node $i$ then computes $\boldsymbol{y}_i = \boldsymbol{A}_i\boldsymbol{x}$ and returns the result to the master node. After all results are received, the master node aggregates the results and outputs $\boldsymbol{y} = [\boldsymbol{y}^{T}_1, \boldsymbol{y}^{T}_2, \cdots, \boldsymbol{y}^{T}_N]^{T}$, where $^{T}$ stands for transpose. 

Due to the existence of uncertain system disturbances, 
the uncoded computing scheme may defer or even fail the computation, because the delay or loss of any $\boldsymbol{y}_i$, $i \in [N]$, will affect the calculation of the final result $\boldsymbol{y} = \boldsymbol{A}\boldsymbol{x}$. To address this issue, more computing nodes can be used to perform distributed computing. For instance, the master node can have two or more computing nodes to compute $\boldsymbol{y}_i$. This approach, however, is not efficient because the cost can be unnecessarily large.

\subsubsection{Coded Distributed Computing (CDC)} 

In recent years, a more efficient computing paradigm, CDC, has been introduced to tackle the issue of uncertain disturbances. 
There are many CDC schemes in the literature,  and we consider a generic CDC scheme as follows. 

In this CDC scheme, $\boldsymbol{A}$ will first be used to calculate a larger matrix $\boldsymbol{\hat{A}}\in \mathbb{R}^{q\times m}$ 
with more rows, i.e., $q > r $, by using $
\boldsymbol{\hat{A}}=\boldsymbol{H} \boldsymbol{A}$, where $\boldsymbol{H} \in \mathbb{R}^{q\times r}$ is the encoding matrix with the property that any $r$ row vectors are linearly independent from each other \cite{lee17coded}. In other words, we can use any $r$ rows of $\boldsymbol{H}$ to create an $r \times r$ full-rank matrix. Note that this encoding procedure is performed offline and $\boldsymbol{\hat{A}}$ can be considered to be pre-stored in the system. Similar to the uncoded computing scheme, matrix $\boldsymbol{\hat{A}}$ can then be divided into $N$ sub-matrices $\boldsymbol{\hat{A}}_1, \boldsymbol{\hat{A}}_2, \cdots, \boldsymbol{\hat{A}}_N$, where $\boldsymbol{\hat{A}}_i \in \mathbb{R}^{\ell_i\times m}, \forall i \in [N]$,  $\sum_{i=1}^N \ell_i = q$, and each worker node $i$ 
calculates $\boldsymbol{\hat{y}}_{i} = \boldsymbol{\hat{A}}_{i} \boldsymbol{x}$. 

Different from the uncoded computing scheme, the master node does not need to wait for all worker nodes to complete their calculations, because it can recover $\boldsymbol{A} \boldsymbol{x}$ once the total number of rows of the received results is equal to or larger than $r$. In particular, suppose the master node receives $\boldsymbol{\hat{y}}_b \in \mathbb{R}^{r}$ at a certain time $t$, it can first infer that $\boldsymbol{\hat{y}}_b$ must satisfy
\begin{equation}
\boldsymbol{\hat{y}}_{b}  = \boldsymbol{\hat{H}}_b\boldsymbol{A}\boldsymbol{x},\nonumber
\end{equation} 
where $\boldsymbol{\hat{H}}_b \in \mathbb{R}^{r\times r}$ is a sub-matrix of the encoding matrix $\boldsymbol{H}$ corresponding to $\boldsymbol{\hat{y}}_{b}$. The master node can then calculate
\begin{equation}
\boldsymbol{y} = \boldsymbol{A}\boldsymbol{x} = \boldsymbol{\hat{H}}_{b}^{-1}\boldsymbol{\hat{y}}_b. \label{eq:eq1} 
\end{equation}

\subsubsection{BPCC}

In the literature, most existing CDC schemes assume that each worker node $i$  sends the complete $\boldsymbol{\hat{y}}_i$ to the master node when it is ready, which may incur large delays. To further speed up the computation, we propose a novel BPCC scheme and the main idea is to allow each worker node to return \textit{partial results} to the master node.

Specifically, we consider that each worker node $i$ equally divides the pre-stored encoded matrix $\boldsymbol{\hat{A}}_{i}$ row-wise into $p_i$ sub-matrices, named as \textit{batches}, where $p_i \in \mathbb{Z}^+$ is the number of batches and $p_i \leq \ell_i$. Except the last batch, each batch has $\lceil\frac{\ell_i}{p_i}\rceil=b_i$ rows.  After receiving the input vector $\boldsymbol{x}$ from the master node, the worker node multiplies each batch with $\boldsymbol{x}$ and will send back the partial results once available. Suppose that the master node receives $s_i(t)$ batches from the worker node $i$ by time $t$, where $0 \leq s_i(t) \leq p_i$, it can then recover the final result when $\sum_{i=1}^N \min(\ell_i, s_i(t) b_i)\geq r$, by using \equref{eq:eq1}.

\subsection{Problem Formulation}
\label{sec:problem_formulation}

In the previous sub-section, we introduced the key idea of the BPCC scheme. 
In the following study, we focus on  optimizing the performance of BPCC. Specifically, we  consider minimizing the task completion time. 
This is achieved by allocating proper computation load (i.e., $\ell_i$) to each worker node. 

We now define $T$ as the amount of time to complete a computation task. Given the number of batches for each worker node $\boldsymbol{p} = (p_1,p_2,\ldots,p_N)$, where $p_i \in \mathbb{Z}^+$, $\forall i\in[N]$, the optimization can be formulated as follows:
\begin{equation}
\label{eq:main_problem}
\begin{aligned} 
\mathcal{P}_{\mathrm{main}}: 
& \underset{\boldsymbol{\ell}} {\text { minimize }} 
& ~
&\mathbb{E}\left[T\right] \\
& \text { subject to } 
& 
& \ell_i\in  \mathbb{Z}^+, \forall i \in [N] \\
& 
& 
& \ell_i \geq p_i, \forall i \in [N]
\end{aligned}
\end{equation} 
where $\boldsymbol{\ell}=(\ell_1,\ell_2,\ldots,\ell_N)$. 


To facilitate further analysis, we assume that the computation task scales with $N$, i.e., $r = \Theta(N)$. Next, we assume that the computing nodes are fixed with time-invariant computation capabilities, and the network maintains a stable communication delay during the computing process. 

We now consider the behavior of  waiting time, which is defined as the duration from the time that the master node distributes $\boldsymbol{x}$ to the time that it receives a certain result. For BPCC, we let $T_{k,i}$ be the waiting time for the master node to receive $k$ batches from worker node $i$, $k\in \mathbb{Z}^+$. Clearly, $T_{k,i}$ can be modeled as a random variable following a certain probability distribution. Following the modeling technique used in recent studies \cite{reisizadeh19coded,lee16speeding,lee18speeding,ferdinand18hierarchical}, we consider that $T_{k,i}$ follows a shifted exponential distribution defined below:
\begin{equation}
\begin{gathered}
\operatorname{Pr}(T_{k,i} \leq t) = 
\begin{cases}
1-e^{-\mu_i(\frac{t}{kb_i}-\alpha_i)} & \mbox{if~} t \geq kb_i\alpha_i \\ 0 & \mbox{otherwise,} \end{cases}
\end{gathered} \label{eq:shift_exponential}
\end{equation}
where $\mu_i$ and $\alpha_i$ are straggling and shift parameters, respectively, and $\mu_i$ and $\alpha_i$ are \textbf{positive} constants for all $ i \in [N]$. Furthermore, we assume that $T_{k,i}$ is independent from $T_{k',j}$, $\forall j \in [N]$, $j\neq i$, $k'\in \mathbb{Z}^+$. 

Based on the above definitions and assumptions, we  see that $T$ must satisfy $\sum_{i=1}^Ns_i(T) b_i\geq r$. In the following sections, we will first discuss how to solve the optimization problem, in which we will conduct theoretical analysis to show the optimality and advantages of BPCC. We will then conduct extensive simulation and real experimental studies to validate the assumptions and to evaluate performance of the optimization algorithm.

\section{Main Results}
\label{sec:optimalload}

In this section,  we aim to solve the optimization problem $\mathcal{P}_{\mathrm{main}}$. In particular, we will first provide a simplified formulation, for which we then apply a two-step alternative formulation. Next, we show how to solve the alternative problems and prove the optimality of the solution. We then analyze the impact of parameter $p_i, \forall i \in [N]$ on the solution, and finally prove that this solution outperforms a recent CDC scheme without batch processing.

\subsection{Notations for Asymptotic Analysis} 

For any two given functions $f(n)$ and $g(n)$, $f(n)=\Theta(g(n))$ if and only if there exist positive constants $c_1$, $c_2$, and $n_0$ such that $0\leq c_1 g(n)\leq f(n) \leq c_2g(n)$ for all $n\geq n_0$; $f(n)=\mathcal{O}(g(n))$ if and only if there exist constants $n_0$ and $c$ such that $f(n)\leq cg(n)$ for all $n\geq n_0$; and $f(n)=o(g(n))$, if and only if $\lim_{n \rightarrow \infty}\frac{f(n)}{g(n)}=0$. 

\subsection{A Simplified Formulation}

We relax the constraint from $\ell_i \in \mathbb{Z}^+$ to $\ell_i \geq 0$, $\forall i \in [N]$ to simplify the analysis. We also remove the constraint $\ell_i \geq p_i$, $\forall i \in [N]$, by assuming that $p_i \in \mathbb{Z}^+$ is properly selected such that the optimal solution satisfies this constraint.  
Consequently, the problem in \equref{eq:main_problem} can be formulated as follows: 
\begin{equation}
\begin{aligned} 
\mathcal{P'}_{\mathrm{main}}: 
& \underset{\boldsymbol{\ell}}{\text { minimize }} 
& ~
& \mathbb{E}\left[T\right]\\
& \text { subject to } 
& ~
& \ell_i \geq 0, \forall i \in [N],
\end{aligned}\nonumber
\end{equation}
Once the above problem is solved, we can round each optimal load number $\ell_i$ up to its nearest integer using the ceiling function (denoted as $\lceil \ \rceil$). Note that the effect of this rounding step is negligible in practical applications with large load numbers, such as those considered in our simulation and experimental studies\cite{reisizadeh19coded}. In cases when the derived load number $\ell_i$ is smaller than $p_i$, we reduce the value of $p_i$ until this assumption holds. Note that we can always find such $p_i$ that satisfies the constraint, as the derived load number $\ell_i$ is always larger than or equal to 1.

\subsection{A Two-Step Alternative Formulation}

To solve the above problem, which is NP-Hard, we provide a two-step alternative formulation, 
inspired by \cite{reisizadeh19coded}. We will show later that this alternative formulation provides an asymptotically optimal solution to problem $\mathcal{P'}_{\mathrm{main}}$. 

The key idea of the two-step alternative formulation is to first maximize the amount of results accumulated at the master node by a feasible time $t$, i.e., $t \geq \max_i\{\alpha_i\ell_i\}$, and then minimize time $t$ such that sufficient amount of results are available to recover the final result. 
In particular, we let $S(t) = \sum_{i=1}^N s_i(t)b_i$ be the amount of results received by the master node by time $t$, where $b_i = \frac{\ell_i}{p_i}$ is the batch size. For a feasible time $t$, we first maximize the expected amount of results received by the master node, through solving the following problem: 
\begin{equation}
\begin{aligned} 
\mathcal{P}_{\mathrm{alt}}^{(1)}: 
& \underset{\boldsymbol{\ell}}{\text { maximize }} 
& ~
&\mathbb{E}\left[S(t)\right]\\
& \text { subject to } 
& 
& \ell_i \geq 0, \forall i \in [N]
\end{aligned}\nonumber
\end{equation}
After obtaining the solution to $\mathcal{P}_{\mathrm{alt}}^{(1)}$, denoted as 
$\boldsymbol{\ell}^{*}(t)=\left(\ell_{1}^{*}(t), \cdots, \ell_{N}^{*}(t)\right)$, 
we then minimize  the time $t$ such that there is a high probability that the results received by the master node by time $t$ are sufficient to recover the final result, by solving
\begin{equation}
\label{eq:eqal2}
\begin{aligned} \mathcal{P}_{\mathrm{alt}}^{(2)}: & \text { minimize } & &t \\ & \text { subject to } & & \operatorname{Pr}\left[S^{*}(t)<r\right]=o\left(\frac{1}{N}\right) \nonumber \end{aligned}
\end{equation}  
where $S^{*}(t)$ is the amount of results received by the master node by time $t$ for load allocation $\boldsymbol{\ell}^{*}(t)$. 

\subsection{Solution to the Two-Step Alternative Problem}

To solve the two-step alternative problem, we first consider $\mathcal{P}_{\mathrm{alt}}^{(1)}$. Note that, the expected amount of results received by the master node by time $t$ is:
\begin{align}
\label{eq:eqest_1}
\mathbb{E}[S(t)]&=\sum_{i=1}^{N}\mathbb{E}[s_{i}(t)b_i]\nonumber\\
&=  \sum_{i=1}^Nb_i\left[\sum_{k=1}^{p_i}k\operatorname{Pr}[s_i(t) = k] \right]
\end{align}
where $s_i(t)$ is an integer in range $0 \leq s_i(t) \leq p_i$, and $\operatorname{Pr}[s_i(t) = k]$ is the probability that the master node receives exactly $k$ batches from worker node $i$, 
\begin{align}
& \operatorname{Pr}[s_i(t) = k] \nonumber \\  
= &\begin{cases} 
    1 - \operatorname{Pr}(T_{1,i} \leq t),
    & k=0 \\
    \operatorname{Pr}(T_{k,i} \leq t) - \operatorname{Pr}(T_{k+1,i} \leq t),
    & 0 < k < p_i \\
    \operatorname{Pr}(T_{p_i,i} \leq t).
    & k = p_i 
\end{cases}\nonumber
\end{align}
$\mathbb{E}[S(t)]$ in \equref{eq:eqest_1} can then be computed by:
\begin{align}
\mathbb{E}[S(t)]
&=\sum_{i=1}^Nb_i\left[\sum_{k=1}^{p_i-1}k\operatorname{Pr}[s_i(t) = k] 
+ p_i \operatorname{Pr}[s_i(t) = p_i] \right]\nonumber\\
&=\sum_{i=1}^{N}\sum_{k=1}^{p_i}b_i\operatorname{Pr}(T_{k,i} \leq t)\nonumber\\
& = \sum_{i=1}^{N}\sum_{k=1}^{p_i}b_i\left(1-e^{-\mu_i(\frac{t}{kb_i}-\alpha_i)}\right)\label{eq:eqest}\nonumber\\
& =\sum_{i=1}^N \left(\ell_i-b_i\sum_{k=1}^{p_i}e^{-\mu_i(\frac{t}{kb_i}-\alpha_i)}\right)\nonumber\\
& =\sum_{i=1}^N \left(\ell_i-\frac{\ell_i}{p_i}\sum_{k=1}^{p_i}e^{-\mu_i(\frac{tp_i}{k\ell_i}-\alpha_i)}\right) 
\end{align}

The solution to $\mathcal{P}_{\text { alt}}^{(1)}$ can then be obtained by solving the following equation for each $i \in [N]$: 
\begin{equation}
\begin{gathered}
    \frac{\partial}{\partial \ell_{i}} \mathbb{E}\left[S(t)\right]=1-\left[\sum_{k=1}^{p_i}\left(\frac{1}{p_i}+\frac{\mu_it}{\ell_ik}\right)e^{-\mu_i(\frac{tp_i}{k\ell_i}-\alpha_i)}\right]=0, \nonumber
\end{gathered}
\end{equation}
which yields:
\begin{equation}
\label{eq:eq67}
   \ell_{i}^{*}(t)=\frac{t}{\lambda_{i}}
\end{equation}
$\lambda_{i}$ is the positive solution to the following equation:
\begin{equation} \label{eq:lambda}
 \sum_{k=1}^{p_i}\left(\frac{1}{p_i}+\frac{\mu_i\lambda_i}{k}\right)e^{-\mu_i(\frac{\lambda_ip_i}{k}-\alpha_i)}=1,
\end{equation}
which is a constant independent of $t$. To show that \equref{eq:lambda} has a single positive solution, we can define an auxiliary function $f_i$ for each $i$:
\begin{equation}
f_i(x) = \sum_{k=1}^{p_i} \left(\frac{1}{p_i}+\frac{\mu_i x}{k}\right)
e^{-\mu_i(\frac{x p_i}{k}-\alpha_i)}.\nonumber
\end{equation}
We can see that $f_i(x)$ decreases monotonically with the increase of $x$ when $x > 0$. We can also find that $f_i(0) = e^{\mu_i \alpha_i} > 1$ and $f_i(\infty) = 0$. Based on these statements, we know that a unique $\lambda_i$ exists and can be efficiently solved using a numerical approach. Next, we show in \lemmaref{lemma:lambda} that $\lambda_i$ has closed-form infimum and supremum. 


\begin{lemma}
\label{lemma:lambda}
Let $\lambda_i$, $i\in [N]$, be the positive solution to \equref{eq:lambda}. Its infimum is given by
\begin{equation}
    \inf \lambda_i = \lim_{p_i \rightarrow \infty} \lambda_i=  \alpha_i, 
\end{equation}
In addition, its supremum is given by
\begin{equation}
\label{eq:lambda_sup}
  \sup \lambda_i  =  \frac{W(-e^{-\alpha_i\mu_i-1})+1}{-\mu_i},  
\end{equation}
which is attained when $p_i = 1$ and $W(\cdot)$ is the Lambert W function\cite{corless1996lambertw}. 
\end{lemma}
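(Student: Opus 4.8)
The plan is to recast the defining equation \equref{eq:lambda} so that its left-hand side becomes the \emph{average} of a single fixed profile, whereupon the dependence on $p_i$ is exactly that of a right-endpoint Riemann sum on the number of subintervals. Writing $h(z)=(1+z)e^{-z}$ and factoring out $e^{\mu_i\alpha_i}$, each summand in \equref{eq:lambda} equals $\frac{e^{\mu_i\alpha_i}}{p_i}\,h(\mu_i\lambda_i p_i/k)$, so \equref{eq:lambda} is equivalent to
\begin{equation}
\frac{1}{p_i}\sum_{k=1}^{p_i} h\!\left(\frac{\mu_i\lambda_i p_i}{k}\right)=e^{-\mu_i\alpha_i}.\nonumber
\end{equation}
Setting $\phi_i(\tau)=h(\mu_i\lambda_i/\tau)$ for $\tau\in(0,1]$ and $\phi_i(0):=0$, the left-hand side is the right Riemann sum $\frac{1}{p_i}\sum_{k=1}^{p_i}\phi_i(k/p_i)$. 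Since $h$ is strictly decreasing on $(0,\infty)$, the function $\phi_i$ is continuous and strictly increasing on $[0,1]$ with $\phi_i(1)=h(\mu_i\lambda_i)$; this single structural fact drives everything.

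From this I would extract two comparisons and apply each \emph{at the root} $\lambda_i=\lambda_i(p_i)$. First, because $\phi_i$ is increasing, every node value $\phi_i(k/p_i)$ is at most $\phi_i(1)$, so the average is $\le h(\mu_i\lambda_i)$, strictly for $p_i\ge2$; note $h(\mu_i\lambda_i)$ is precisely the $p_i=1$ left-hand side. Second, a right Riemann sum of an increasing function strictly overestimates its integral, and a substitution $s=1/\tau$ gives $\int_0^1\phi_i(\tau)\,d\tau=\int_1^\infty(s^{-2}+\mu_i\lambda_i s^{-1})e^{-\mu_i\lambda_i s}\,ds=e^{-\mu_i\lambda_i}$, since the integrand is the derivative of $-s^{-1}e^{-\mu_i\lambda_i s}$. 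Hence at the root $e^{-\mu_i\lambda_i}<e^{-\mu_i\alpha_i}\le h(\mu_i\lambda_i)$.

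I would then read off both bounds. For the supremum, the upper comparison gives $h(\mu_i\lambda_i(p_i))\ge e^{-\mu_i\alpha_i}$, i.e. the $p_i=1$ map $f_i(x)=e^{\mu_i\alpha_i}h(\mu_i x)$ evaluated at $\lambda_i(p_i)$ is at least its value $1$ at $\lambda_i(1)$; since $f_i$ is decreasing (established in the excerpt), $\lambda_i(p_i)\le\lambda_i(1)$, so the supremum is attained at $p_i=1$. Solving the $p_i=1$ equation $(1+\mu_i\lambda_i)e^{-\mu_i(\lambda_i-\alpha_i)}=1$ via $u=1+\mu_i\lambda_i$ reaches $(-u)e^{-u}=-e^{-\mu_i\alpha_i-1}$, whose root with $u>0$ lies on the $W_{-1}$ branch, $-u=W(-e^{-\mu_i\alpha_i-1})$, yielding \equref{eq:lambda_sup}. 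For the infimum, the lower comparison gives $e^{-\mu_i\lambda_i(p_i)}<e^{-\mu_i\alpha_i}$, hence $\lambda_i(p_i)>\alpha_i$ for every finite $p_i$. To obtain the limit I would invoke the monotone Riemann-sum error bound $0<\frac{1}{p_i}\sum_k\phi_i(k/p_i)-e^{-\mu_i\lambda_i}\le \big(\phi_i(1)-\phi_i(0)\big)/p_i=h(\mu_i\lambda_i)/p_i$; at the root this forces $e^{-\mu_i\alpha_i}\le e^{-\mu_i\lambda_i(p_i)}+h(\mu_i\lambda_i(p_i))/p_i$, and since the two comparisons already trap $\lambda_i(p_i)\in(\alpha_i,\lambda_i(1)]$, the term $h(\mu_i\lambda_i(p_i))/p_i$ is bounded and vanishes, squeezing $\lambda_i(p_i)\to\alpha_i$. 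Thus $\inf\lambda_i=\alpha_i$, approached but not attained.

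The substantive content is the reformulation into the $h$-average form and the realization that both extrema follow from an integral-versus-Riemann-sum comparison and an average-versus-maximum comparison evaluated at the root, which neatly sidesteps proving monotonicity of $\lambda_i$ in $p_i$ directly. I expect the main obstacle to be bookkeeping rather than conceptual: choosing the correct ($W_{-1}$) branch so that \equref{eq:lambda_sup} is positive, and confirming the limit passage, where the boundedness of $\lambda_i(p_i)$ that kills the error term comes for free from the two comparisons. The integral evaluation and the $u$-substitution are routine once the antiderivative $-s^{-1}e^{-\mu_i\lambda_i s}$ is spotted.
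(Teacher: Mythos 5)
Your proposal is correct and is essentially the paper's own argument in different clothing: both read \equref{eq:lambda} as a Riemann sum of a monotone profile on $[0,1]$, trap the root between the $p_i=1$ endpoint comparison (giving the supremum, attained at $p_i=1$) and the integral $\int_0^1\bigl(1+\frac{\mu_i\lambda_i}{x}\bigr)e^{-\mu_i\lambda_i/x}\,dx = e^{-\mu_i\lambda_i}$ evaluated by the same $s=1/x$ substitution (giving $\lambda_i>\alpha_i$), and solve the $p_i=1$ equation with the Lambert~W function to obtain \equref{eq:lambda_sup}. The differences are only in execution, and mostly to your advantage: your direct average-versus-maximum and sum-versus-integral inequalities replace the paper's Darboux-refinement (Rudin Theorem 6.4) plus proof-by-contradiction machinery, your explicit $O(1/p_i)$ monotone-sum error bound together with the a priori trap $\lambda_i(p_i)\in(\alpha_i,\lambda_i(1)]$ rigorously justifies the limit passage $\lambda_i(p_i)\to\alpha_i$ (which the paper performs by informally exchanging the limit with the root equation in \equref{eq:intqinfty}), and your identification of the $W_{-1}$ branch is more precise than the paper's unqualified $W(\cdot)$.
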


From \lemmaref{lemma:lambda}, we can derive that the condition $t \geq \max_i\{\alpha_i\ell_i(t)\}$ holds, as $t = \ell_{i}^{*}(t)\lambda_{i} \geq \ell_{i}^{*}(t) \alpha_{i}$ for each work node $i$. 

Next, we solve $\mathcal{P}_{\mathrm{alt}}^{(2)}$. Since this problem is also NP-hard, we here provide an approximated solution. In particular, we approximate its optimal solution, denoted as $t^*$, with value $\tau^*$, such that the expected amount of results accumulated at the master node by time $\tau^*$ equals to the amount of results required for recovering the final result, i.e., $\mathbb{E}[S^*(\tau^*)] = r$. To find the value of $\tau^*$, we let 
\begin{equation}
  \mathbb{E}[S^{*}(t)]=r.  \label{eq:alter2}
\end{equation}
Then, using the load allocation $ \ell_{i}^{*}(t)$ in \equref{eq:eq67}, the expected amount of results received by the master node is:
\begin{equation}
\begin{aligned}
\mathbb{E}[S^*(t)]
& =\sum_{i=1}^N \left(\ell^*_i(t)-\frac{\ell^*_i(t)}{p_i}\sum_{k=1}^{p_i}e^{-\mu_i(\frac{tp_i}{k\ell^*_i(t)}-\alpha_i)}\right)\label{eq:eqesstar}\\
& = \sum_{i=1}^{N}\frac{t}{\lambda_{i}}\left(1-\frac{1}{p_i}\sum_{k=1}^{p_i}e^{-\mu_i(\frac{\lambda_ip_i}{k}-\alpha_i)}\right).
\end{aligned}
\end{equation}

We can then find the solution to \equref{eq:alter2} as follows:
\begin{equation}
\label{eq:tau}
\begin{gathered}
\tau^{*}=\frac{r}{\beta}
\end{gathered}
\end{equation}
where 
\begin{equation} \label{eq:beta}
\displaystyle \beta = \sum_{i=1}^{N}\frac{1}{\lambda_{i}}\left(1-\frac{1}{p_i}\sum_{k=1}^{p_i}e^{-\mu_i(\frac{\lambda_ip_i}{k}-\alpha_i)}\right),
\end{equation}
which is also a constant.

Combining the solutions to $\mathcal{P}_{\mathrm{alt}}^{(1)}$ and $\mathcal{P}_{\mathrm{alt}}^{(2)}$, we can then derive the load allocation: 
\begin{equation} \label{eq:prob1_solution}
    \ell_{i}^{*}\left(\tau^{*}\right)=\frac{r}{\beta\lambda_{i}} 
\end{equation}
The procedures of BPCC are summarized in \algoref{alg:bpcc1}.
\begin{algorithm}[t] 
    \DontPrintSemicolon
    \KwInput{$r, N, \boldsymbol{p}=\{p_1,\ldots,p_N\}, \boldsymbol{\mu}=\{\mu_1,\ldots,\mu_N\}, \boldsymbol{\alpha}=\{\alpha_1,\ldots,\alpha_N\}$}
    \KwOutput{$\boldsymbol{\ell}$}
    \For{$i=1:N$}
    {
        Calculate $\lambda_i$ by solving \equref{eq:lambda}
    }
    Calculate $\beta$ by using \equref{eq:beta}\\
    \For{$i=1:N$}
    {
        Calculate $\ell_{i}^{*}$ by using \equref{eq:prob1_solution}
    }
    \textbf{Return} $\boldsymbol{\ell}=\{ \lfloor \ell_{1}^{*} \rceil, \lfloor \ell_{2}^{*} \rceil, \cdots, \lfloor \ell_{N}^{*} \rceil \}$
\caption{BPCC} \label{alg:bpcc1}
\end{algorithm}

\subsection{Optimality Analysis}

In this sub-section, we conduct theoretical analysis to investigate the performance of BPCC. 
Specifically, we first show in Lemma \ref{lemma:lm1}  the optimality of the approximated solution $\tau^*$ to $\mathcal{P}_{\text { alt}}^{(2)}$. We then show in   \theoref{thm:1} that the solution provided by BPCC 
is asymptotically optimal. Finally, we show in Theorem \ref{thm:tau} the accuracy of $\tau^*$ in approximating the expected execution time of BPCC.

\begin{lemma}
\label{lemma:lm1}
Let $t^*$ be the optimal solution to $\mathcal{P}_{\text { alt}}^{(2)}$, and $\tau^*$ be the approximated solution given by \equref{eq:tau}. If the batch processing time follows the shifted exponential distribution in \equref{eq:shift_exponential} and $r = \Theta(N)$, then 
\begin{equation}
\label{eq:lemma1}
    \tau^{*}-o(1) < t^{*} \leq \tau^{*}+o(1).
\end{equation}
\end{lemma}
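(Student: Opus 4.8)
The plan is to exploit the fact that $S^*(t)$ is a sum of $N$ independent, bounded random variables, so it concentrates sharply around its mean, and that $\tau^*$ is by construction the time at which this mean equals $r$. First I would record the exact form of the mean: substituting the optimal load $\ell_i^*(t)=t/\lambda_i$ from \equref{eq:eq67} into \equref{eq:eqesstar} gives $\mathbb{E}[S^*(t)] = \beta t$, with $\beta$ as in \equref{eq:beta}, so $\tau^* = r/\beta$ is precisely the time where $\mathbb{E}[S^*(t)] = r$ and the gap $r - \mathbb{E}[S^*(t)] = \beta(\tau^*-t)$ is linear in $t$. Under the scaling $r = \Theta(N)$, together with the uniform bounds $\alpha_i \le \lambda_i \le \sup\lambda_i$ from \lemmaref{lemma:lambda} (and the standing assumption that $\mu_i,\alpha_i$ range over a fixed bounded set), one gets $\beta = \Theta(N)$, $\tau^* = \Theta(1)$, and each load $\ell_i^*(\tau^*) = r/(\beta\lambda_i) = \Theta(1)$ uniformly in $i$.

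Next I would set up the concentration estimate. Writing $S^*(t) = \sum_{i=1}^N X_i$ with $X_i = s_i(t)b_i \in [0,\ell_i^*(t)]$, the $X_i$ are independent across $i$ by the independence assumption on $\{T_{k,i}\}$, and each is bounded by $\ell_i^*(t) = \Theta(1)$ for $t$ near $\tau^*$. A Hoeffding-type inequality then gives $\Pr\big[\,|S^*(t) - \beta t| \ge \delta\,\big] \le 2\exp\!\left(-\frac{2\delta^2}{\sum_{i}(\ell_i^*(t))^2}\right)$, and since $\sum_{i}(\ell_i^*(t))^2 = \Theta(N)$, the right-hand side is $\exp(-\Theta(\delta^2/N))$.

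For the upper bound $t^* \le \tau^* + o(1)$, I would evaluate at $t = \tau^* + \epsilon$, where the mean exceeds $r$ by $\beta\epsilon = \Theta(N)\epsilon$; choosing $\epsilon = c\sqrt{(\log N)/N} = o(1)$ makes the tail $\exp(-\Theta(N\epsilon^2)) = \exp(-\Theta(\log N)) = o(1/N)$, so the constraint $\Pr[S^*(t) < r] = o(1/N)$ holds at $\tau^* + \epsilon$ and hence $t^* \le \tau^* + \epsilon$. For the lower bound, I would evaluate at $t = \tau^* - \epsilon$, where $\mathbb{E}[S^*(t)] = \beta t = r - \beta\epsilon$; the same inequality gives $\Pr[S^*(t) \ge r] \le \exp(-\Theta(N\epsilon^2)) = o(1/N)$, so $\Pr[S^*(t) < r] \ge 1 - o(1/N)$, which is not $o(1/N)$. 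Since $\mathbb{E}[S^*(t)]$ is increasing in $t$, the constraint also fails for every $t \le \tau^* - \epsilon$, so no feasible point lies at or below $\tau^* - \epsilon$, forcing $t^* > \tau^* - \epsilon$. Combining the two bounds with $\epsilon = o(1)$ yields \equref{eq:lemma1}.

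The main obstacle I anticipate is not the concentration step itself but justifying the uniform scaling claims that make it effective: one must argue that the per-node bounds $\ell_i^*(t)$ remain $\Theta(1)$ \emph{uniformly} in $i$, so that $\sum_i (\ell_i^*(t))^2 = \Theta(N)$ rather than being dominated by a few large terms, and this rests on the boundedness of $\lambda_i$ from \lemmaref{lemma:lambda} and on treating $(\mu_i,\alpha_i)$ as lying in a fixed compact range. A secondary delicate point is the calibration of $\epsilon$: it must simultaneously vanish (to deliver $o(1)$ accuracy) and satisfy $N\epsilon^2/\log N \to \infty$ (so the tails beat $1/N$), which the choice $\epsilon \asymp \sqrt{(\log N)/N}$ achieves.
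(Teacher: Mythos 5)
Your proposal is correct and takes essentially the same route as the paper: the paper also reduces the problem to concentration of $S^*(t)$ around its mean $\beta t$ (via McDiarmid's inequality, which for a sum of independent bounded terms $s_i(t)b_i \le \ell_i^*(t)$ is exactly your Hoeffding bound), evaluates at $t = \tau^* \mp \delta$ with the same scalings $r=\Theta(N)$, $\beta=\Theta(N)$, $\ell_i^*(\tau^*)=\Theta(1)$, and concludes infeasibility below and feasibility above $\tau^*$. The only slip is your calibration: $\epsilon \asymp \sqrt{(\log N)/N}$ gives $N\epsilon^2 = \Theta(\log N)$, hence a tail $N^{-\Theta(1)}$ that is $o(1/N)$ only if the constant is taken large enough (and it does not satisfy your own requirement $N\epsilon^2/\log N \to \infty$), whereas the paper's choice $\delta = \Theta\left((\log N)/\sqrt{N}\right)$ yields $e^{-\Theta(\log^2 N)}$ and beats $o(1/N)$ without tuning constants.
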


Based on \lemmaref{lemma:lm1}, we next  show the asymptotic optimality of BPCC in \theoref{thm:1}.

\begin{theorem}
\label{thm:1}
Consider problem $\mathcal{P'}_\mathrm{main}$ with the batch processing time following the shifted exponential distribution in \equref{eq:shift_exponential} and $r = \Theta(N)$. Let $\mathbb{E}[T_{\mathrm{BPCC}}]$ and $\mathbb{E}[T_{\mathrm{OPT}}]$ be the expected execution time of BPCC and the optimal value of $\mathcal{P'}_\mathrm{main}$, respectively. The BPCC is asymptotically optimal, i.e.,
\begin{equation}
    \lim _{N \rightarrow \infty} \mathbb{E}\left[T_{\mathrm{BPCC}}\right]=\lim _{N \rightarrow \infty} \mathbb{E}\left[T_{\mathrm{OPT}}\right]
    \label{eq:thm2}
\end{equation}
\end{theorem}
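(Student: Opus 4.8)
The plan is to sandwich both $\mathbb{E}[T_{\mathrm{OPT}}]$ and $\mathbb{E}[T_{\mathrm{BPCC}}]$ between $\tau^{*}-o(1)$ and $\tau^{*}+o(1)$, so that the two quantities are forced to share the common limit $\lim_{N\to\infty}\tau^{*}$. The workhorse is the tail representation of the expected completion time: since $T>t$ is equivalent to $S(t)<r$, for any load allocation with received-results process $S(\cdot)$,
\begin{equation}
\mathbb{E}[T]=\int_{0}^{\infty}\Pr[S(t)<r]\,dt. \nonumber
\end{equation}
Because $S(t)=\sum_{i=1}^{N}s_i(t)b_i$ is a sum of $N$ independent variables with $s_i(t)b_i\in[0,\ell_i]$, and because $r=\Theta(N)$ together with \lemmaref{lemma:lambda} forces the relevant loads to satisfy $\ell_i=\Theta(1)$ (hence $\sum_i\ell_i^{2}=\Theta(N)$), a Hoeffding-type bound yields $\Pr[\,|S(t)-\mathbb{E}[S(t)]|\ge\delta\,]\le 2\exp(-c\delta^{2}/N)$. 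This concentration drives both the upper and the lower bound.

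For the upper bound I would analyze BPCC's committed allocation $\ell_i^{*}=r/(\beta\lambda_i)$ from \equref{eq:prob1_solution}, whose mean $\mathbb{E}[S^{*}(t)]$ given by \equref{eq:eqest} is strictly increasing with $\mathbb{E}[S^{*}(\tau^{*})]=r$ by construction (\equref{eq:tau}). Differentiating \equref{eq:eqest} shows the slope at $\tau^{*}$ is a sum of $N$ terms each $\Theta(1)$, hence $\Theta(N)$, so for $t>\tau^{*}$ the surplus obeys $\mathbb{E}[S^{*}(t)]-r=\Theta(N)(t-\tau^{*})$. Splitting the integral at $\tau^{*}$, the part over $[0,\tau^{*}]$ is at most $\tau^{*}$, while concentration gives $\Pr[S^{*}(t)<r]\le\exp(-c'N(t-\tau^{*})^{2})$ on $(\tau^{*},\infty)$, whose integral is $\Theta(N^{-1/2})=o(1)$. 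Thus $\mathbb{E}[T_{\mathrm{BPCC}}]\le\tau^{*}+o(1)=t^{*}+o(1)$ by \lemmaref{lemma:lm1}.

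For the lower bound the decisive structural fact is that the $\mathcal{P}_{\mathrm{alt}}^{(1)}$ maximizer $\ell_i^{*}(t)=t/\lambda_i$ attains the largest possible expected return, and substituting it into \equref{eq:eqesstar} makes this maximum exactly linear, $\max_{\boldsymbol{\ell}}\mathbb{E}[S_{\boldsymbol{\ell}}(t)]=\beta t$. Consequently \emph{every} allocation, including the optimal one with process $S_{\mathrm{OPT}}(\cdot)$, satisfies $\mathbb{E}[S_{\mathrm{OPT}}(t)]\le\beta t$, so for $t<\tau^{*}$ the mean lies below $r$ by $r-\beta t=\beta(\tau^{*}-t)=\Theta(N)(\tau^{*}-t)$. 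The same concentration bound then yields $\Pr[S_{\mathrm{OPT}}(t)<r]\ge 1-\exp(-c'N(\tau^{*}-t)^{2})$, and integrating over $[0,\tau^{*}]$ gives $\mathbb{E}[T_{\mathrm{OPT}}]\ge\tau^{*}-o(1)$. Chaining this with the trivial optimality inequality $\mathbb{E}[T_{\mathrm{OPT}}]\le\mathbb{E}[T_{\mathrm{BPCC}}]$ produces $\tau^{*}-o(1)\le\mathbb{E}[T_{\mathrm{OPT}}]\le\mathbb{E}[T_{\mathrm{BPCC}}]\le\tau^{*}+o(1)$; since $\tau^{*}=r/\beta=\Theta(1)$, the two $o(1)$ gaps close and both sides of \equref{eq:thm2} converge to $\lim_{N\to\infty}\tau^{*}$.

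The main obstacle is making the two concentration steps rigorous under heterogeneity. Two points need care: first, justifying $\ell_i=\Theta(1)$, hence $\sum_i\ell_i^{2}=\Theta(N)$, for the allocations that matter — for the optimal scheme this follows because $\mathbb{E}[T_{\mathrm{OPT}}]\le\mathbb{E}[T_{\mathrm{BPCC}}]=\Theta(1)$ rules out any node carrying a load that grows with $N$; second, ensuring the deviation probabilities decay fast enough in $t$ that the tail integrals are genuinely $o(1)$ rather than merely small pointwise, which is exactly what the Gaussian-type decay $\exp(-c'Nu^{2})$ supplies. I expect the delicate part to be converting the shifted-exponential batch law \equref{eq:shift_exponential} into sub-Gaussian (or Bernstein) constants that are uniform in $i$ and over the admissible range of $t$; here the bounds $\alpha_i\le\lambda_i\le\sup\lambda_i$ from \lemmaref{lemma:lambda} provide the uniform control on the per-node means and batch sizes that the concentration inequality requires.
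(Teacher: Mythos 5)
Your overall sandwich $\tau^*-o(1)\le\mathbb{E}[T_{\mathrm{OPT}}]\le\mathbb{E}[T_{\mathrm{BPCC}}]\le\tau^*+o(1)$ is exactly the paper's architecture (the paper routes it through $t^*$ and \lemmaref{lemma:lm1}, but the concentration engine --- McDiarmid/Hoeffding on $S(t)=\sum_i s_i(t)b_i$ with $c_i=\ell_i$ --- and the structural fact $\mathbb{E}[S_{\mathrm{OPT}}(t)]\le\mathbb{E}[S^*(t)]=\beta t$ are the same). However, your upper bound has a genuine gap: the claimed decay $\operatorname{Pr}[S^*(t)<r]\le\exp(-c'N(t-\tau^*)^2)$ cannot hold for all $t>\tau^*$, because the Hoeffding exponent is driven by the mean surplus $\mathbb{E}[S^*(t)]-r$, which saturates at $q-r=\Theta(N)$ where $q=\sum_i\ell_i^*$ (indeed $\mathbb{E}[S^*(t)]$ is increasing and bounded by the total assigned load, so its slope is $\Theta(N)$ only near $\tau^*$). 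The best the concentration bound gives uniformly in $t$ is $e^{-\Theta(N)}$, a constant in $t$, and $\int_{\tau^*}^{\infty}e^{-\Theta(N)}\,dt$ diverges --- so your tail integral is not ``genuinely $o(1)$'' as written; the Gaussian-type decay you invoke simply is not available in the far tail. You need a separate far-tail argument, e.g.\ $\operatorname{Pr}[S^*(t)<r]\le\operatorname{Pr}[\exists i:\,s_i(t)<p_i]\le\sum_i e^{-\mu_i(t/\ell_i^*-\alpha_i)}$ (valid since $q>r$, so all nodes finishing implies completion), which decays exponentially in $t$ for fixed $N$ and makes $\int_{T_0}^{\infty}$ negligible for $T_0=\Theta(\log N)$. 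This is precisely the role of the paper's event $\mathcal{E}_1=\{T_{\max}>\Theta(N)\}$ and its stochastic upper bound on $T_{\max}$ by the maximum of $N$ i.i.d.\ shifted exponentials; the paper's conditional decomposition of $\mathbb{E}[T_{\mathrm{BPCC}}]$ exists exactly to plug the hole your layer-cake integral leaves open.

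A secondary weak point is your justification that $\sum_i\ell_{\mathrm{OPT},i}^2=\Theta(N)$: the inference ``$\mathbb{E}[T_{\mathrm{OPT}}]=\Theta(1)$ rules out loads growing with $N$'' is invalid, since in the relaxed problem $\mathcal{P'}_{\mathrm{main}}$ an allocation may park an $\omega(1)$ load on one node and still finish in $\Theta(1)$ time using the remaining nodes, so nothing about the optimal value forces per-node loads to be bounded. (The paper asserts $\ell_{\mathrm{OPT},i}=\Theta(1)$ without proof, so you are in good company, but your stated reason does not close it.) A clean repair is available from the shift in \equref{eq:shift_exponential}: deterministically $T_{k,i}\ge kb_i\alpha_i$, hence $s_i(t)b_i\le t/\alpha_i$, so for the relevant range $t\le\tau^*+o(1)=\Theta(1)$ each node's contribution has range $c_i\le\min(\ell_i,\,t/\alpha_i)=\mathcal{O}(1)$ uniformly over \emph{all} allocations, which restores $\sum_i c_i^2=\mathcal{O}(N)$ in your lower-bound concentration step without any assumption on $\boldsymbol{\ell}_{\mathrm{OPT}}$. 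With these two repairs, your route --- layer-cake representation $\mathbb{E}[T]=\int_0^\infty\operatorname{Pr}[S(t)<r]\,dt$, bypassing $t^*$, $\mathcal{P}_{\mathrm{alt}}^{(2)}$ and $\tau^*_{\mathrm{OPT}}$ entirely --- is correct and somewhat more streamlined than the paper's two-step proof, at the cost of having to control the entire tail of the integral rather than a single high-probability time point.
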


Theorem \ref{thm:1} and Lemma \ref{lemma:lm1} further lead to the following theorem.
\begin{theorem}
\label{thm:tau}
Let $\tau^*$ be the approximated solution given by \equref{eq:tau} and $\mathbb{E}[T_{\mathrm{BPCC}}]$ be the expected execution time of BPCC. If the batch processing time follows the shifted exponential distribution in \equref{eq:shift_exponential} and $r = \Theta(N)$, then 
\begin{equation}
\tau^{*} =\lim _{N \rightarrow \infty} \mathbb{E}\left[T_{\mathrm{BPCC}}\right]
\end{equation}
\end{theorem}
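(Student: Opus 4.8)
The plan is to read the identity $\tau^{*}=\lim_{N\rightarrow\infty}\mathbb{E}[T_{\mathrm{BPCC}}]$ as the asymptotic statement $\mathbb{E}[T_{\mathrm{BPCC}}]-\tau^{*}=o(1)$ (both quantities depend on $N$ through $r=\Theta(N)$, and since $\beta=\Theta(N)$ we have $\tau^{*}=r/\beta=\Theta(1)$, so an $o(1)$ gap is genuinely negligible). I would prove it by squeezing $\mathbb{E}[T_{\mathrm{BPCC}}]$ between $\tau^{*}-o(1)$ and $\tau^{*}+o(1)$. The two ingredients are already available: \lemmaref{lemma:lm1} ties the optimizer $t^{*}$ of $\mathcal{P}_{\mathrm{alt}}^{(2)}$ to $\tau^{*}$ via $\tau^{*}-o(1)<t^{*}\leq\tau^{*}+o(1)$, and \theoref{thm:1} ties $\mathbb{E}[T_{\mathrm{BPCC}}]$ to $\mathbb{E}[T_{\mathrm{OPT}}]$. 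The task is therefore to route $\mathbb{E}[T_{\mathrm{BPCC}}]$ first through $t^{*}$ and then, by \lemmaref{lemma:lm1}, through $\tau^{*}$.

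For the upper bound I would show $\mathbb{E}[T_{\mathrm{BPCC}}]\leq t^{*}+o(1)$. By the definition of $t^{*}$ in $\mathcal{P}_{\mathrm{alt}}^{(2)}$, under the BPCC load allocation $\boldsymbol{\ell}^{*}(\tau^{*})$ the accumulated results reach $r$ by time $t^{*}$ except on an event of probability $o(1/N)$. Writing $\mathbb{E}[T_{\mathrm{BPCC}}]=\int_{0}^{\infty}\operatorname{Pr}[T_{\mathrm{BPCC}}>t]\,dt\leq t^{*}+\int_{t^{*}}^{\infty}\operatorname{Pr}[T_{\mathrm{BPCC}}>t]\,dt$, it remains to control the tail integral. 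I would bound it using the shifted-exponential tail in \equref{eq:shift_exponential}: past $t^{*}$ the residual probability of not yet having $r$ rows decays exponentially in $t$, so the tail contributes at most the $o(1/N)$ miss probability multiplied by a slowly growing expected-overrun factor, which collapses to $o(1)$ once $r=\Theta(N)$ is used. Combining with $t^{*}\leq\tau^{*}+o(1)$ then gives $\mathbb{E}[T_{\mathrm{BPCC}}]\leq\tau^{*}+o(1)$.

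For the lower bound I would combine $\mathbb{E}[T_{\mathrm{BPCC}}]\geq\mathbb{E}[T_{\mathrm{OPT}}]$ (the allocation $\boldsymbol{\ell}^{*}(\tau^{*})$ is merely feasible for $\mathcal{P'}_{\mathrm{main}}$, hence cannot beat the optimum) with the fact, established en route to \theoref{thm:1}, that the relaxed two-step optimizer $t^{*}$ lower-bounds the true optimum asymptotically, i.e. $\mathbb{E}[T_{\mathrm{OPT}}]\geq t^{*}-o(1)$. Chaining the two bounds yields $t^{*}-o(1)\leq\mathbb{E}[T_{\mathrm{BPCC}}]\leq t^{*}+o(1)$, so $\mathbb{E}[T_{\mathrm{BPCC}}]=t^{*}+o(1)$. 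Substituting \lemmaref{lemma:lm1} to replace $t^{*}$ by $\tau^{*}$ gives $\mathbb{E}[T_{\mathrm{BPCC}}]=\tau^{*}+o(1)$, and letting $N\rightarrow\infty$ produces the claimed identity.

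The main obstacle I expect is the tail-integral estimate in the upper bound: one must verify that the rare event of falling short of $r$ rows by time $t^{*}$ contributes only $o(1)$ to the expectation, even though the overrun time is itself random and the number of nodes grows. This is where the exponential decay in \equref{eq:shift_exponential} and the scaling $r=\Theta(N)$ must be balanced carefully against the $o(1/N)$ miss probability. The remaining steps are essentially bookkeeping that recombines \theoref{thm:1} and \lemmaref{lemma:lm1}.
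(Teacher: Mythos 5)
Your proposal is correct and is essentially the paper's own argument: the paper proves \theoref{thm:tau} precisely by chaining the sandwich $t^{*}-o(1)\leq\mathbb{E}[T_{\mathrm{OPT}}]\leq\mathbb{E}[T_{\mathrm{BPCC}}]\leq t^{*}+o(1)$ established in the proof of \theoref{thm:1} with \lemmaref{lemma:lm1} to obtain $\tau^{*}-o(1)\leq\mathbb{E}[T_{\mathrm{BPCC}}]\leq\tau^{*}+o(1)$, exactly your routing through $t^{*}$. Your tail-integral control of $\int_{t^{*}}^{\infty}\operatorname{Pr}[T_{\mathrm{BPCC}}>t]\,dt$ is just an equivalent rephrasing of the paper's event decomposition in Step 2 of that proof (conditioning on $T_{\max}>\Theta(N)$ and $T_{\mathrm{BPCC}}>t^{*}$, with the $\Theta(N)\cdot o(1/N)=o(1)$ bookkeeping), so no genuinely different route is involved.
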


\subsection{Analysis of the Impact of Parameter $\boldsymbol{p}$}
\label{sec:impact_of_p}
In the BPCC scheme shown in Algorithm \ref{alg:bpcc1}, we note that $\boldsymbol{p}$ is the only parameter that can be tuned, while the other parameters, including $r$, $N$, $\boldsymbol{u}$ and $\boldsymbol{\alpha}$, are determined by the specific computation task and  properties of the distributed computing system. In this sub-section, we analyze the impact of this important parameter $\boldsymbol{p}$ on the performance of BPCC in Theorem \ref{thm:convergence}. We then show in Theorem \ref{thm:4} that the approximated execution time of BPCC, i.e., $\tau^*$ given by \equref{eq:tau}, has closed-form infimum and supremum.




\begin{theorem}
\label{thm:convergence}
Consider problem $\mathcal{P'}_\mathrm{main}$ with the batch processing time following the shifted exponential distribution in \equref{eq:shift_exponential} and $r = \Theta(N)$. Let $\tau^*$ be the approximated execution time of BPCC given by \equref{eq:tau}. Then the increase of any $p_i$, $i\in [N]$, will cause $\tau^*$ to decrease. 
\end{theorem}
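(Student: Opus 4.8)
The plan is to collapse the whole statement to a single scalar monotonicity fact. By \equref{eq:tau}, $\tau^{*}=r/\beta$ with $r=\Theta(N)>0$ fixed and $\beta>0$, so showing that increasing $p_i$ decreases $\tau^{*}$ is equivalent to showing that it increases $\beta$. By \equref{eq:beta}, $\beta=\sum_{j=1}^{N}\beta_j$ with $\beta_j=\frac{1}{\lambda_j}\bigl(1-\frac{1}{p_j}\sum_{k=1}^{p_j}e^{-\mu_j(\lambda_j p_j/k-\alpha_j)}\bigr)$, and $\lambda_j$ is determined by \equref{eq:lambda} from only $p_j,\mu_j,\alpha_j$. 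Hence changing $p_i$ perturbs only the single summand $\beta_i$, and it suffices to prove that the per-worker rate $\beta_i$ is strictly increasing in the integer $p_i$.

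Next I would recast $\beta_i$ as the value of a one-dimensional maximization, so the dependence on $p_i$ becomes transparent. From \equref{eq:eqesstar} we have $\mathbb{E}[S^{*}(t)]=\beta t$, and $\beta_i t$ is precisely $\max_{\ell_i\ge0}g_i(\ell_i,t)$, where $g_i(\ell_i,t)=\ell_i-\frac{\ell_i}{p_i}\sum_{k=1}^{p_i}e^{-\mu_i(tp_i/(k\ell_i)-\alpha_i)}$ is the $i$-th term of \equref{eq:eqesstar}, maximized at $\ell_i^{*}=t/\lambda_i$. Since $g_i$ is positively homogeneous of degree one in $(\ell_i,t)$, substituting $u=\ell_i/t$ yields the $t$-free expression $\beta_i=\max_{u\ge0}u\bigl(1-e^{\mu_i\alpha_i}A_{p_i}(\mu_i/u)\bigr)$, where I write $A_p(c):=\frac{1}{p}\sum_{k=1}^{p}e^{-cp/k}$ and the maximizer is interior, $u^{*}=1/\lambda_i>0$. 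This form admits a one-line envelope/domination argument: if $A_p(c)$ is strictly decreasing in $p$ for every fixed $c>0$, then $\beta_i(p_i+1)\ge u^{*}\bigl(1-e^{\mu_i\alpha_i}A_{p_i+1}(\mu_i/u^{*})\bigr)>u^{*}\bigl(1-e^{\mu_i\alpha_i}A_{p_i}(\mu_i/u^{*})\bigr)=\beta_i(p_i)$, which is exactly what is required.

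Everything therefore hinges on the scalar lemma that $A_p(c)=\frac{1}{p}\sum_{k=1}^{p}e^{-cp/k}$ is strictly decreasing in $p$ for each $c>0$, and this is the step I expect to be the main obstacle. It is natural to read $A_p(c)$ as the right-endpoint Riemann sum $\frac{1}{p}\sum_{k=1}^{p}\psi(k/p)$ of the increasing function $\psi(s)=e^{-c/s}$ on $(0,1]$. Here one must be careful: the superficially appealing claim that the right Riemann sum of an \emph{arbitrary} increasing (even log-concave) function decreases in $p$ is false, since the grids for $p$ and $p+1$ do not nest and a sharply rising $\psi$ concentrated near an unfavorable point gives $R_p<R_{p+1}$. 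So the proof must genuinely use the specific form of $\psi$. My plan is to exploit the exact identity $R_p=\psi(1)-\frac{1}{p}\int_0^1\psi'(s)\lfloor ps\rfloor\,ds$, giving $R_p-R_{p+1}=\int_0^1\psi'(s)\bigl(\tfrac{\lfloor(p+1)s\rfloor}{p+1}-\tfrac{\lfloor ps\rfloor}{p}\bigr)\,ds$, and to show this integral is positive.

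The integrand is not pointwise signed, so the argument has to be a net-cancellation estimate rather than a pointwise comparison. The favorable structure is that the oscillating factor $\tfrac{\lfloor(p+1)s\rfloor}{p+1}-\tfrac{\lfloor ps\rfloor}{p}$ has a strictly positive mean, $\int_0^1(\cdots)\,ds=\frac{1}{2p(p+1)}>0$, while the weight $\psi'(s)=\frac{c}{s^{2}}e^{-c/s}$ is smooth and unimodal (increasing for $s<c/2$, decreasing for $s>c/2$, and vanishing faster than any power as $s\to0$). I would partition $(0,1]$ into the blocks on which both floor functions are constant and apply summation by parts against the monotone pieces of $\psi'$ to show the positive contributions dominate; the monotone decay of $\psi'$ for $s>c/2$ and its rapid flattening near $0$ are what should make the cancellation work uniformly. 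Proving this inequality cleanly for \emph{all} $p\ge1$ and $c>0$—rather than only asymptotically, where it follows from the Riemann-sum convergence $A_p(c)\to\int_0^1 e^{-c/s}\,ds$—is the crux, after which the three preceding reductions deliver the theorem.
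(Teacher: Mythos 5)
Your reductions are sound and in fact track the paper's own first moves: $\tau^*=r/\beta$ with only the summand $\beta_i$ depending on $p_i$, and your variational reformulation $\beta_i=\max_{u\ge 0}u\bigl(1-e^{\mu_i\alpha_i}A_{p_i}(\mu_i/u)\bigr)$ is a clean envelope argument that plays exactly the role of the paper's computation $\frac{\partial \beta}{\partial \lambda_i}=0$, which the paper obtains directly from the stationarity condition \equref{eq:lambda} so that the implicit dependence of $\lambda_i$ on $p_i$ can be ignored. You are also right, and it is a genuinely useful observation, that monotonicity in $p$ of right Riemann sums of an arbitrary increasing (even log-concave) integrand is false, so the specific form of $\psi(s)=e^{-c/s}$ must enter somewhere.

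The problem is that your proof stops exactly where the theorem's content lies: everything is reduced to the scalar claim that $A_p(c)=\frac{1}{p}\sum_{k=1}^p e^{-cp/k}$ is strictly decreasing in $p$ for every fixed $c>0$, and that claim is left unproven. Your floor-function identity and the positive mean $\frac{1}{2p(p+1)}$ are correct as far as they go, but the proposed net-cancellation argument against the unimodal weight $\psi'(s)=\frac{c}{s^2}e^{-c/s}$ is only a plan, with no estimate showing the positive blocks dominate uniformly in $p$ and $c$; you flag it yourself as the crux, so this is a genuine gap rather than a routine detail. The paper closes this step by an entirely different and far more elementary device, its \lemmaref{lemma:concave}: for non-decreasing concave $g\ge 0$, $\frac{1}{p}\sum_{k=1}^{p}g(k/p)\ \ge\ \frac{1}{p+1}\sum_{k=1}^{p+1}g\bigl(k/(p+1)\bigr)$, proved by writing each coarse node as a convex combination of two adjacent fine nodes, $\frac{k}{p}=\bigl(1-\frac{k}{p}\bigr)\frac{k}{p+1}+\frac{k}{p}\cdot\frac{k+1}{p+1}$, applying concavity, using monotonicity to lower the weight from $\frac{k}{p}$ to $\frac{k}{p+1}$, and summing; applied to $g(x)=e^{-\mu_i\lambda_i/x}$ this yields the needed monotonicity of the average in a few lines, with no oscillatory-integral analysis. (One caveat if you adopt that route: $e^{-c/x}$ has an inflection at $x=c/2$ and is concave only for $x\ge c/2$, so the concavity hypothesis needs checking on the relevant grid --- a point the paper itself asserts rather breezily, and one consistent with your instinct that the inequality is delicate.) In short: your architecture parallels the paper's and steps one and two are fine, but without an actual proof of the $A_p$ monotonicity --- e.g., via the paper's convex-combination lemma --- the theorem does not follow.
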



\begin{theorem}
\label{thm:4}
Consider problem $\mathcal{P'}_\mathrm{main}$ with the batch processing time following the shifted exponential distribution in \equref{eq:shift_exponential} and $r = \Theta(N)$. Let $\tau^*$ be the approximated execution time of BPCC given by \equref{eq:tau}. Then
\begin{align}
\label{eq:tau_convergence}
    \inf \tau^* = & \lim_{p_i \rightarrow \infty, \forall i\in [N]} \tau^* \nonumber \\
    = & \frac{r}{\sum_{i=1}^N\frac{1}{\alpha_i}(1-e^{\mu_i\alpha_i}\int_{0}^{1}e^{-\frac{\mu_i\alpha_i}{x}}dx)}, 
\end{align}
and
\begin{align}
    \sup \tau^* = &\max \tau^* \nonumber \\ = & \sum_{i=1}^{N}\frac{1}{\sup \lambda_{i}}\left(1-e^{-\mu_i(\sup \lambda_i-\alpha_i)}\right),
\end{align}
which is attained when $p_i = 1$, $\forall i \in [N]$. Here $\sup \lambda_i$ is given by \equref{eq:lambda_sup}. 
\end{theorem}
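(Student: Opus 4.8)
The plan is to exploit the monotonicity established in \theoref{thm:convergence} to reduce the two extrema to boundary cases in the parameter $\boldsymbol{p}$, and then to evaluate $\tau^*$ explicitly at those boundaries using the closed form $\tau^*=r/\beta$ with $\beta$ given by \equref{eq:beta}. By \theoref{thm:convergence}, $\tau^*$ strictly decreases whenever any $p_i$ increases. Since each $p_i$ ranges over $\mathbb{Z}^+$, the smallest admissible value is $p_i=1$ and there is no largest value. Consequently $\tau^*$ attains its maximum, hence its supremum, at the configuration $p_i=1$ for all $i\in[N]$, while its infimum is realized only in the limit $p_i\to\infty$ for all $i\in[N]$. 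This monotonicity argument is the structural backbone; the remaining work is to evaluate the two limiting values of $\beta$.

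For the supremum I would substitute $p_i=1$ into \equref{eq:beta}. The inner sum over $k$ collapses to its single $k=1$ term, leaving $\beta=\sum_{i=1}^N\frac{1}{\lambda_i}(1-e^{-\mu_i(\lambda_i-\alpha_i)})$, and by \lemmaref{lemma:lambda} the corresponding $\lambda_i$ equals $\sup\lambda_i=(W(-e^{-\alpha_i\mu_i-1})+1)/(-\mu_i)$. Substituting these two facts directly yields the stated closed form. This part is a routine substitution and presents no real difficulty.

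For the infimum I would send $p_i\to\infty$ for every $i$ inside \equref{eq:beta}. Two things happen at once: by \lemmaref{lemma:lambda}, $\lambda_i\to\alpha_i$, and the average $\frac{1}{p_i}\sum_{k=1}^{p_i}e^{-\mu_i(\lambda_i p_i/k-\alpha_i)}$ should be recognized as a Riemann sum. Writing $u=k/p_i$, so that $u$ traverses $(0,1]$ in steps of $1/p_i$, the summand becomes $e^{-\mu_i(\lambda_i/u-\alpha_i)}$, and once $\lambda_i$ is replaced by its limit $\alpha_i$ the average converges to $\int_0^1 e^{-\mu_i(\alpha_i/u-\alpha_i)}\,du=e^{\mu_i\alpha_i}\int_0^1 e^{-\mu_i\alpha_i/u}\,du$. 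Each summand of $\beta$ then tends to $\frac{1}{\alpha_i}(1-e^{\mu_i\alpha_i}\int_0^1 e^{-\mu_i\alpha_i/u}\,du)$, and forming $r/\beta$ from the $N$-term sum gives the claimed $\inf\tau^*$. A subsidiary point, immediate to check, is that $\int_0^1 e^{-\mu_i\alpha_i/u}\,du$ converges, since its integrand is bounded and tends to $0$ as $u\to 0^+$.

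The main obstacle is the joint limit: the exponent contains $\lambda_i$, which itself varies with $p_i$, so I cannot simply invoke a fixed-integrand Riemann-sum theorem. I would control this by splitting the discrepancy between the $\lambda_i$-sum and the $\alpha_i$-sum and bounding it termwise. Factoring out $e^{\mu_i\alpha_i}$ and using that $\lambda_i\geq\alpha_i$ (since $\inf\lambda_i=\alpha_i$ by \lemmaref{lemma:lambda}), together with the elementary inequality $1-e^{-z}\leq z$, the $k$-th term of the gap is bounded by $e^{\mu_i\alpha_i}\mu_i(\lambda_i-\alpha_i)\,u^{-1}e^{-\mu_i\alpha_i/u}$ with $u=k/p_i$. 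The crucial observation is that $u^{-1}e^{-\mu_i\alpha_i/u}$ is uniformly bounded on $(0,1]$, say by a constant $M_i<\infty$, so every term is at most $e^{\mu_i\alpha_i}\mu_i M_i(\lambda_i-\alpha_i)$, independently of $k$. Averaging over $k$ leaves the same bound, which vanishes because $\lambda_i\to\alpha_i$. This uniform control lets me replace $\lambda_i$ by $\alpha_i$ inside the limit without needing the precise rate of convergence of $\lambda_i$, after which the standard Riemann-sum convergence completes the argument. Combining the two boundary evaluations with the monotonicity of the first paragraph then establishes both the supremum and the infimum of $\tau^*$.
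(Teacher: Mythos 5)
Your proposal is correct and takes essentially the same route as the paper's proof: invoke \theoref{thm:convergence} to place the supremum at $p_i=1$ and the infimum in the limit $p_i\to\infty$ for all $i\in[N]$, then evaluate $\beta$ at these boundary configurations using \lemmaref{lemma:lambda} together with the Riemann-sum limit $\frac{1}{p_i}\sum_{k=1}^{p_i}e^{-\mu_i(\lambda_i p_i/k-\alpha_i)}\to e^{\mu_i\alpha_i}\int_0^1 e^{-\mu_i\lambda_i/x}\,dx$. Two minor remarks: your uniform bound on $u^{-1}e^{-\mu_i\alpha_i/u}$ actually makes rigorous a step the paper leaves informal (the paper interchanges the Riemann-sum limit with $\lambda_i\to\alpha_i$ without addressing that $\lambda_i$ itself varies with $p_i$), and for the supremum your substitution into $\tau^*=r/\beta$ yields $r$ divided by the displayed sum, whereas the paper's stated formula omits the factor $r$ --- an apparent typo carried through the paper's own proof as well.
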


From \theoref{thm:4} and \equref{eq:prob1_solution}, we can derive the following corollary. 
\begin{corollary}
\label{corollary:2}
Consider problem $\mathcal{P'}_\mathrm{main}$ with the batch processing time following the shifted exponential distribution in \equref{eq:shift_exponential} and $r = \Theta(N)$. Let  $\ell_i^*$ be the solution of BPCC given by \equref{eq:prob1_solution}. Then when the approximated execution time  $\tau^*$ of BPCC given by \equref{eq:tau}  converges to its infimum,
$\ell^*_i$ converges to $\hat{\ell}_i$, where 
\begin{equation}
\label{eq:l_convergence}
\hat{\ell}_i  = 
\frac{r}{\alpha_i\sum_{j=1}^N\frac{1}{\alpha_j}(1-e^{\mu_j\alpha_j}\int_{0}^{1}e^{-\frac{\mu_j\alpha_j}{x}}dx)}.
\end{equation}
\end{corollary}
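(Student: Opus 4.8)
The plan is to obtain $\hat{\ell}_i$ by simply passing to the limit in the closed-form expression $\ell_i^*(\tau^*) = \frac{r}{\beta\lambda_i}$ from \equref{eq:prob1_solution}, exploiting the fact, already established earlier, that $\tau^*$ reaches its infimum precisely in the regime where each $\lambda_i$ attains its own infimum.

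First I would reinterpret the hypothesis. By \theoref{thm:convergence}, $\tau^*$ decreases monotonically as any $p_i$ grows, and by \theoref{thm:4} its infimum is attained as $p_i \to \infty$ for all $i \in [N]$. Hence the condition ``$\tau^*$ converges to its infimum'' is equivalent to driving every $p_i \to \infty$, and it suffices to evaluate $\lim_{p_i \to \infty,\,\forall i} \ell_i^*$ for fixed $N$.

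Next I would take the limit of the two factors in \equref{eq:prob1_solution} separately. For $\lambda_i$, \lemmaref{lemma:lambda} gives directly $\lim_{p_i \to \infty} \lambda_i = \alpha_i$. For $\beta$ in \equref{eq:beta}, I would observe that its limit must coincide with the denominator of $\inf\tau^*$ in \theoref{thm:4}: since $\tau^* = r/\beta$ and $\inf\tau^* = r\big/\big(\sum_{i=1}^N \tfrac{1}{\alpha_i}(1 - e^{\mu_i\alpha_i}\int_0^1 e^{-\mu_i\alpha_i/x}dx)\big)$, continuity of $\beta \mapsto r/\beta$ forces $\lim_{p_i\to\infty,\forall i}\beta = \sum_{i=1}^N \tfrac{1}{\alpha_i}(1 - e^{\mu_i\alpha_i}\int_0^1 e^{-\mu_i\alpha_i/x}dx)$. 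Concretely this limit comes from recognizing $\frac{1}{p_i}\sum_{k=1}^{p_i} e^{-\mu_i(\lambda_i p_i/k - \alpha_i)}$ as a Riemann sum for $e^{\mu_i\alpha_i}\int_0^1 e^{-\mu_i\alpha_i/x}dx$ under the substitution $x = k/p_i$ together with $\lambda_i \to \alpha_i$, a computation already carried out inside the proof of \theoref{thm:4}, so I would simply cite it rather than redo it.

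Finally I would combine the two limits. Since $N$ is held fixed while the $p_i$ grow, the product $\beta\lambda_i$ converges to $\alpha_i\sum_{j=1}^N \tfrac{1}{\alpha_j}(1 - e^{\mu_j\alpha_j}\int_0^1 e^{-\mu_j\alpha_j/x}dx)$, whence $\ell_i^* = \frac{r}{\beta\lambda_i} \to \hat{\ell}_i$ by continuity of $(\beta,\lambda_i)\mapsto r/(\beta\lambda_i)$ away from zero. The work here is bookkeeping rather than genuine difficulty; the only point requiring care is that the summand defining $\beta$ depends on $\lambda_i$, which itself varies with $p_i$, so the Riemann-sum limit and the limit $\lambda_i\to\alpha_i$ must be taken jointly. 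Both ingredients are furnished by \lemmaref{lemma:lambda} and the proof of \theoref{thm:4}, so this joint passage to the limit is the main (and quite mild) obstacle, and the corollary follows immediately once it is justified.
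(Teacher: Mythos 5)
Your proposal matches the paper's own proof essentially verbatim: both arguments identify ``$\tau^*$ at its infimum'' with the regime $p_i \to \infty$ for all $i$ (via \theoref{thm:convergence} and \theoref{thm:4}), then pass to the limit in $\ell_i^* = \frac{r}{\beta\lambda_i}$ using $\lim_{p_i\to\infty}\lambda_i = \alpha_i$ from \lemmaref{lemma:lambda} and the limit of $\beta$ already computed (via the Riemann-sum argument) in the proof of \theoref{thm:4}. Your extra remark about taking the Riemann-sum limit and $\lambda_i \to \alpha_i$ jointly is a fair point of care, but it is the same computation the paper cites, so there is no substantive difference.
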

\subsection{Comparison with HCMM}

In this sub-section, we compare the performance of BPCC with HCMM \cite{reisizadeh19coded}, a state-of-the-art CDC scheme for heterogeneous worker nodes, and   show that BPCC outperforms HCMM in computational efficiency. 

HCMM can be considered as a special case of BPCC with $p_i = 1$, $\forall i \in [N]$. It assigns each worker node $i$ with load $\ell_{H,i} = \frac{r}{\beta_H \lambda_{H,i}}$, where $\lambda_{H,i}$ is the positive solution to $e^{\mu_i\lambda_{H,i}}=e^{\alpha_i\mu_i}(\mu_i\lambda_{H,i}+1)$ and $\beta_H = \sum_{i=1}^{N}\frac{\mu_i}{1+\mu_i\lambda_{H,i}}$. 
\theoref{thm:3} shows that BPCC is more efficient than HCMM.


\begin{theorem}
\label{thm:3}
Consider problem $\mathcal{P'}_\mathrm{main}$, with the batch processing time following a shifted exponential distribution in \equref{eq:shift_exponential} and $r = \Theta(N)$. Let $T_{\mathrm{BPCC}}$ and $T_{\mathrm{HCMM}}$ be the execution times of BPCC and HCMM, respectively. Then,
\begin{equation}
    \lim _{N \rightarrow \infty} \mathbb{E}\left[T_{\mathrm{BPCC}}\right]\leq\lim _{N \rightarrow \infty} \mathbb{E}\left[T_{\mathrm{HCMM}}\right] \nonumber
\end{equation}
\end{theorem}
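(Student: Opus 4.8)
The plan is to leverage three results already established: (i) the fact, noted in the text preceding the theorem, that HCMM coincides with BPCC under the choice $p_i = 1$ for all $i\in[N]$; (ii) \theoref{thm:tau}, which identifies the approximated execution time $\tau^*$ with the asymptotic expected execution time of BPCC; and (iii) the monotonicity of $\tau^*$ in each $p_i$ established in \theoref{thm:convergence} (equivalently, the statement in \theoref{thm:4} that $\sup\tau^*$ is attained at $p_i=1,\forall i$). Together these reduce the theorem to a coordinate-wise monotonicity comparison.

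First I would verify that HCMM is indeed the $p_i=1$ specialization of BPCC. Setting $p_i=1$ in \equref{eq:lambda} collapses the sum to a single term, giving $(1+\mu_i\lambda_i)e^{-\mu_i(\lambda_i-\alpha_i)}=1$, i.e. $e^{\mu_i\lambda_i}=e^{\alpha_i\mu_i}(\mu_i\lambda_i+1)$, which is exactly the defining equation for $\lambda_{H,i}$; similarly \equref{eq:beta} reduces to $\beta = \sum_{i=1}^N \frac{1}{\lambda_{H,i}}(1-e^{-\mu_i(\lambda_{H,i}-\alpha_i)})$, which upon substituting $e^{-\mu_i(\lambda_{H,i}-\alpha_i)}=\frac{1}{1+\mu_i\lambda_{H,i}}$ becomes $\sum_{i=1}^N \frac{\mu_i}{1+\mu_i\lambda_{H,i}}=\beta_H$. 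Hence the BPCC load allocation \equref{eq:prob1_solution} at $p_i=1$ equals $\ell_{H,i}$, confirming the special-case claim.

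Next I would apply \theoref{thm:tau} twice. For a general batch vector $\boldsymbol{p}$ it gives $\lim_{N\to\infty}\mathbb{E}[T_{\mathrm{BPCC}}]=\tau^*(\boldsymbol{p})$, and since HCMM is BPCC with $\boldsymbol{p}=\boldsymbol{1}$, the same theorem applied at $\boldsymbol{p}=\boldsymbol{1}$ yields $\lim_{N\to\infty}\mathbb{E}[T_{\mathrm{HCMM}}]=\tau^*(\boldsymbol{1})$. It then remains to show $\tau^*(\boldsymbol{p})\le\tau^*(\boldsymbol{1})$. Because every admissible $p_i$ satisfies $p_i\ge 1$, I would pass from $\boldsymbol{1}$ to $\boldsymbol{p}$ by raising one coordinate at a time; \theoref{thm:convergence} guarantees that each such increase decreases $\tau^*$, so chaining the inequalities gives $\tau^*(\boldsymbol{p})\le\tau^*(\boldsymbol{1})$. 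Combining the three identities yields the claim. Alternatively, \theoref{thm:4} states directly that $\tau^*$ attains its maximum at $p_i=1,\forall i$, delivering the same inequality in one step.

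There is no genuine obstacle remaining once the earlier theorems are invoked; the only point requiring care is the bookkeeping in the first step --- confirming that the $p_i=1$ reductions of \equref{eq:lambda} and \equref{eq:beta} reproduce the \emph{exact} HCMM parameters $\lambda_{H,i}$ and $\beta_H$ rather than something merely proportional --- since any discrepancy there would break the identification of $\tau^*(\boldsymbol{1})$ with the HCMM asymptotic time. Everything else is an immediate consequence of the coordinate-wise monotonicity of $\tau^*$.
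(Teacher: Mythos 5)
Your proof is correct and follows essentially the same route as the paper's: identify HCMM as BPCC with $p_i=1$ for all $i$, apply \theoref{thm:tau} to equate the asymptotic expected execution times with $\tau^*$, and invoke the monotonicity of $\tau^*$ in each $p_i$ from \theoref{thm:convergence} (equivalently, $\sup\tau^*$ at $\boldsymbol{p}=\boldsymbol{1}$ from \theoref{thm:4}). Your only deviations are minor refinements: you explicitly verify that the $p_i=1$ reductions of \equref{eq:lambda} and \equref{eq:beta} recover $\lambda_{H,i}$ and $\beta_H$ exactly (the paper asserts this identification in the text without computation), and you obtain $\lim_{N\to\infty}\mathbb{E}[T_{\mathrm{HCMM}}]=\tau^*(\boldsymbol{1})$ from \theoref{thm:tau} itself, making the argument self-contained, whereas the paper cites the external HCMM analysis for that limit.
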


\section{Simulation Studies}
\label{sec:simulation}

In this section, we conduct  simulation studies to evaluate the performance of the proposed BPCC scheme. Specifically, we first explain the simulation settings, including the distributed computing schemes and scenarios. We then elaborate on the impact of important parameters, including $p_i$, $N$, $\mu_i$ and $\alpha_i$, on the performance of the BPCC scheme. Finally, we compare the proposed BPCC scheme with benchmark schemes, including the state-of-the-art HCMM scheme \cite{reisizadeh19coded}.

\subsection{Simulation Settings}
\label{sec:simsettings}

\subsubsection{Distributed Computing Schemes}

In this study, we consider four distributed computing schemes:

\begin{itemize}
\item \textbf{Uniform Uncoded:} This method divides the computation loads equally, i.e., $\ell_i = \frac{r}{N}$, $\forall i \in [N]$.

\item \textbf {Load-Balanced Uncoded \cite{reisizadeh19coded}:} This method divides the computation loads according to the computing capabilities of the worker nodes. In particular, the computation load assigned to each worker node $i$ is inversely proportional to the expected time for this node to compute an inner product, i.e.,  $\ell_i \propto (\frac{\mu_i}{\mu_i \alpha_i +1})$ and $\sum_{i=1}^N \ell_i = r$. 

\item \textbf{HCMM \cite{reisizadeh19coded}:} In this method, the load assignment method in \cite{reisizadeh19coded} is used. Note that this is a special case of \algoref{alg:bpcc1}, in which $p_i=1, \forall i \in [N]$. The HCMM and BPCC have the exactly same load allocation for each worker. 

\item \textbf{BPCC:} In this scheme, \algoref{alg:bpcc1} is used, where $\boldsymbol{p}$ are the parameters to configure.
\end{itemize}

\subsubsection{Computation Scenarios}
\label{sec:computation_scenario}
To evaluate the performance of different distributed computing schemes, we consider the following four computation scenarios:

\begin{itemize}

\item \textbf{Scenario 1:} $r = 1 \times 10^4$ and $N = 10$.

\item \textbf{Scenario 2:} $r = 2 \times 10^4$ and $N = 10$.

\item \textbf{Scenario 3:} $r = 1 \times 10^4$ and $N = 20$.

\item \textbf{Scenario 4:} $r = 2 \times 10^4$ and $N = 20$.

\end{itemize}
\subsubsection{Simulation Method}

In our simulation, we implement all the aforementioned distributed computing schemes in MATLAB. We assume that 
the processing time of each node follows the shifted exponential distribution in \equref{eq:shift_exponential}. Specifically, for each experiment of a scenario, we choose the straggling parameters $\mu_i, \forall i \in [N]$ randomly in $[1,50]$, and calculate each shift parameter $\alpha_i=\frac{1}{\mu_i}$. In each experiment, we simulate every distributed computing scheme for 100 times, in each of which the computing time of a node is simulated by using its straggling and shift parameters.

\subsection{Parameter Impact Analysis}
In this sub-section, we investigate the impacts of  parameters in BPCC, including number of batches, number of worker nodes, and the straggling and shift parameters in the computing model. 
\subsubsection{Number of batches}
The  number of batches $p_i$ is an important parameter to configure. In Section \ref{sec:impact_of_p},  
we have theoretically analyzed its impact on the performance of BPCC. Here we conduct simulation studies to demonstrate its impact described in \theoref{thm:convergence}, \theoref{thm:4} and Corollary \ref{corollary:2}. In particular, two experiments are designed. 

In the first experiment, we show that the approximated execution time $\tau^*$ of BPCC given by \equref{eq:tau} decreases with the increase of any $p_i$, $i\in[N]$, as presented in \theoref{thm:convergence}. In particular, we vary the number of batches for one of the worker nodes and fix the number of batches for the others. Specially, we vary $p_1$ and let $p_j=1, \forall j \in [N]\setminus\{1\}$. As shown in \figref{fig:taupa}, $\tau^*$ indeed decreases as $p_1$ increases. 

\begin{figure}[h]
\subfigure[]{
  \label{fig:taupa}
  \includegraphics[width=0.23\textwidth]{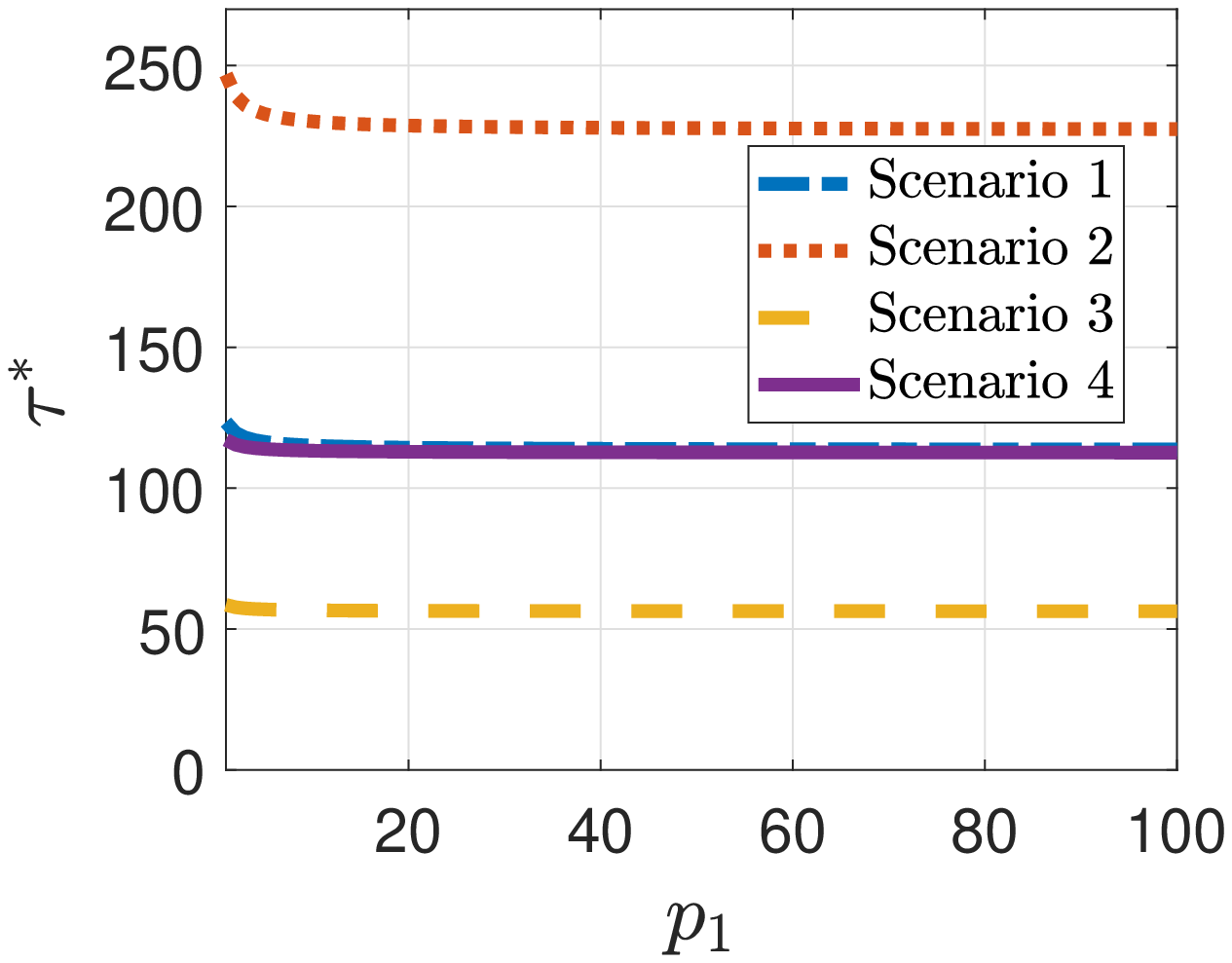}}
\subfigure[]{
\label{fig:taupb}
\includegraphics[width=0.23\textwidth]{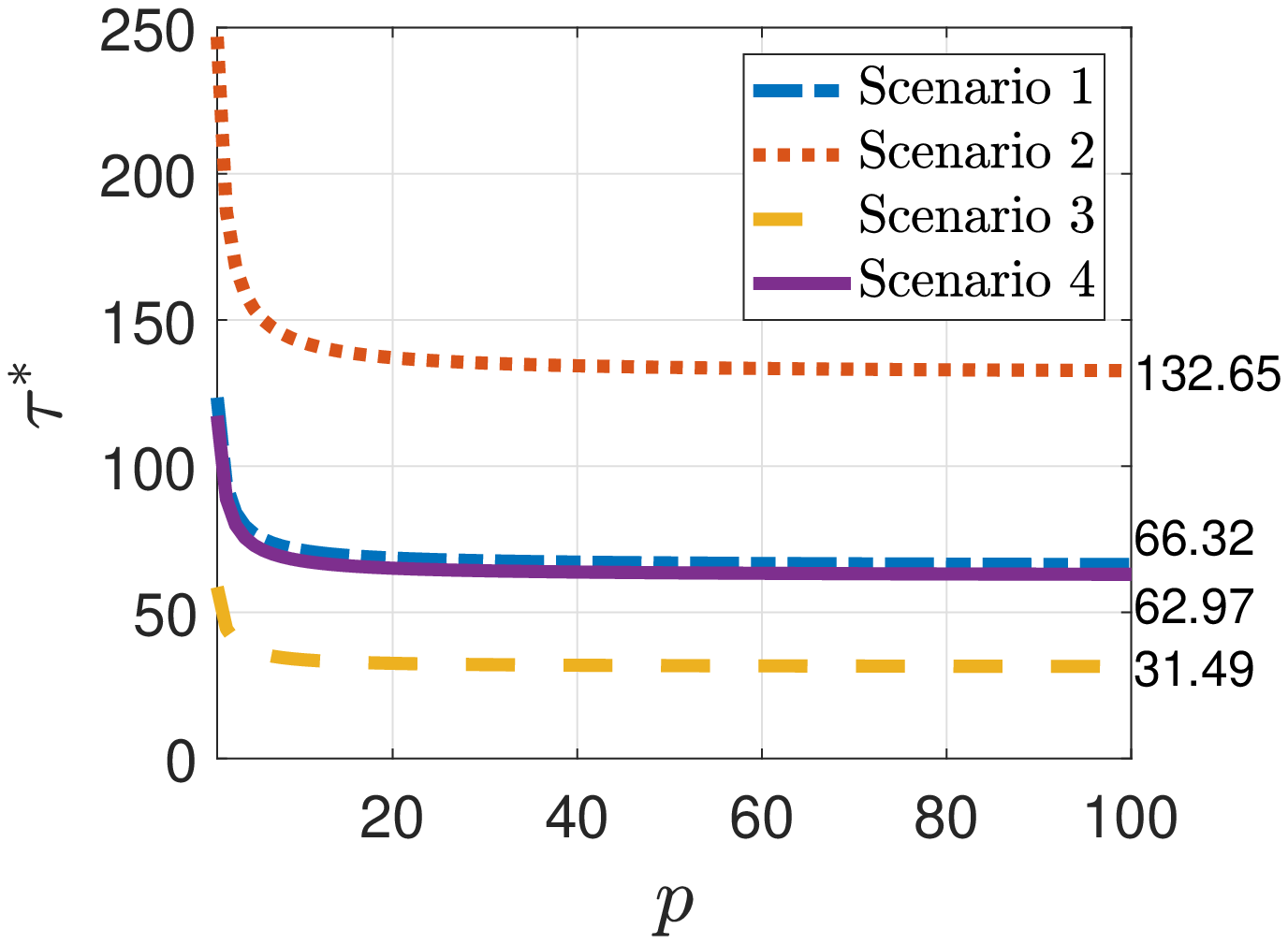}
}
  \caption{The approximated execution time $\tau^*$ of BPCC at different values of   
	a) $p_1$, when $p_j = 1, \forall j \in [N]\setminus\{1\}$, and b) $p$, when $p_i = p, \forall i \in [N]$, in different scenarios.}\label{fig:taup}
\end{figure}

\begin{figure}[h]
\subfigure[]{
\label{fig:l1}
  \includegraphics[width=0.23\textwidth]{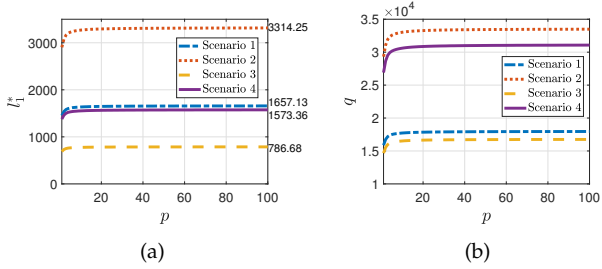}}
\subfigure[]{
\label{fig:suml}
\includegraphics[width=0.23\textwidth]{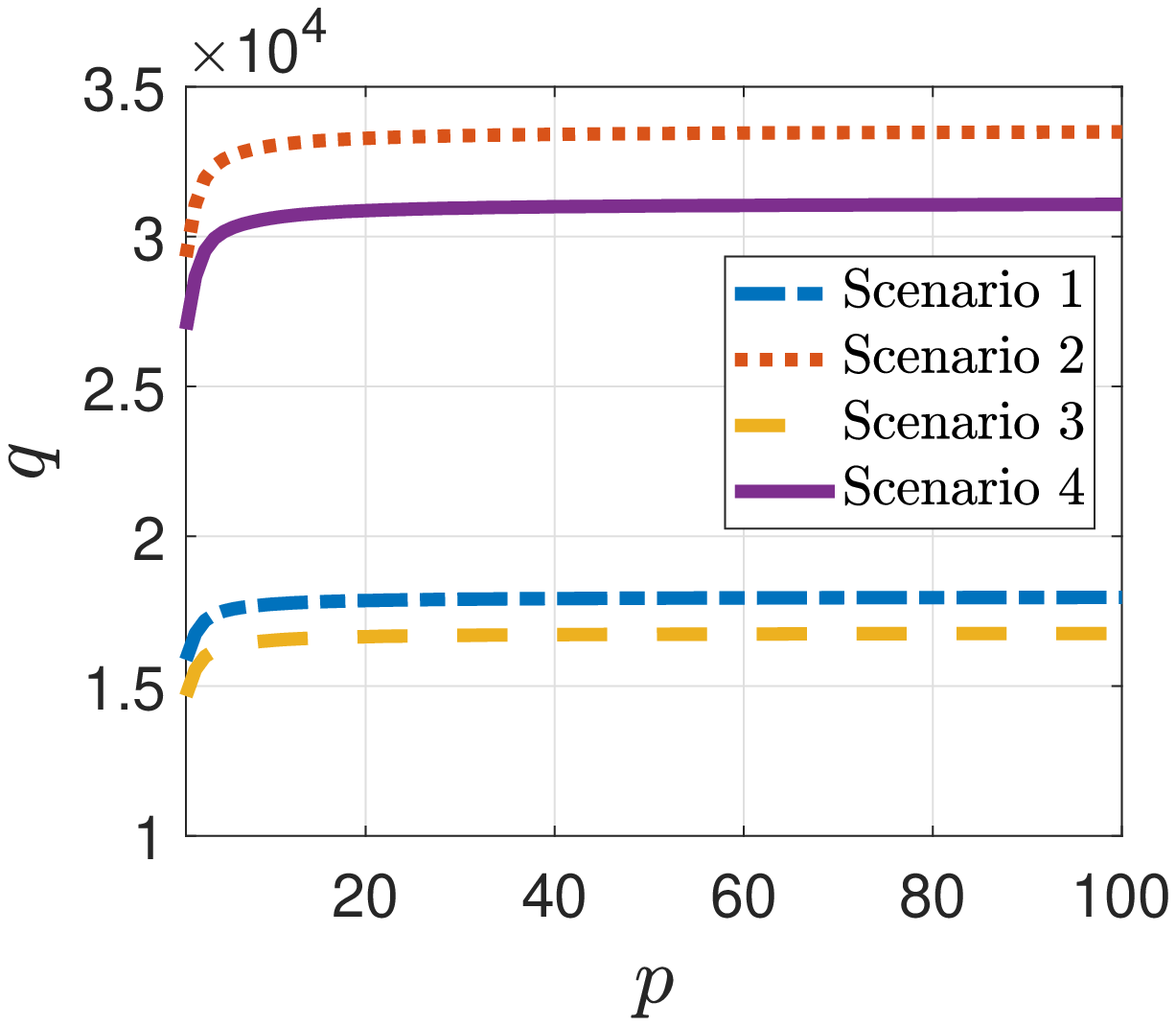}
}
  \caption{The value of a) load $\ell^*_1$ and b) total load $q=\sum_{i=1}^N \ell^*_i$ at different values of $p$, when $p_i = p, \forall i \in [N]$, in different scenarios.}
\end{figure}

In the second experiment, we show that the approximated execution time $\tau^*$ and the load $\ell^*_i$ tend to converge as $p_i$ increases for all $i\in [N]$, as presented in \theoref{thm:4} and Corollary \ref{corollary:2}. In this experiment, we vary $p_i$ simultaneously for all $i \in [N]$. In other words, we let $p_i = p \in \mathbb{Z}^+$, $\forall i \in [N]$ and vary the value of $p$. As shown in \figref{fig:taupb}, $\tau^*$ decreases with the increase of $p$ and finally converges. Note that when $p = 100$, $\tau^*$ equals to  66.32, 132.65, 31.49, 62.97 for the four scenarios, respectively, which are already very close to its theoretical infimum 65.77, 131.54, 31.22, 62.45, computed by \equref{eq:tau_convergence}. \figref{fig:l1} shows the trajectory of the load allocated to one of the worker nodes, i.e., $\ell^*_1$, which decreases and finally converges as $p$ increases. Note that when $p=100$, $\ell^*_1$ equals to 1657.13, 3314.25, 786.68, 1573.36 for the four scenarios, respectively, which are very close to the values of $\hat{\ell}_1$ given by \equref{eq:l_convergence}, i.e., 1659.79, 3319.59, 787.95, 1575.90. Of interest, if we set $p_i = \lfloor\hat{\ell}_i\rfloor$ given by \equref{eq:l_convergence}, $\forall i \in [N]$, $\tau^*$ equals to 65.81, 131.58, 31.26, 62.47 and $\ell^*_1$ equals to 1659.62, 3319.42, 787.73, 1575.68 for the four scenarios, respectively, which are almost the same as the associated  $\inf \tau^*$ and $\hat{\ell}_1$, respectively.

\figref{fig:l1} shows the impact of parameter $p_i$ on the load $\ell^*_i$ for one of the work nodes. In \figref{fig:suml}, we also show its impact on the total load $q = \sum_{i=1}^{N}\ell^*_i$. As we can see, the total load $q$ also 
increases with the increase of $p$, where $p_i = p$, $\forall i\in[N]$. This indicates that a larger $p_i$ will require more storage space at the worker nodes. Note that the worker nodes will stop execution once the master node receives sufficient amount of results for recovering the final result. Therefore, a larger total load $q$ does not increase the computation load for the worker nodes. This study tells us that the configuration of parameter $p_i$ should trade off between computational efficiency and storage consumption.

\begin{figure}[h]
\subfigure[]{
  \includegraphics[width=0.23\textwidth]{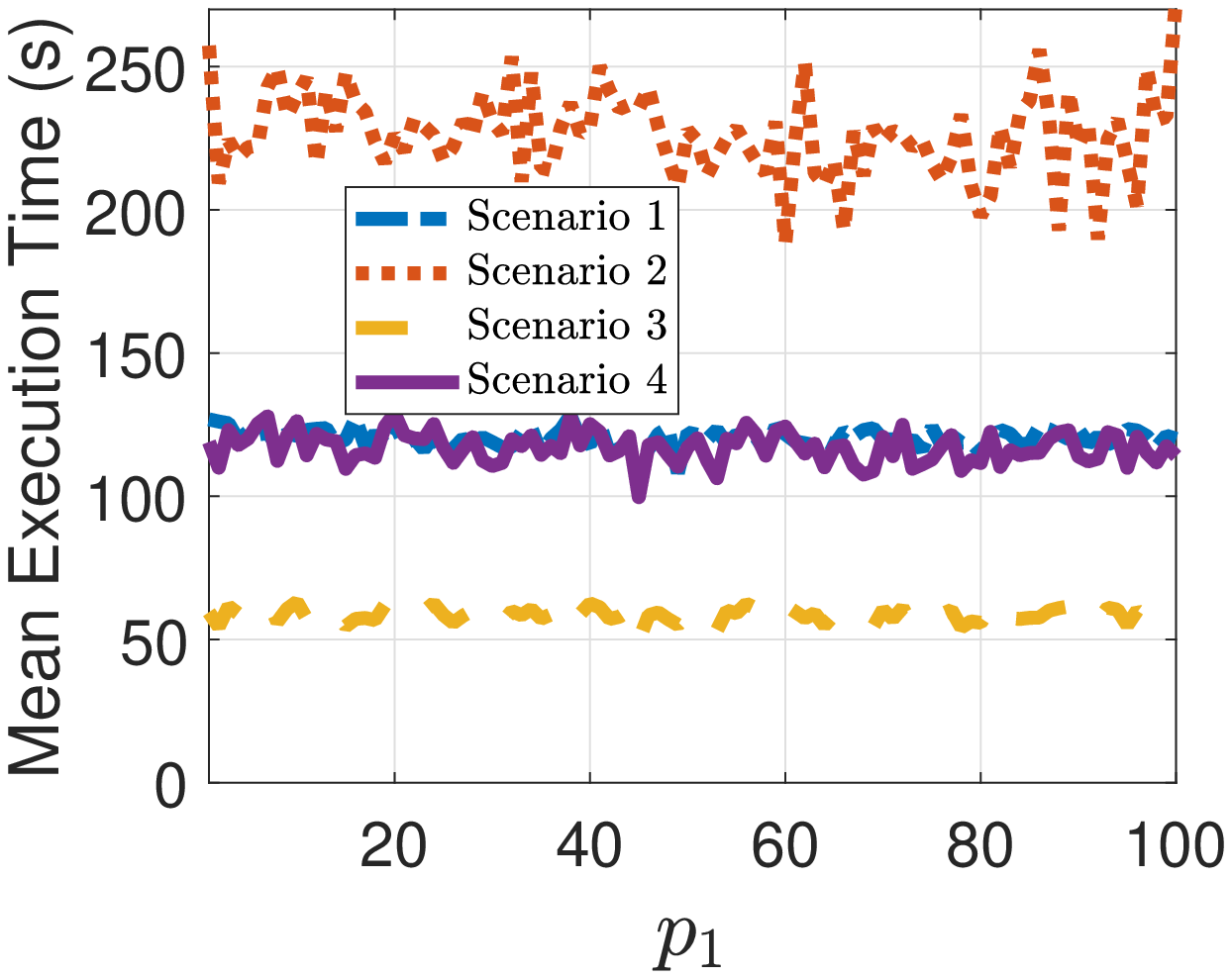}}
\subfigure[]{
\includegraphics[width=0.23\textwidth]{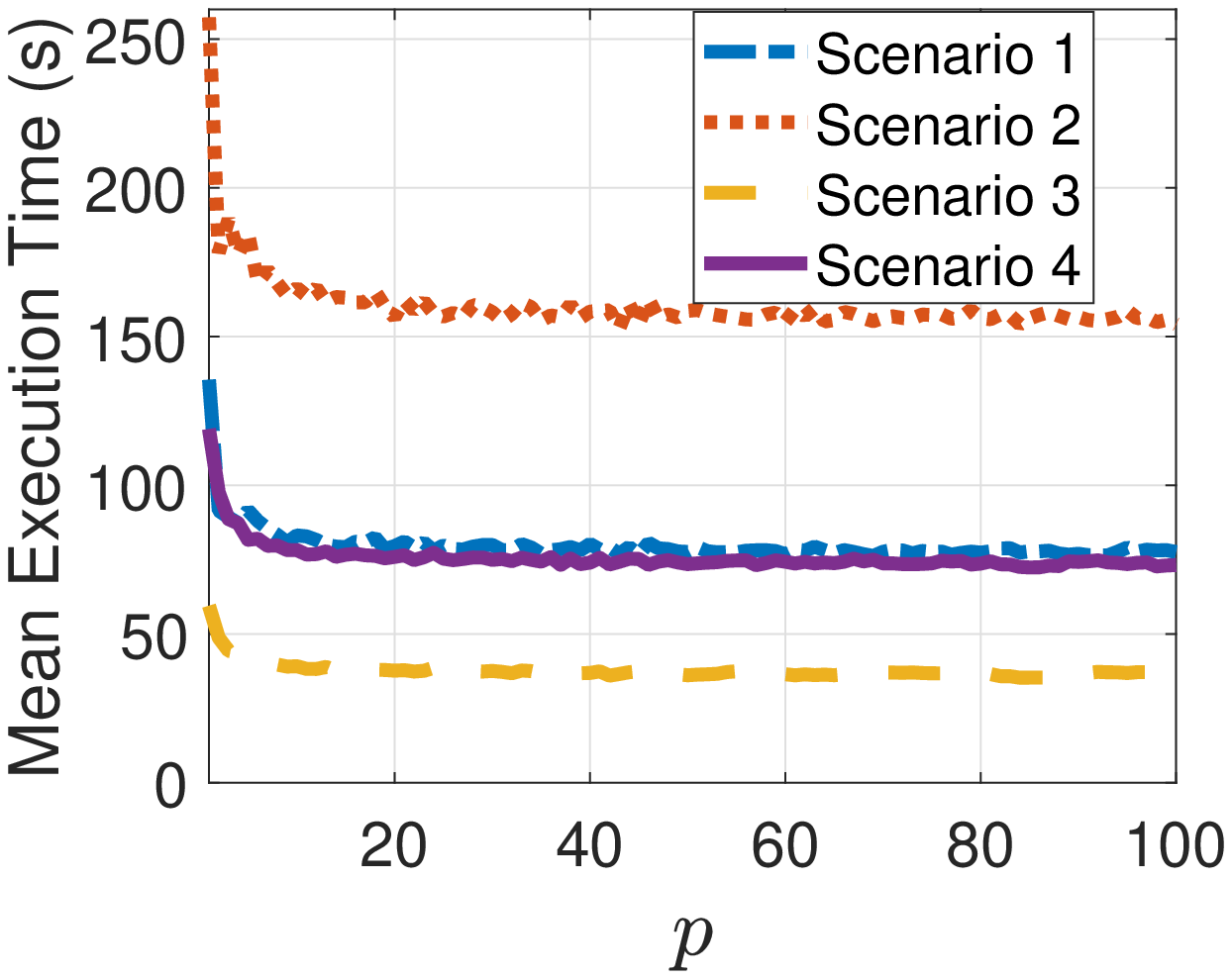}
}
  \caption{The value of a) load $\ell^*_1$ and b) total load $q=\sum_{i=1}^N \ell^*_i$ at different values of $p$, when $p_i = p, \forall i \in [N]$, in different scenarios.}\label{fig:etwithp}
\end{figure}

As $\tau^*$ is an approximation of BPCC's execution time, we also show in \figref{fig:etwithp} the impact of $p_i$ on the expected execution time $\mathbb{E}[T_{\mathrm{BPCC}}]$ of BPCC, which is estimated using the Monte Carlo simulation method, specifically, by repeating each experiment for 100 times and averaging the times to execute the BPCC scheme. Comparing \figref{fig:taup} and \figref{fig:etwithp}, we can see that $\tau^*$ approximates $\mathbb{E}[T_{\mathrm{BPCC}}]$ generally well. The fluctuations are caused by the uncertainty of the computation times and the relatively weak estimation capability of the Monte Carlo method, which requires large number of simulations to obtain an accurate mean estimate. 
As we will show in the next study, the approximation accuracy of $\tau^*$ is impacted by the number of worker nodes $N$.

\subsubsection{Number of worker nodes}
As we have theoretically proved in \theoref{thm:tau}, the approximated execution time $\tau^*$ converges to the true expected execution time $\mathbb{E}[T_{\mathrm{BPCC}}]$ of BPCC, when the number of worker nodes $N$ approaches infinity. To demonstrate this theorem, we vary $N$ and set $r=100N+10000$, and record the approximation error of $\tau^*$, given by $|\tau^*-\mathbb{E}[T_{\mathrm{BPCC}}]|$, for each value of $N$. The results are shown in \figref{fig:N_converge}. Note that, instead of the four scenarios described in \secref{sec:computation_scenario}, we design four new scenarios for this study, where the configuration of each scenario is specified in the figure. As we can see, the approximation error of $\tau^*$ decreases with the increase of the number of worker nodes, and finally converges to zero.  

\begin{figure}[!htb]
\begin{center}
  \includegraphics[width=0.23\textwidth]{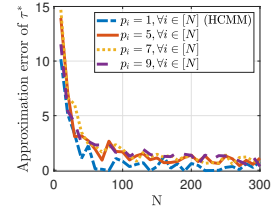}
	\caption{The approximation error of $\tau^*$ at different values of $N$ with $r=100N+10000$.
}\label{fig:N_converge}
\end{center}
\end{figure}

			

\medskip
From the above studies, we can see that, as the number of batches $p_i$ for any worker node $i\in[N]$ increases, the efficiency of BPCC improves, but the demand for storage also increases.  Because storage consumption is not our main concern in this study, in the following experiments, we set $p_i$ to its maximum value possible, i.e.,  $p_i=\lfloor\hat{\ell}_i\rfloor$, $\forall i \in [N]$, considering that a valid $p_i$ should be a positive integer smaller than or equal to $\ell^*_i$ and $\ell^*_i$ converges to $\hat{\ell}_i$ as $p_i$, $\forall i \in[N]$, increases. 

\subsubsection{Straggling and shift parameters}
In BPCC, to determine the load numbers $\ell_i$, 
we need to know the values of the straggling and shift parameters, $\mu_i$ and $\alpha_i$, which are estimated by measuring the actual execution behaviors in real experiments. To understand the impact of parameter estimation errors to the performance of BPCC, we conduct a sensitivity study.  
In particular, to study how sensitive BPCC is to the estimation errors associated with the straggling parameters $\mu_i$, we fix the shift parameters $\alpha_i$ and deviate each $\mu_i$ from its true value by randomly picking a value from the interval $(\mu^{min}_i, \mu^{max}_i)$, where $\mu^{min}_i = \mu^*_i(1-\Delta)$, $\mu^{max}_i = \mu^*_i(1+\Delta)$, $\mu^*_i$ is the true value and $\Delta >0$ represents the degree of deviation. As $\mu_i$ should be positive, we let $\mu^{min}_i = 0$, if $\Delta > 1$. 
\figref{fig:estimation_errora} shows the relative change of the mean execution time, measured by $\frac{\hat{\mathbb{E}}'[T]-\hat{\mathbb{E}}[T]}{\hat{\mathbb{E}}[T]}$, at different values of $\Delta$ in different scenarios, where $\hat{\mathbb{E}}[T]$ and $\hat{\mathbb{E}}'[T]$ are the mean execution time obtained by using the true and erroneous parameter values, respectively. Similarly, we plot in \figref{fig:estimation_errorb} the relative change of the mean execution time when the shift parameters $\alpha_i$ suffer from estimation errors.
As we can see, the deviation of straggling parameters $\mu_i$  has 
less impact on the performance of BPCC than that of shift parameters $\alpha_i$, and BPCC is robust to small errors in general.

\begin{figure}[h]
\begin{center}
\subfigure[]{
\label{fig:estimation_errora}
		\includegraphics[width=0.23\textwidth]{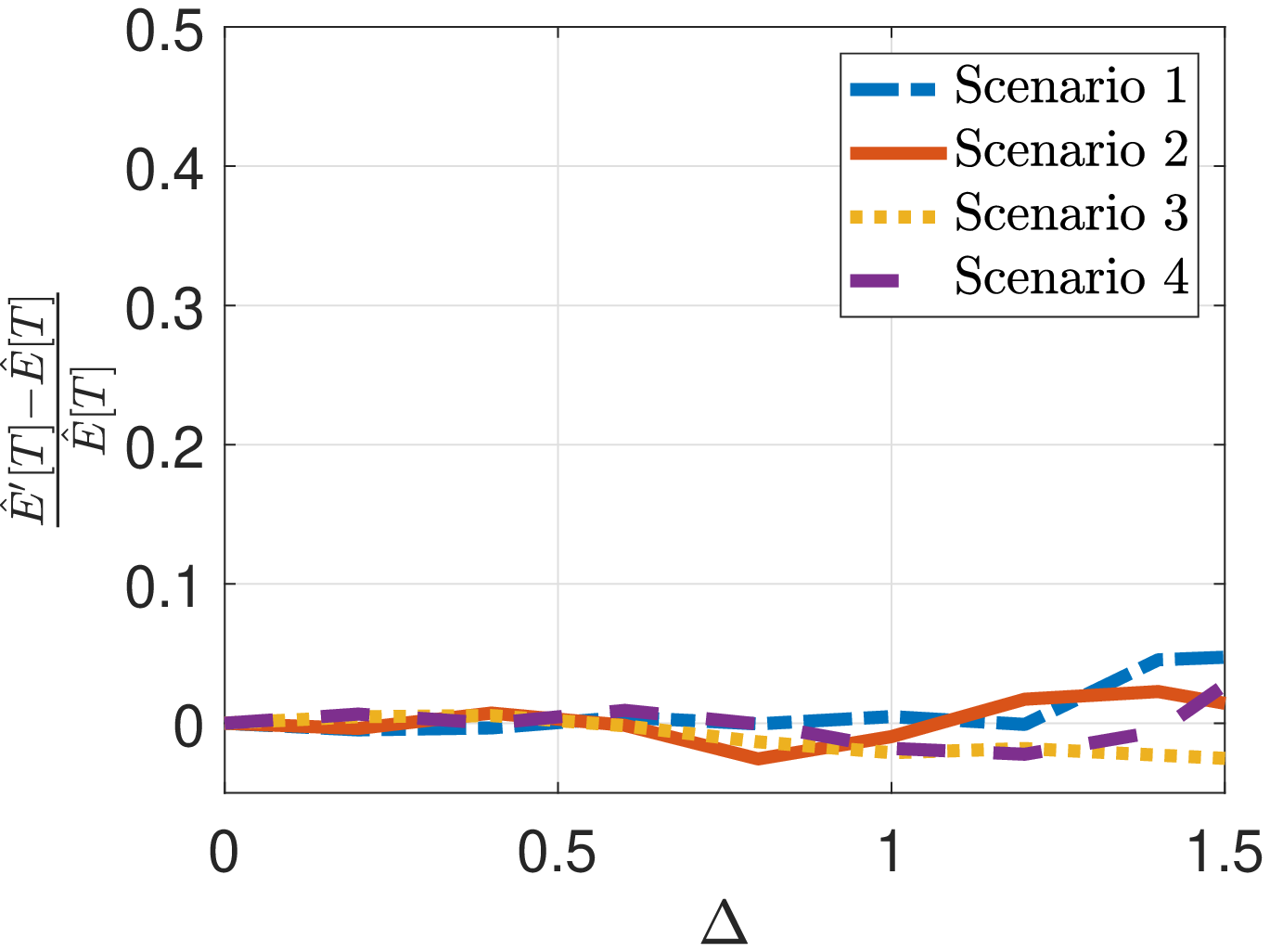}}
\subfigure[]{
\label{fig:estimation_errorb}
        \includegraphics[width=0.23\textwidth]{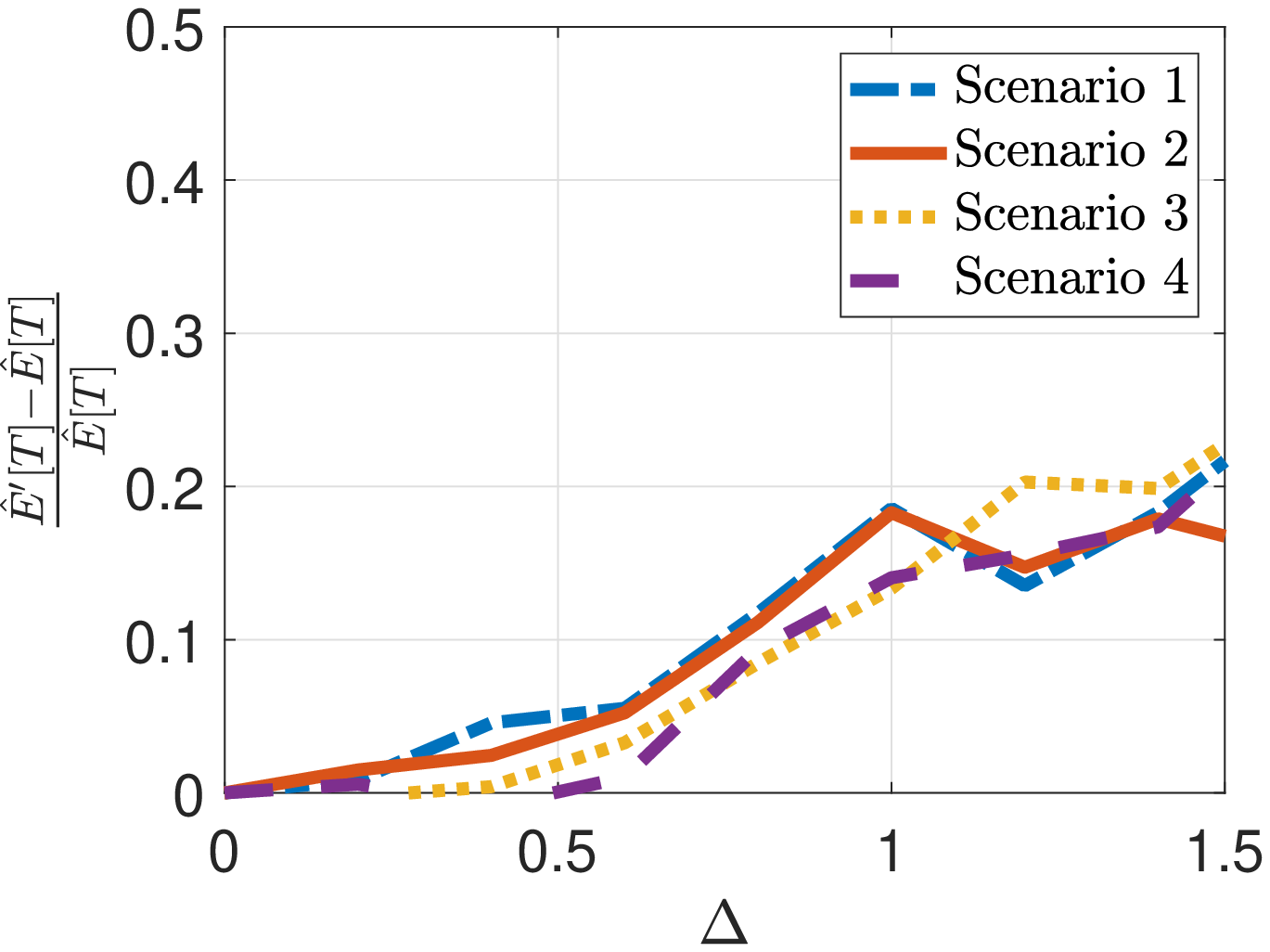}
}
	\caption{Relative change of the mean execution time when a) the straggling parameters $\mu_i$ and b) the shift parameters $\alpha_i$ suffer from different degrees of deviation from their true values in different scenarios.}
\label{fig:estimation_error}
\end{center}
\end{figure}

\subsection{Comparative Performance Studies}
In this sub-section, we compare the performance of the proposed BPCC scheme with three benchmark schemes, including Uniform Uncoded, Load-Balanced Uncoded and HCMM. The parameter $p_i$ in BPCC is set to $p_i= \lfloor\hat{\ell}_i\rfloor$, $\forall i \in [N]$.

\begin{figure}[!htb]
\begin{center}
\subfigure[]{
	\label{fig:fig10221}
\includegraphics[width=0.23\textwidth]{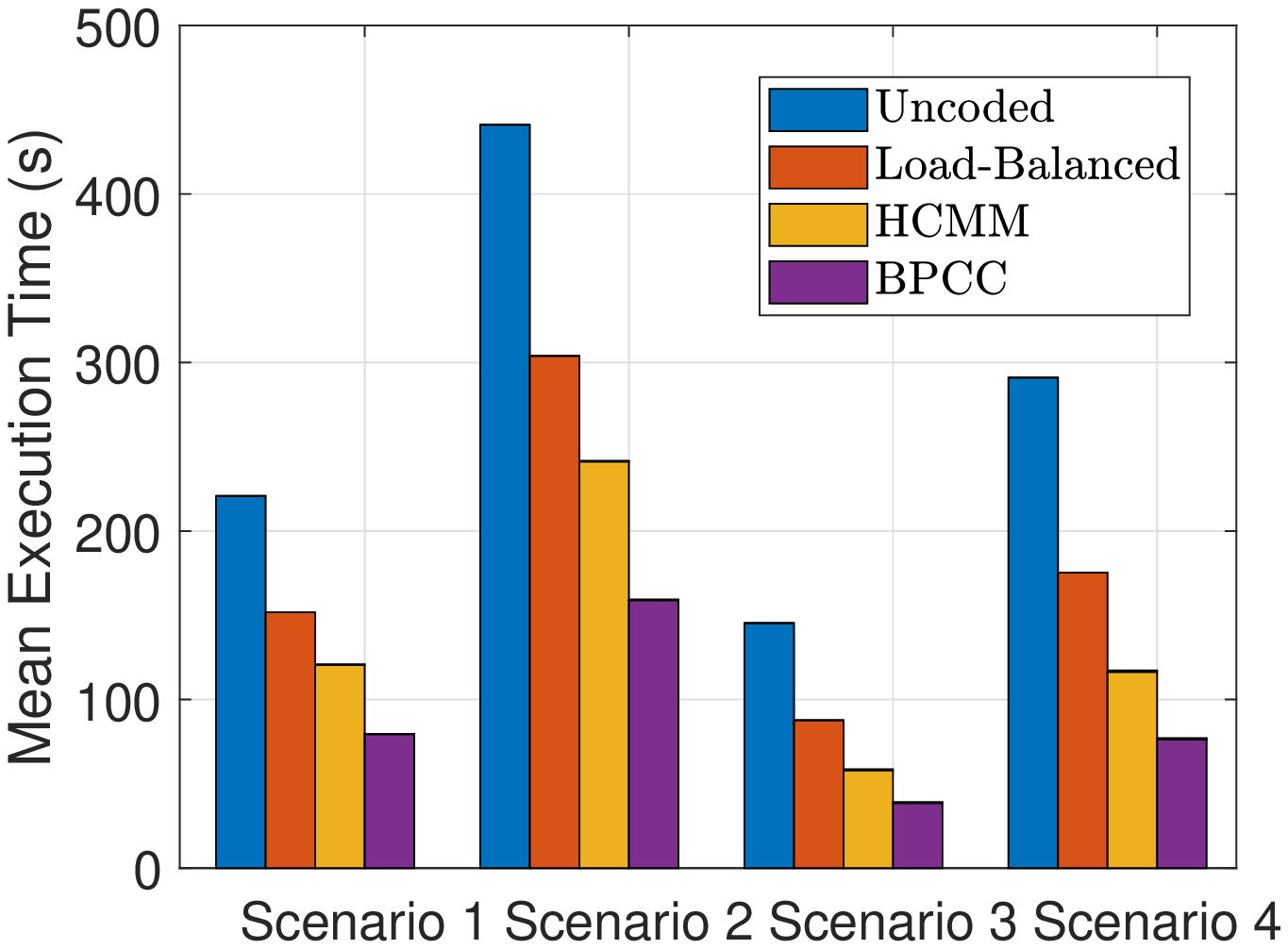}}
\hspace*{-0.6em}
\subfigure[]{
	\label{fig:fig72210}
\includegraphics[width=0.23\textwidth]{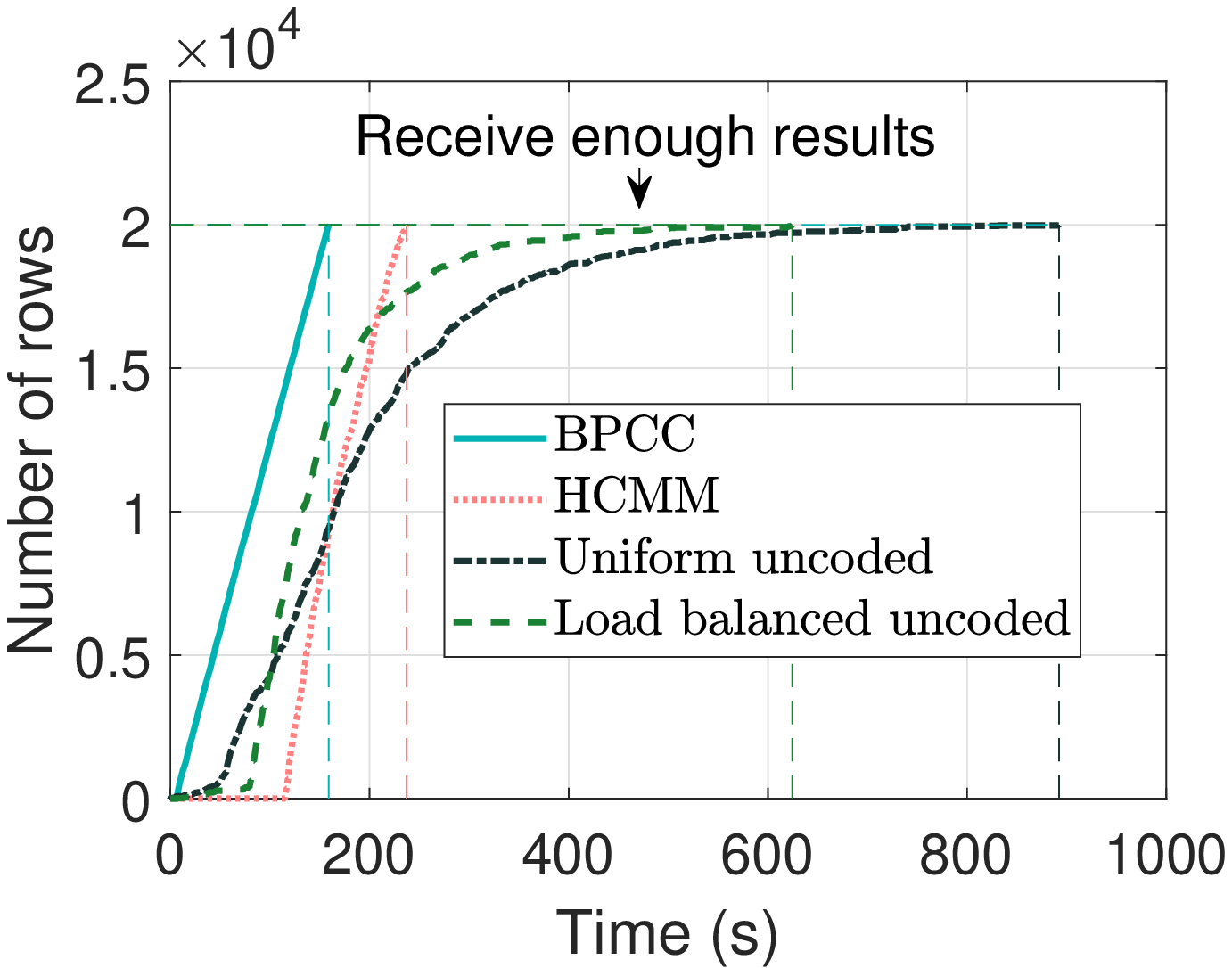}}
  \caption{a) Comparison of the mean execution time of different  schemes in different scenarios. b) The average total number of rows of inner product results received by the master node over time for different  schemes in Scenario 2.
}
\vspace*{-0.6em}

\end{center}
\end{figure}

\figref{fig:fig10221} shows the mean execution times for all schemes, grouped by the computation scenario. We can clearly observe that the proposed BPCC scheme outperforms other benchmark schemes in all scenarios. For instance, BPCC achieves performance improvement of up to $73\%$ over the Uniform Uncoded scheme, up to $56\%$ over Load-Balanced Uncoded scheme, and up to $34\%$ over HCMM. Note that the execution times are directly derived by using the computing model in \equref{eq:shift_exponential} and the decoding times are not considered here.

In \figref{fig:fig10221},  the performance is expressed in terms of the mean execution time, which corresponds to $\mathbb{E}[T_{\mathrm{BPCC}}]$ for different schemes. In \figref{fig:fig72210}, we show the average amount of received results over time for Scenario 2, which corresponds to $\mathbb{E}[S(t)]$ in the theoretical analysis. Remarkably, we can observe from the figure that the master node can quickly receive results from the worker nodes from the very beginning. On the other hand, under the three benchmark schemes, there is a certain duration at the beginning when the master node does not receive any result. This phenomenon occurs because our BPCC scheme allows partial results to be returned, which is very useful for certain applications that can utilize partial results. 
In \figref{fig:fig72210}, we also indicate the time when the master node receives the required amount of results, i.e., $r$. Such a time corresponds to $\tau^*$ (i.e., $\mathbb{E}[S(\tau^*)]=r$).

\section{Experiments on the Amazon EC2 Computing Cluster}
\label{sec:experiment}

In this section, we evaluate the performance of the proposed BPCC scheme in the real distributed computing system. Specifically, we implement three benchmark schemes and the proposed BPCC scheme in the 
Amazon EC2 computing platform \cite{amazonEC2}, which is a classical cloud computing system.

\subsection{Experiment Settings}
To implement the proposed BPCC and the three benchmark schemes over Amazon EC2 clusters \footnote{The source code can be found at  \url{https://github.com/BaoqianWang/Batch-Processing-Coded-Computation}.}, 
we apply a standard distributed computing interface, \emph{Message Passing Interface} (MPI) \cite{gropp99using}, by using an open-source package: \texttt{mpi4py} \cite{mpi4py}, which provides interfaces in Python. 
Moreover, to encode and decode matrices in BPCC, we use the Luby Transform (LT) codes with peeling decoder \cite{mallick2019rateless} that 
 are adopted by HCMM \cite{reisizadeh19coded}. 
The utilization of LT code relaxes the constraint of recovering the final computation result from any $r$ rows to any $r(1+\epsilon)$ rows, where $\epsilon >0$ is desired to be as small as possible. In this study, we adopt the configuration in \cite{reisizadeh19coded} and set $\epsilon=0.13$. The parameter $p_i$ in BPCC is set to $p_i = \lfloor \hat{\ell}_i \rfloor$, $\forall i\in [N]$. 

To evaluate the performance of the four computation schemes, we consider the following four scenarios:

\begin{itemize}
\item \textbf{Scenario 1:} $r = 0.5 \times 10^4$ and $N = 5$, where one \textit{r4.2xlarge} instance, two \textit{r4.xlarge} instances and two \textit{t2.large} instances are used as the worker nodes.  

\item \textbf{Scenario 2:} $r = 1\times 10^4$ and $N = 10$, where two \textit{r4.2xlarge instances} instances, four \textit{r4.xlarge} instances, and four \textit{t2.large} instances are used as the worker nodes. 

\item \textbf{Scenario 3:} $r = 1.5\times 10^4$ and $N = 10$, where 
four \textit{r4.2xlarge} instance and six \textit{r4.xlarge} instances are used as the worker nodes.

\item \textbf{Scenario 4:} $r = 2\times 10^4$ and $N = 15$, 
where seven \textit{r4.2xlarge} instance and eight \textit{r4.xlarge} instances are used as the worker nodes.
\end{itemize}
In all above scenarios, the master node runs in a \textit{m4.xlarge} instance, and the size of the input vector $\boldsymbol{x}\in \mathbb{R}^m$ is set to $m=5\times 10^5$.

\subsection{Parameter Estimation}

In our previous design and analysis, we have assumed that the task completion time $T$ on each node follows a shifted exponential distribution in a general form:
\begin{equation}
\operatorname{Pr}[T \leq t] = 1-e^{-\mu(\frac{t}{r}-\alpha)} = 1-e^{-\frac{\mu}{r}(t - \alpha r)},
\end{equation}
when $t \geq \alpha r$. Therefore, $\mathbb{E}[T] = \frac{r}{\mu} + \alpha r$.

Based on the assumption above, we conduct extensive experiments and measure the actual execution behaviors to estimate the values of the straggling and shift parameters, $\mu$ and $\alpha$, for different types of instances. 
Particularly, let $t_c(r) = \frac{r}{\mu}$ and $t_0(r) = \alpha r$, we run tasks of different sizes. 
For each task size $r$, we execute the task repeatedly for $M=1000$ times and obtain the execution times $\{T_{1}, T_{2}, \ldots, T_{M}\}$. The maximum likelihood estimates of $t_0(r)$ and $t_c(r)$  are then given by $\hat{t}_0(r) = \min_{l\in [M]} T_l$ and $\hat{t}_c(r) = \frac{1}{M}\sum_{l=1}^M T_l - \hat{t}_0(r)$, respectively \cite{pan2002maximum,sharma1977estimation}.
With $\hat{t}_0(r)$ and $\hat{t}_c(r)$ for different task sizes $r$, we 
can then estimate the values of $\mu$ and $\alpha$ by using the least squares estimation.
\figref{fig:cum_time} shows the estimated \emph{cumulative distribution function} (CDF) of the processing time of a t2.xlarge instance when task size $r=500$. The estimated $\alpha$ and $\mu$ 
for different types of Amazon EC2 instances are summarized in \tabref{tab:tab008}. These estimated parameters will be used to allocate computation loads for all computing schemes, except the uniform uncoded scheme.
\begin{figure}[!h]
	\begin{center}
		{

\includegraphics[width=0.23\textwidth]{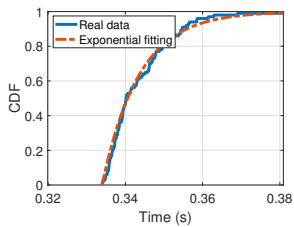}
		
		}
	\end{center}
	\vspace*{-1.0em}
	\caption{ The CDF of the processing time of an Amazon EC2 t2.xlarge instance for computing a task with $r=500$.
}		\label{fig:cum_time}
\end{figure}


\begin{table}[!h] 
	\caption{Estimated computing parameters of different types of Amazon EC2 instances} 
	\label{tab:tab008}
	\centering 
	\begin{tabular}{|>{\centering\arraybackslash} p{1.5cm}| >{\centering\arraybackslash} p{1.5cm}| >{\centering\arraybackslash} p{1.5cm}|}
		\hline 
		\bf Instances & $\mu$ &$\alpha$\\ 
          \hline
         \bf r4.xlarge & 9.42$\times$ $10^4$& 1.75$\times$ $10^{-4}$\\
          \hline
          \bf r4.2xlarge & 9.25$\times$ $10^4$& 1.60$\times$ $10^{-4}$\\
          \hline
        \bf t2.medium  & 2.15$\times$ $10^4$& 5.18$\times$ $10^{-4}$\\
          \hline
        \bf t2.large  & 3.90$\times$ $10^4$& 2.25$\times$ $10^{-4}$\\
          \hline
	\end{tabular}
\end{table}

\subsection{Experimental Results}
To evaluate the performance of the proposed BPCC scheme running on the heterogeneous Amazon clusters, we design three experiments.

\subsubsection{Experiment 1} 
In this experiment, we compare the performance of BPCC with the three benchmark schemes in different scenarios.  For each scenario, we 
run each scheme 100 times and record the mean execution time ($\mathbb{E}[T]$). To evaluate the robustness of these schemes to uncertain disturbances, we introduce unexpected stragglers that are randomly chosen in each run. 
In particular, we randomly select 20\% of the worker nodes to be  stragglers in each run. As stragglers can be slow in computing or returning results (e.g, when communication congestion happens), such stragglers are emulated by delaying the return of computing results such that the computing time observed by the master node is three times of the actual computing time. 
\figref{fig:time_straggler} illustrates the mean execution time of different distributed computing schemes in different scenarios, which also highlights the decoding time required by the coded schemes including BPCC and HCMM. We can see from the figure that the proposed BPCC scheme  outperforms all benchmark schemes in all scenarios. Specifically, the performance improves up to 79\% compared with Uncoded scheme, up to 78\% compared with Load-Balanced scheme, and up to 62\% compared with HCMM. 

\begin{figure}[!h]
\begin{center}
\subfigure[]{
		\includegraphics[width=0.23\textwidth]{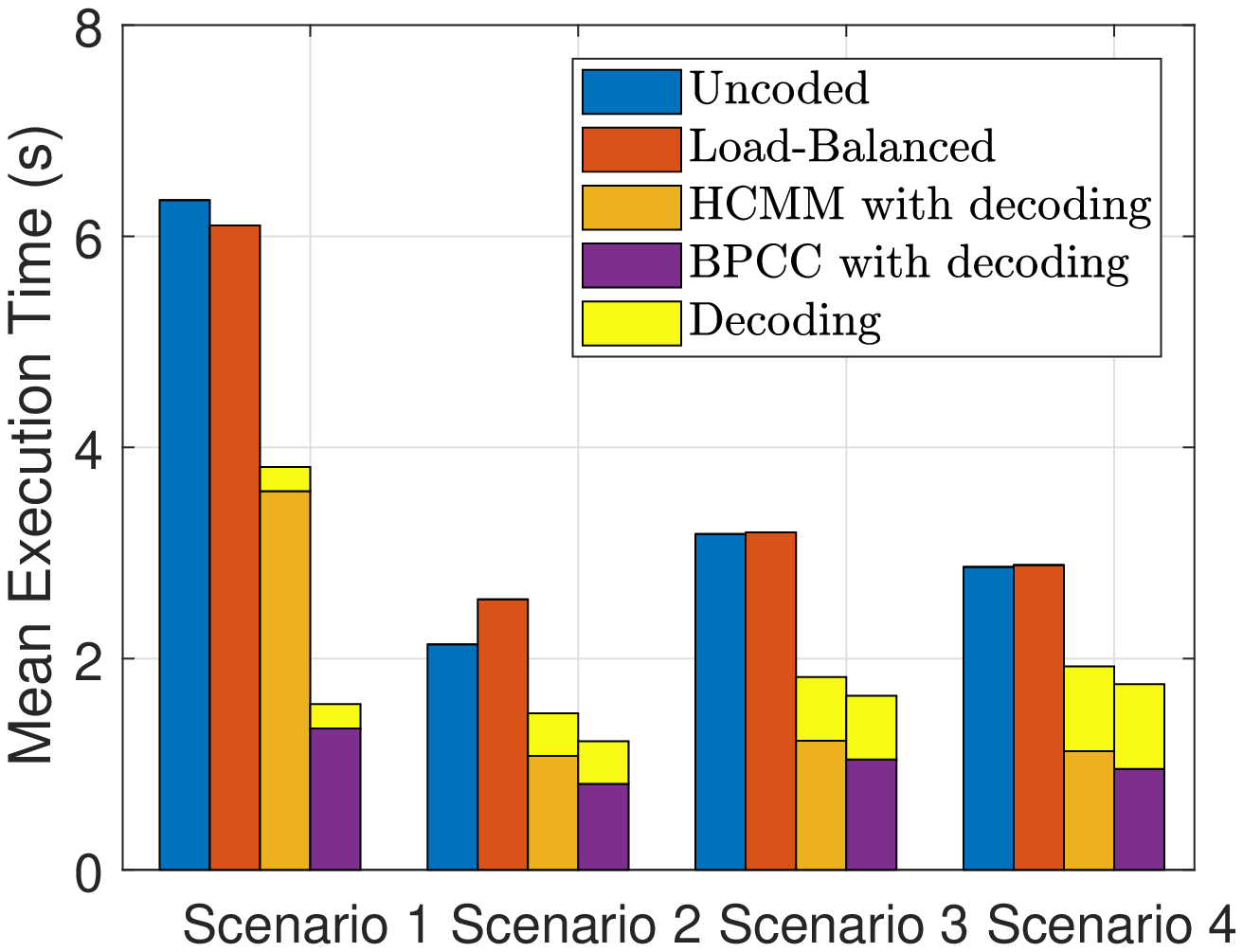}
		\label{fig:time_straggler}}
\subfigure[]{
		\includegraphics[width=0.23\textwidth]{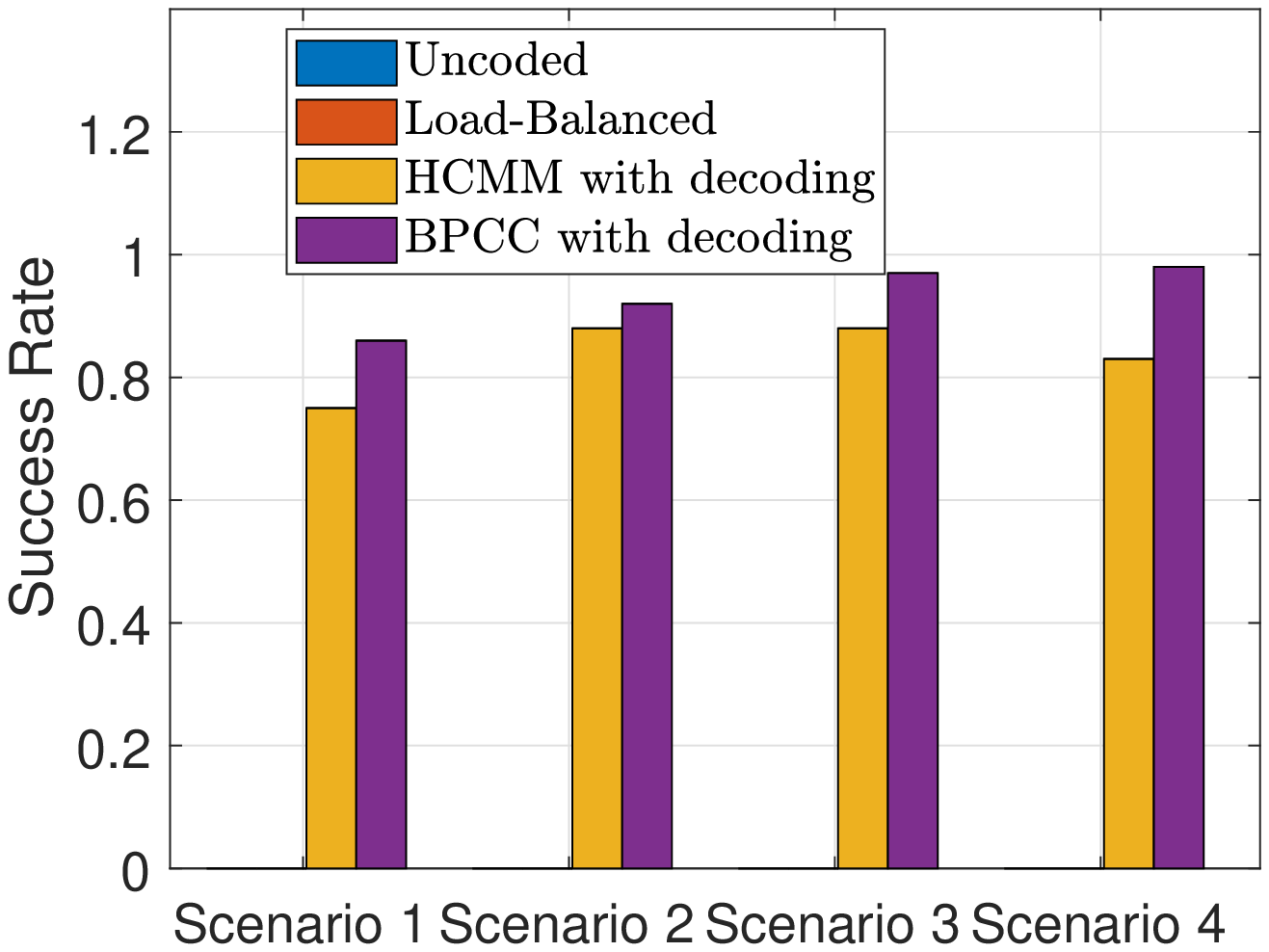}
		\label{fig:success_rate}}
\subfigure[]{
\includegraphics[width=0.23\textwidth]{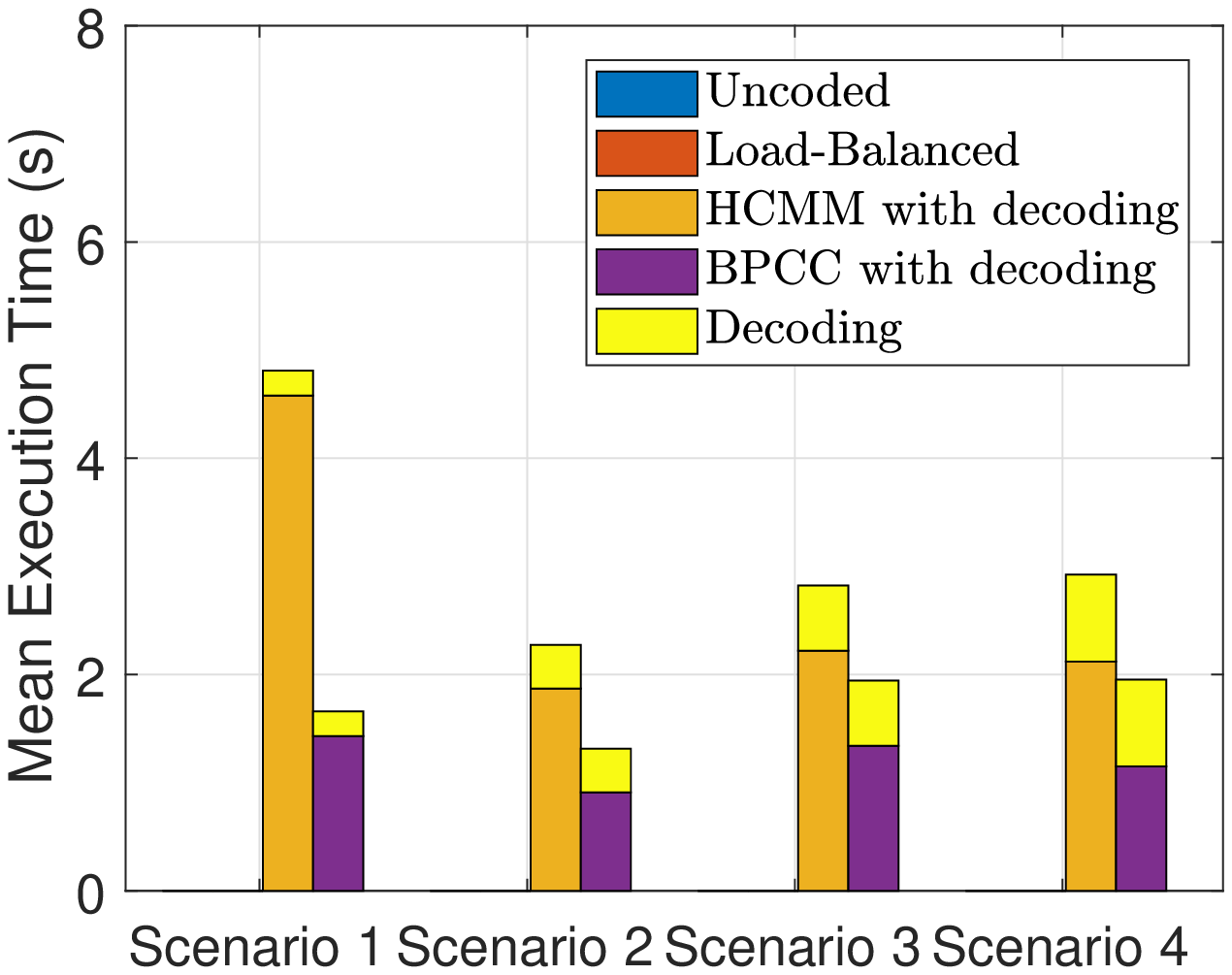}
\label{fig:time_rate}}
\end{center}
	\caption{a) The mean execution time of different schemes in different scenarios at the presence of unexpected stragglers with finite delay. 
	The b) success rate and c) mean execution time of different schemes in different scenarios at the presence of unexpected stragglers with infinite delay.  
	}
\end{figure}

As stragglers can also fail in returning any results (e.g., when nodes/links fail), we also consider such stragglers and emulate them by setting the delay time to infinity. Since no results will be returned by such stragglers, the computation task can fail. \figref{fig:success_rate} shows the success rate (measured by the ratio of successful runs) of each scheme in different scenarios. 
The mean execution time of successful runs is shown  in \figref{fig:time_rate}. As we can see, 
the Uncoded and Load-Balanced schemes fail to complete the task in all runs, as no redundancy is introduced in these schemes. 
Both HCMM and BPCC can  successfully complete the task in most runs, but HCMM has a lower success rate and is less efficient than BPCC.

In \figref{fig:amazon_rows}, we selectively show the average amount of received results over time ($\mathbb{E}[S(t)]$) 
in Scenario 4. As expected, in our BPCC scheme, the master node continuously receives results from the very beginning. However, in other schemes, the master node needs to wait a long time before receiving any result.

\begin{figure}[!h]
\subfigure[]{
\includegraphics[width=0.22\textwidth]{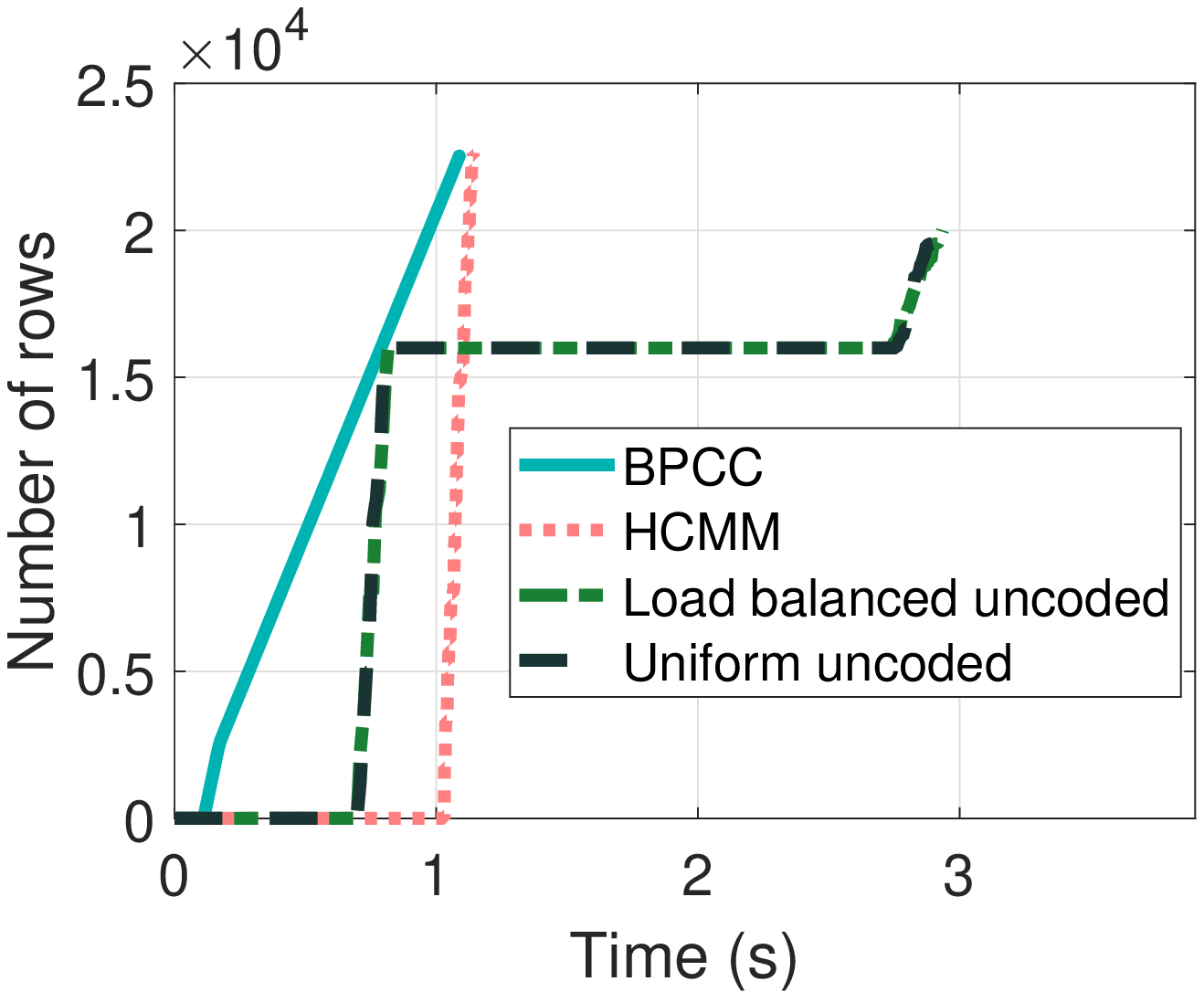}}
\subfigure[]{
\includegraphics[width=0.22\textwidth]{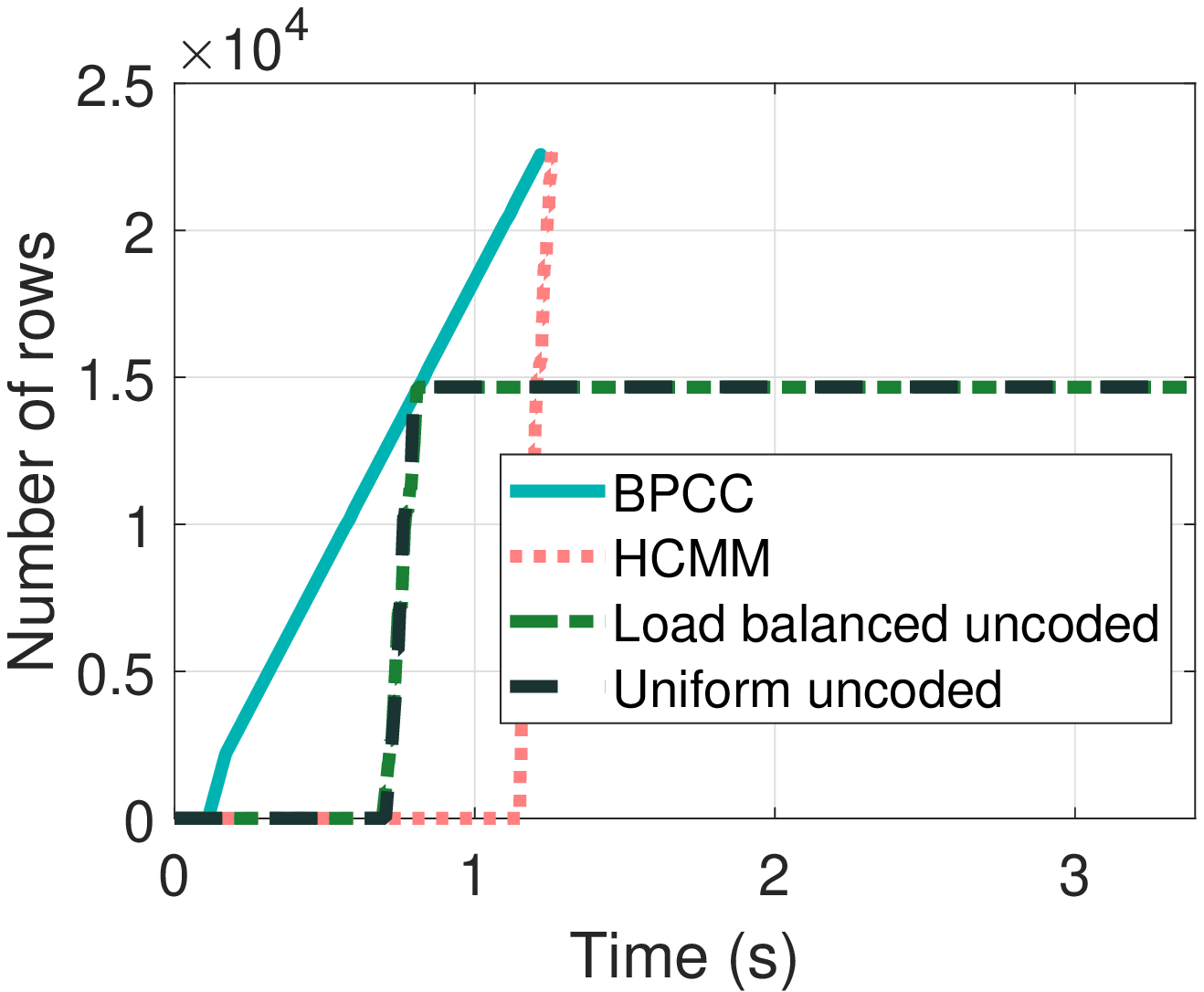}}
  \caption{The average total number of rows of inner product results received by the master node over time for different schemes in Scenario 4 at the presence of unexpected stragglers with a) finite and b) infinite delay.}
  \label{fig:amazon_rows}
\end{figure}

\subsubsection{Experiment 2}

In the second experiment, we study the impact of the number of unexpected stragglers on the performance of the four computation schemes, by varying the percentage of stragglers from 0\% to 60\%. 
Similar as Experiment 1, stragglers can delay in returning results for finite or infinite amount of time.
The mean execution time of each scheme at the presence of different numbers of stragglers with finite delay for Scenario 4 is shown in \figref{fig:time_straggler_prob}. As we can see, when there is no straggler, the Uniform Uncoded scheme and the Load-Balanced Uncoded scheme achieve the best performance, as they do not involve any computation redundancy, compared with the coded schemes.
However, when stragglers exist, our BPCC scheme achieves the best performance, indicating its high robustness to uncertain stragglers. We can also observe from \figref{fig:time_straggler_prob} that
the performances of all schemes degrade with the increase of the number of stragglers. 
Of interest, the performance degradation of the three benchmark schemes slows down when the number of stragglers reaches to a certain value. This is because worker nodes in these schemes won't return any result to the master node until the whole assigned task is completed and all stragglers would delay returning the result for a period that is three times of the task computation time. We also note that the performance of HCMM is even worse than the two uncoded schemes when the percentage of stragglers exceeds $20\%$. This is because each worker node in HCMM is assigned with more computation load, compared with the uncoded schemes, which causes the stragglers in HCMM to wait for a longer time before returning any result. 

In case when stragglers delay in returning results for infinite amount of time, 
the success rate and the mean execution time of each scheme are shown in  \figref{fig:succ_rate_failure} and \figref{fig:time_failure}, respectively. As expected, both the Uncoded and Load-Balanced schemes fail to complete the task. Additionally, the performances of BPCC and HCMM degrade with the increase of the number of stragglers, and BPCC outperforms HCMM. 
\begin{figure}[!h]
\begin{center}
\subfigure[]{
\includegraphics[width=0.23\textwidth]{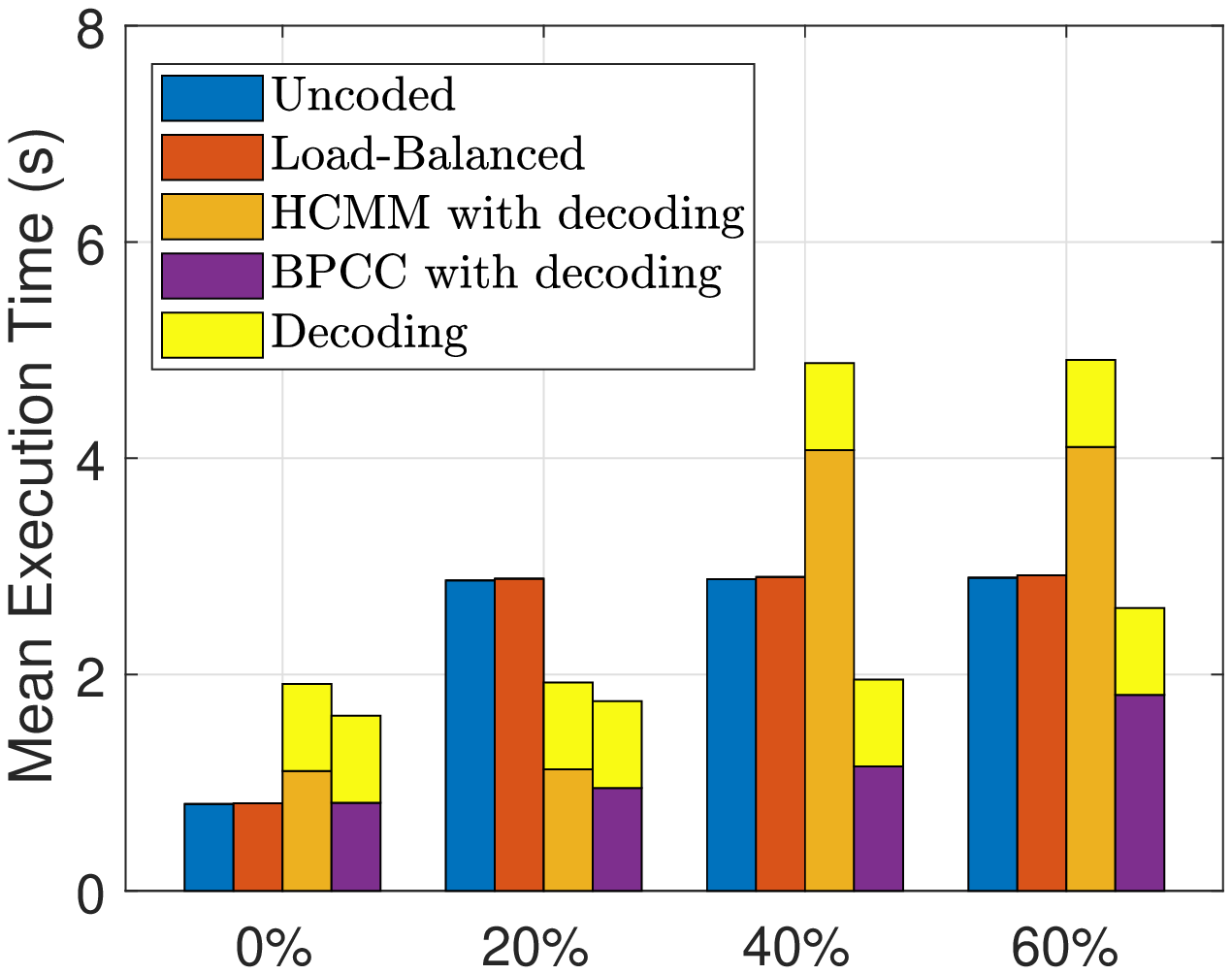}
\label{fig:time_straggler_prob}}
	\subfigure[]{
\includegraphics[width=0.23\textwidth]{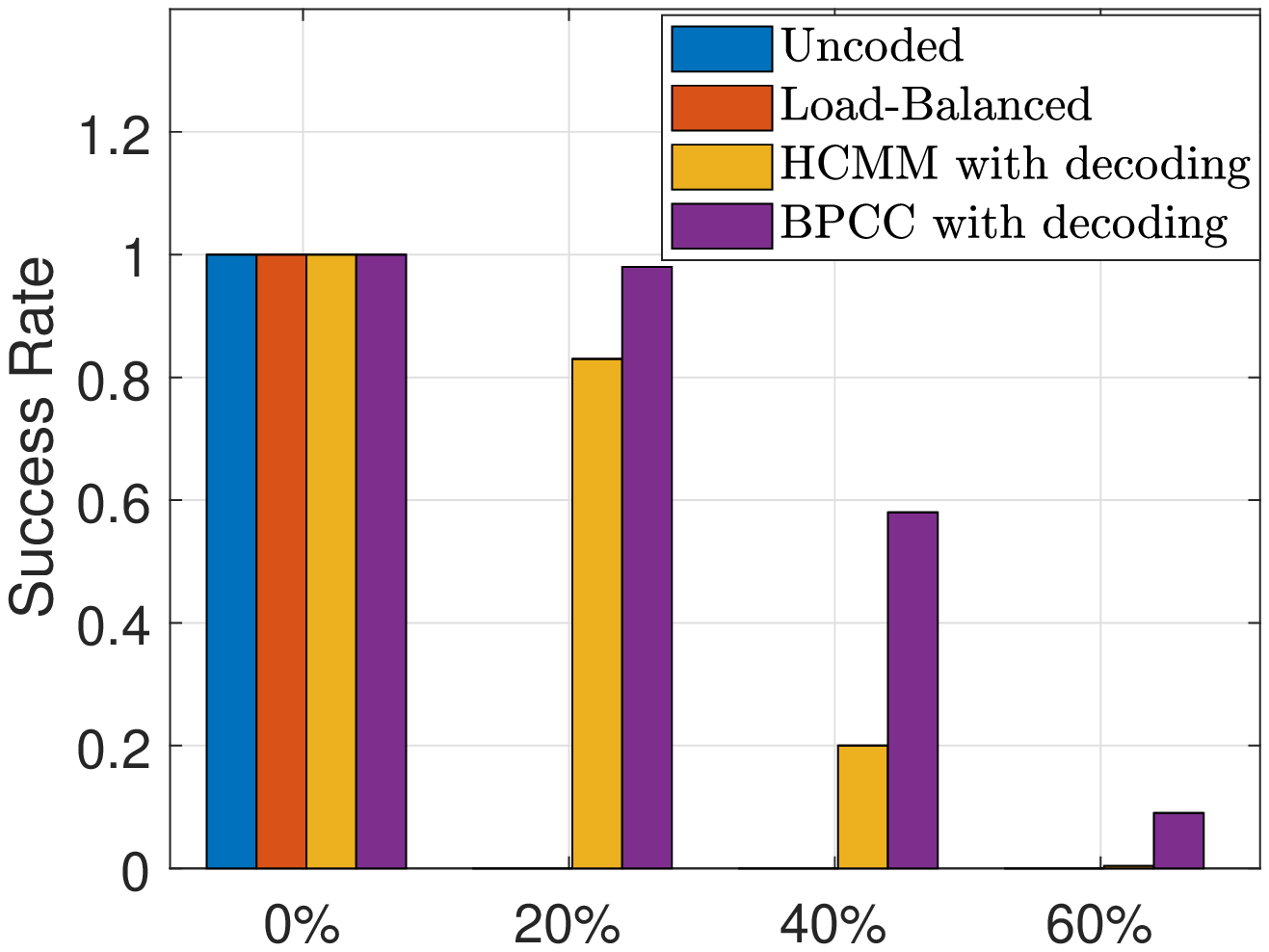}
\label{fig:succ_rate_failure}}
	\subfigure[]{
\includegraphics[width=0.23\textwidth]{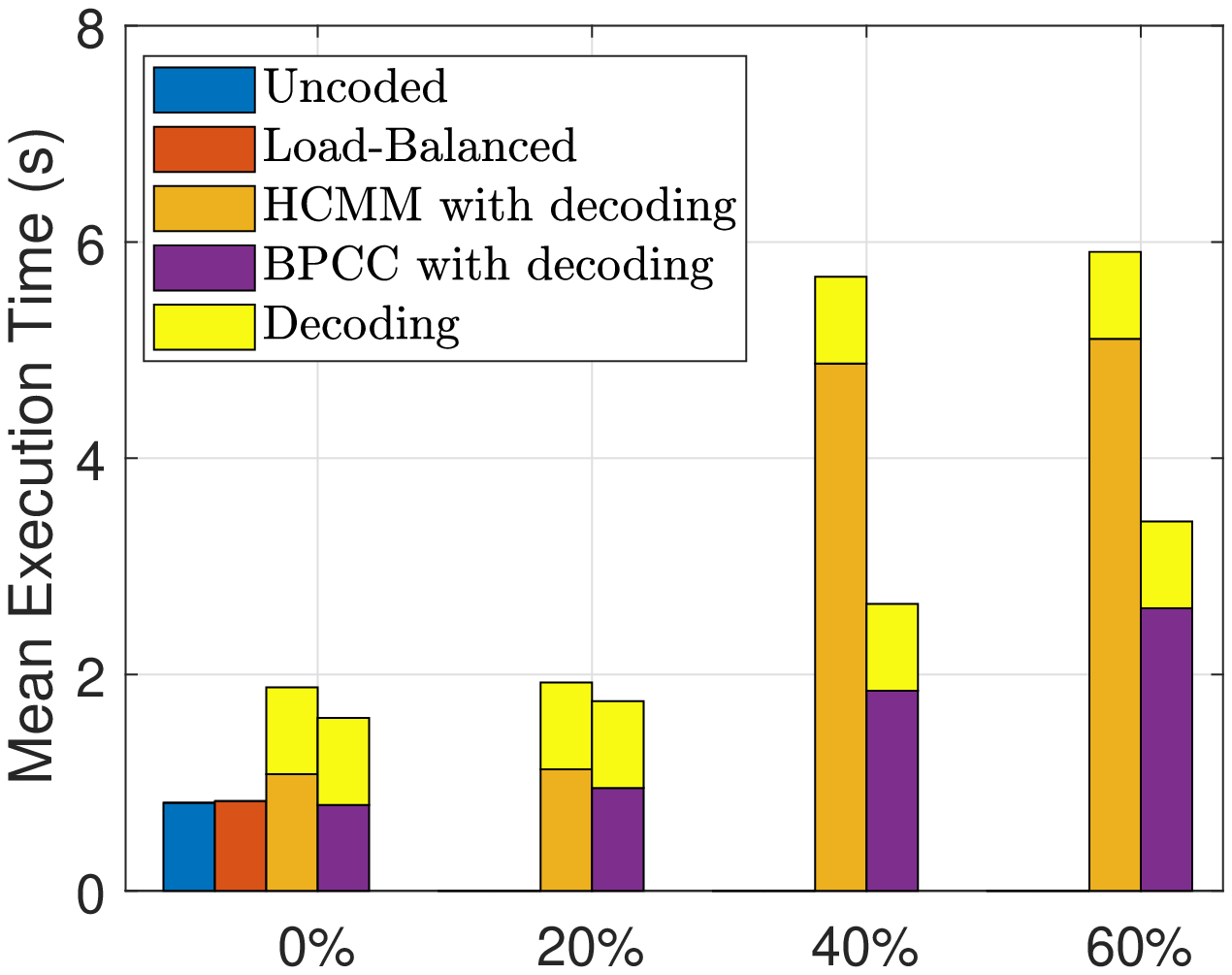}
\label{fig:time_failure}}
	\subfigure[]{
\includegraphics[width=0.23\textwidth]{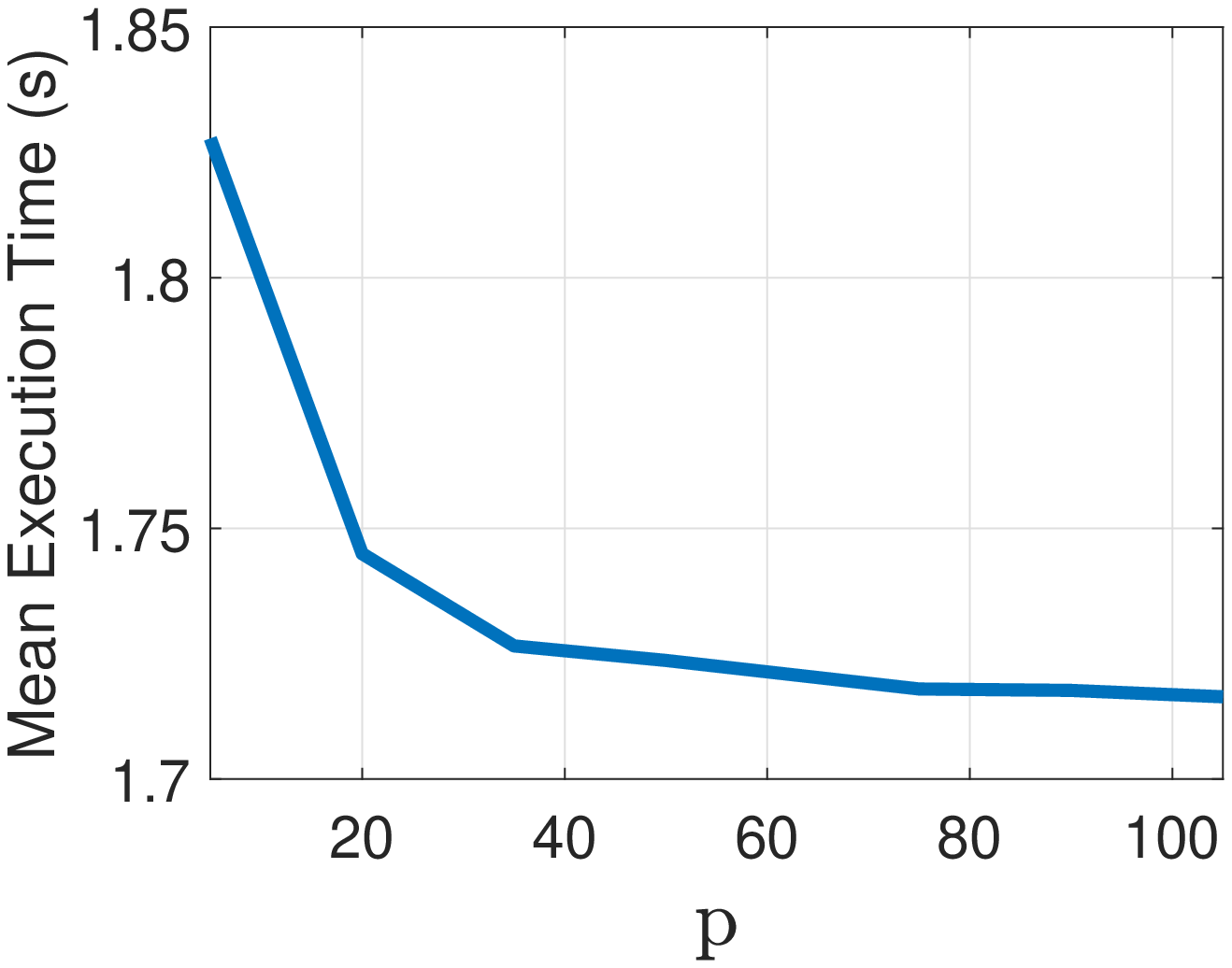}
\label{fig:time_batch}}
	\caption{a) The mean execution time of different schemes in Scenario 4 when different percentages of unexpected stragglers with finite delay are present.  The b) success rate and c) mean execution time of different  schemes in Scenario 4 when different percentages of unexpected stragglers with infinite delay are present. d) The mean execution time of  BPCC at different values of $p$ in Scenario 4.
}
\end{center}
\end{figure}

\subsubsection{Experiment 3}
In the third experiment, we evaluate the impact of the number of batches $p_i$ on  the performance of BPCC running on the Amazon clusters. Similar as the simulation study, we let $p_i = p$, $\forall i \in [N]$, and  vary the value of $p$ from 5 to 100. \figref{fig:time_batch} shows the mean execution time of BPCC  at different values of $p$ under the settings described in Scenario 4 and Experiment 1 when unexpected stragglers with finite delay are present. As expected, the efficiency of BPCC improves with the increase of $p$.

\section{Conclusion}
\label{sec:conclusion}

In this paper, we systematically investigated the design and evaluation of a novel \emph{coded distributed computing} (CDC) framework, namely, \emph{batch-processing based coded computing} (BPCC), for heterogeneous computing systems. The key idea of BPCC is to optimally exploit partial coded results calculated by all distributed computing nodes. Under this BPCC framework, we then investigated a classical CDC problem, matrix-vector multiplication, and formulated an optimization problem for BPCC to minimize the expected task completion time, by configuring the computation load. The BPCC was proved to provide an asymptotically optimal solution and outperform a state-of-the-art CDC scheme for heterogeneous clusters, namely, \emph{heterogeneous coded matrix multiplication} (HCMM). Theoretical analysis reveals the impact of BPCC's key parameter, i.e., number of batches, on its performance, the results of which infer 
the worst and best performance that BPCC can achieve. To evaluate the performance of the proposed BPCC scheme and better understand the impacts of its parameters, we conducted extensive simulation studies and real experiments on 
the Amazon EC2 computing clusters. The simulation and experimental results verify   theoretical results and also demonstrate that the proposed BPCC scheme outperforms all benchmark schemes in  computing systems with uncertain stragglers, in terms of the task completion time and robustness to stragglers. In the future, we will further enhance BPCC by jointly optimizing load allocation and the number of batches to achieve a tradeoff between computational efficiency and storage consumption, and explore its other properties, such as the convergence rate. We will also consider distributed computing systems with mobile computing nodes and other optimization objectives, such as minimizing the energy consumption.

\section*{Acknowledgement}
We would like to thank the National Science Foundation (NSF) under Grants CI-1953048/1730589/1730675/1730570/1730325, CAREER-2048266 and CAREER-1714519  for the support of this work. 

\bibliographystyle{IEEEtran}
\bibliography{IEEEabrv,main}

\clearpage
\newpage
\setcounter{page}{1}

\section*{Appendix}
\subsection*{Proof of Lemma \ref{lemma:lambda}}

To derive the infimum and supremum of $\lambda_i$, we first prove that they are attained at $p_i \rightarrow \infty$ and $p_i=1$, respectively. 
Define the following auxiliary function 
\begin{equation}
h_i(x,y)=(1+\frac{\mu_ix}{y})e^{-\mu_i(\frac{x}{y}-\alpha_i)}, 
\end{equation}
where $x>0$ and $y \in [0,1]$. Note that $h_i(x,y)$ is a monotonically increasing function with respect to $y$, as  $\frac{\partial h_i(x,y)}{\partial y}=\frac{\mu_ix}{y^3}e^{-\mu_i(\frac{x}{y}-\alpha_i)}>0$. 
Also define $P (z) = \{y_0,y_1,y_2,\ldots,y_{z}\}$ as a \textit{partition} of the range of $y$, i.e.,  $[0,1]$,  with $y_k = \frac{k}{z}$, $z \in \mathbb{Z}^+$, $k\in \{0\}\cup[z]$, 
and let 
\begin{align}
M_{k}(x)&=\underset{y}{\sup} \  \{h_i(x, y) | \  y_{k-1} \leq y \leq y_{k}\}\nonumber\\
    &=h_i(x, y_k) = (1+\frac{\mu_ixz}{k})e^{-\frac{\mu_ixz}{k}+\mu_i\alpha_i}
\end{align}

for each $k\in [z]$. We then let $\Delta y_k=y_{k}-y_{k-1}=\frac{1}{z}$, $k\in [z]$, and define
\begin{align}
U(z, x) &= 
\sum_{k=1}^{z} M_{k}(x) \Delta y_{k}\nonumber\\
&=\frac{1}{z}\sum_{k=1}^{z}(1+\frac{\mu_ixz}{k})e^{-\frac{\mu_ixz}{k}+\mu_i\alpha_i}
\end{align}

According to \textbf{Theorem 6.4} in \cite{rudin1964principles}, if there exists another partition $P(z')$ that satisfies $P(z')\supset P(z)$, then 
\begin{equation}
U\left(z', x\right) < U(z, x)
\end{equation}
We can then derive that 
\begin{equation}
\label{eq:range_of_u}
    U(\infty, x) < U(z, x) \leq U(1, x), 
\end{equation}
as $P(z) \supseteq P(1)$ and $ P(z) \subset P(\infty)$, $\forall z \in \mathbb{Z}^+$. The right equality holds when $z = 1$. Furthermore, as $\frac{\partial U(z, x)}{\partial x}=\frac{1}{z}\sum_{k=1}^{z}-\frac{u_i^2zx}{k^2}e^{-\frac{u_izx}{k}+\mu_i\alpha_i}<0$, $U(z,x)$ is a monotonically decreasing function with respect to $x$.

Now let $x = \lambda_i$ and $z = p_i$.  We then have 
\begin{equation}
\label{eq:u_p_lambda}
U(p_i, \lambda_i) = \frac{1}{p_i}\sum_{k=1}^{p_i}(1+\frac{\mu_i\lambda_ip_i}{k})e^{-\frac{\mu_i\lambda_ip_i}{k}+\mu_i\alpha_i} = 1,
\end{equation}
according to \equref{eq:lambda}. Using proof of contradiction, we can then derive that the infimum and supremum of $\lambda_i$ are attained when $p_i \rightarrow \infty$ and $p_i = 1$, respectively. Specifically, suppose $\sup \lambda_i = \bar{\lambda}$ is attained at $p_i = \bar{p} > 1$ and $\lambda^*$ is attained at $p_i = 1$, we then have $\bar{\lambda} \geq \lambda^*$ and $U(1,\bar{\lambda}) \leq U(1, \lambda^*)$ as $U(z,x)$ is a monotonically decreasing function with respect to $x$. Since $U(\bar{p}, \bar{\lambda}) = U(1, \lambda^*)$ according to \equref{eq:u_p_lambda},  we have $U(1,\bar{\lambda}) \leq U(\bar{p},\bar{\lambda})$, which contradicts with $U(1, \bar{\lambda}) > U(\bar{p},\bar{\lambda})$ according to \equref{eq:range_of_u}. Therefore, $\sup \lambda_i$ must be attained at $p_i=1$. Similarly, we can prove that the infimum of $\lambda_i$ is attained when $p_i\rightarrow \infty$.

Next, we find the specific formulas for the infimum and supremum of $\lambda_i$. 
In particular, the infimum of $\lambda_i$ is attained when $p_i \rightarrow \infty$, which can be found by solving \equref{eq:u_p_lambda}. Specifically, 
\begin{equation}
\label{eq:intqinfty}
\lim_{p_i\rightarrow \infty}\frac{1}{p_i}\sum_{k=1}^{p_i}(1+\frac{\mu_i\lambda_ip_i}{k})e^{-\frac{\mu_i\lambda_ip_i}{k}+\mu_i\alpha_i} = 1
\end{equation}
which is equivalent to solving
\begin{equation}
\label{eq:lmm1a}
\int_{0}^{1}(1+\frac{\mu_i\lambda_i}{x})e^{\frac{-\mu_i\lambda_i}{x}}dx=e^{-\mu_i\alpha_i}
\end{equation}
Define variable $v = \frac{\mu_i\lambda_i}{x}$, the term in the left side of the above equation can be simplified as

    \begin{align}
    \label{eq:lmm1b}
&\int_{0}^{1}\left(1+\frac{\mu_i\lambda_i}{x}\right) e^{-\frac{\mu_i\lambda_i}{x}} d x \nonumber\\ 
=&-\mu_i\lambda_i \int_{\mu_i\lambda_i}^{\infty}(1+v) e^{-v} d\left(\frac{1}{v}\right) \nonumber\\
=&-\mu_i\lambda_i\left[(\frac{1}{v}+1)e^{-v}|^{\infty}_{\mu_i\lambda_i}-\int_{\mu_i\lambda_i}^{\infty} \frac{1}{v} d ((1+v)e^{-v})\right]\nonumber\\
=&(1+\mu_i\lambda_i) e^{-\mu_i\lambda_i}-\mu_i\lambda_i \int_{\mu_i\lambda_i}^{\infty} e^{-v} d v\nonumber\\
=&e^{-\mu_i\lambda_i}
\end{align}

Combining \equref{eq:lmm1a} and \equref{eq:lmm1b}, we can get $\lambda_i=\alpha_i$, when $p_i \rightarrow \infty$.

Similarly, we can obtain the supremum of $\lambda_i$ by solving \equref{eq:u_p_lambda} and setting $p_i = 1$. Specifically, 
we aim to solve the following equation
\begin{equation}
(1+\mu_i\lambda_i)e^{-\mu_i\lambda_i+\mu_i\alpha_i} = 1
\end{equation}
According to \cite{corless1996lambertw}, the solution to $a^x  = x +b$ is $x = \frac{-b-W(-a^{-b}\ln a)}{\ln a}$, where $W(\cdot)$ is the Lambert W function. We can then get the following solution to the above equation  

\[\lambda_i=\frac{W(-e^{-\alpha_i\mu_i-1})+1}{-\mu_i},\] 
by letting $x = \mu_i\lambda_i-\alpha_i\mu_i$, $a = e$ and $b = \alpha_i\mu_i+1$.

\subsection*{Proof of Lemma \ref{lemma:lm1}}
To prove $\tau^{*}-o(1) < t^{*} \leq \tau^{*}+o(1)$ in Lemma \ref{lemma:lm1},
we will apply the McDiarmid's inequalities \cite{combes15extension} described as follows. 
For a set of independently distributed random variables, $x_1, x_2, \ldots, x_N \in \mathcal{X}$, if a function $f:~\mathcal{X}^N~\rightarrow~\mathbb{R}$ satisfies the Lipschitz condition: 
\begin{equation}
\left|f\left(x_{1}, \ldots, x_{i}, \ldots, x_{N}\right)-f\left(x_{1}, \ldots, x_{i}^{\prime}, \ldots, x_{N}\right)\right| \leq c_{i},\nonumber 
\end{equation} 
for all $x_{1}, \ldots, x_{N}, x_{i}^{\prime} \in \mathcal{X}$, then, for any $\sigma > 0$, 
\begin{eqnarray}
    \operatorname{Pr}\left[\mathbb{E}[f(X)] - f(X) \geq \sigma \right] 
    &\leq& e^ {-\frac{2 \sigma ^{2}}{\sum_{i=1}^{N} c_{i}^{2}}} \label{eq:McD1}\\
    \operatorname{Pr}\left[f(X) - \mathbb{E}[f(X)] \geq \sigma \right] 
    &\leq& e^ {-\frac{2 \sigma ^{2}}{\sum_{i=1}^{N} c_{i}^{2}}}\label{eq:McD2}
    \end{eqnarray}
where $X = (x_1, x_2, \ldots, x_N) \in \mathcal{X}^N$. To apply the above McDiarmid's inequalities in our problem, 
we define $x_i(t)=s_i(t) b_i$, $\forall i \in [N]$, and further define
\begin{equation}
    X(t) = \sum_{i=1}^{N} x_i(t) = \sum_{i=1}^{N} s_i(t) b_i= S(t).\nonumber
\end{equation}
Clearly, under such definitions, we have $c_i(t) = \ell_i(t)$.
To facilitate further discussions, we let $\delta = \Theta\left( \frac{\log N}{\sqrt{N}} \right) = o(1)$, $\epsilon = \delta^2$. We also summarize the asymptotic scales for the parameters: $r = \Theta(N)$, $\lambda_i = \Theta(1)$, $\beta = \Theta(N)$, $\tau^* = \Theta(1)$, $\ell_i^*(\tau^*)=\Theta(1)$. 

Now, let's prove the first inequality in Lemma \ref{lemma:lm1}: 
$\tau^{*}-o(1) < t^{*}$. Define $t = \tau^* - \delta$. According to \equref{eq:tau}, we can derive 
$$
\beta t = \beta \tau^* - \beta \delta = r - \beta \delta.
$$
Applying the second McDiarmid's inequality in \equref{eq:McD2}, we can then derive
\begin{align}
\label{ineq:L1a}
&\operatorname{Pr}\left[S^{*}(t) \geq r + \epsilon\right] \nonumber\\
=& \operatorname{Pr}\left[S^{*}(t) \geq \mathbb{E}[S^{*}(t)] - \mathbb{E}[S^{*}(t)] + r + \epsilon\right] \nonumber\\
=
& \operatorname{Pr}\left[S^{*}(t) \geq \mathbb{E}[S^{*}(t)] - \beta t + r + \epsilon\right] \nonumber\\
=
& \operatorname{Pr}\left[S^{*}(t) - \mathbb{E}[S^{*}(t)] \geq \beta \delta + \epsilon\right] \nonumber\\
\leq
& e^{ -\frac{2(\beta \delta + \epsilon)^2}{\sum_{i=1}^{N} (\ell_i^*(t))^2} }
\end{align}
Using the asymptotic scales of parameters in the right hand side of  \ineqref{ineq:L1a}, we have
\begin{equation}
\operatorname{Pr}\left[S^{*}(t) \geq r + \epsilon\right] \leq \Theta(e^{-\log^2 N}).
\end{equation}
Consequently, we have 
\begin{equation} \label{ineq:L1b}
\operatorname{Pr}\left[S^{*}(t) < r + \epsilon\right] > 1 - \Theta(e^{-\log^2 N}) = \Theta(1).
\end{equation}
\ineqref{ineq:L1b} shows that, if $t^* \leq \tau^* - \delta $, then the probability $\operatorname{Pr}\left[S^{*}(t^*) < r + \epsilon\right]$ is not $o(\frac{1}{N})$, which does not satisfy the constraint in $\mathcal{P}_{\mathrm{alt}}^{(2)}$. Therefore, $t^* > \tau^* - o(1)$.

Next, we prove the second inequality in \lemmaref{lemma:lm1}: $t^{*} \leq \tau^{*}+o(1)$. Define $t' = \tau^* + \delta$. According to \equref{eq:tau}, we can derive 
$$
\beta t' = \beta \tau^* + \beta \delta = r + \beta \delta.
$$
Applying the first McDiarmid's inequality in \equref{eq:McD1}, we can then derive
\begin{align}  
\label{ineq:L1c}
&\operatorname{Pr}\left[S^{*}(t') \leq r - \epsilon\right] \nonumber\\
=
& \operatorname{Pr}\left[S^{*}(t') \leq \mathbb{E}[S^{*}(t')] - \mathbb{E}[S^{*}(t')] + r - \epsilon\right] \nonumber\\
=
& \operatorname{Pr}\left[S^{*}(t') \leq \mathbb{E}[S^{*}(t')] - \beta t' + r - \epsilon\right] \nonumber\\
=
& \operatorname{Pr}\left[\mathbb{E}[S^{*}(t')] - S^{*}(t') \geq \beta \delta + \epsilon\right] \nonumber\\
\leq
& e^{ -\frac{2(\beta \delta + \epsilon)^2}{\sum_{i=1}^{N} (\ell_i^*(t))^2} }
\end{align}

Using the asymptotic scales of parameters in the right hand side of  \ineqref{ineq:L1c}, we have
\begin{equation}\nonumber 
\operatorname{Pr}\left[S^{*}(t') \leq r - \epsilon\right] \leq \Theta(e^{-\log^2 N}).
\end{equation}
\ineqref{ineq:L1c} shows that $t' = \tau^* + \delta$ can satisfy the constraint in $\mathcal{P}_{\mathrm{alt}}^{(2)}$. Since $t^*$ is the minimal time that satisfies the constraint, $t^* \leq t' =\tau^* + \delta$. Therefore, $t^{*} \leq \tau^{*}+o(1)$.

\subsection*{Proof of Theorem \ref{thm:1}}
The asymptotic optimality of BPCC shown in \equref{eq:thm2}  can be proved by showing that \begin{equation}
\label{eq:alternative_proof}
    t^{*}-o(1) {\leq} \mathbb{E}\left[T_{\mathrm{OPT}}\right] {\leq}\mathbb{E}\left[T_{\mathrm{BPCC}}\right] {\leq} t^{*}+o(1) \nonumber
\end{equation}
Since $\mathbb{E}\left[T_{\mathrm{OPT}}\right] \leq \mathbb{E}\left[T_{\mathrm{BPCC}}\right]$ is straightforward because $\mathbb{E}[T_\mathrm{OPT}]$ is the optimal value of $P'_\mathrm{main}$, we  use two steps, inspired by \cite{reisizadeh19coded}, to prove the other two inequalities.

\medskip
\noindent\textbf{Step 1: To prove $t^{*}-o(1) \leq \mathbb{E}\left[T_{\mathrm{OPT}}\right]$.}

Let $\boldsymbol{\ell}_{\mathrm{OPT}}=\left(\ell_{\mathrm{OPT}, 1}, \cdots, \ell_{\mathrm{OPT}, N}\right)$ be the optimal load allocation obtained by solving $\mathcal{P'}_{\mathrm{main}}$ and let $S_{\mathrm{OPT}}(t)$ be the amount of results received by the master node by time $t$ under load allocation $\boldsymbol{\ell}_{\mathrm{OPT}}$. The inequality above can be proved by showing the following inequalities:
\begin{equation}
\label{eq:eq12345}
    t^{*}-\delta_{2}-\delta_{1}\stackrel{(b)}{ \leq} \tau^{*}_{\mathrm{OPT}}-\delta_{1} \stackrel{(a)}{ \leq}  \mathbb{E}\left[T_{\mathrm{OPT}}\right]\nonumber
\end{equation}
where $\tau^{*}_{\mathrm{OPT}}$ is the solution to $\mathbb{E}\left[S_{\mathrm{OPT}}(t)\right]=r$, and $\delta_{1}$ and $\delta_{2}$ are both $\Theta\left(\frac{\log N}{\sqrt{N}}\right)=o(1)$.
To prove Ineq.~($a$), we first define an auxiliary function $g_i(t)$ for each node $i$ as 
\begin{equation}
g_i(t)=1-\frac{1}{p_i}\sum_{k=1}^{p_i}e^{-\mu_i(\frac{t{p_i}}{k\ell_{OPT,i}}-\alpha_i)}.\nonumber
\end{equation}
According to \equref{eq:eqest}, we have 
\begin{equation}
    \mathbb{E}\left[S_{\mathrm{OPT}}(t)\right]
    =\sum_{i=1}^{N} \ell_{\mathrm{OPT}, i}g_i(t)\\ \nonumber
\end{equation}
and
\begin{equation}
\begin{aligned} 
&r-\mathbb{E}[S_{\mathrm{OPT}}\left(\tau^{*}_{\mathrm{OPT}}-\delta_{1}\right)] \\
=~&
\mathbb{E}\left[S_{\mathrm{OPT}}(\tau^{*}_{\mathrm{OPT}})\right]-\mathbb{E}[S_{\mathrm{OPT}}\left(\tau^{*}_{\mathrm{OPT}}-\delta_{1}\right)]\\
=~&
\sum_{i=1}^{N} \ell_{\mathrm{OPT}, i}\left[g_i(\tau^{*}_{\mathrm{OPT}})-g_i(\tau^{*}_{\mathrm{OPT}}-\delta_{1})\right] \\ 
=~&
\sum_{i=1}^{N} \ell_{\mathrm{OPT}, i}\left(\frac{d g_i(\tau^{*}_{\mathrm{OPT}})}{d \tau^{*}_{\mathrm{OPT}}}  \delta_{1}+\mathcal{O}\left(\delta_{1}^{2}\right)\right) \\
\end{aligned}\nonumber
\end{equation}
According to our previous discussions, $r=\Theta(N)$, so $\ell_{OPT,i}=\Theta(1)$, $\forall i \in [N]$. Therefore, $g_i(t)$ does not change with $N$, i.e., $g_i(t)=\Theta(1)$. We then have
\begin{equation}
    \begin{aligned} r-\mathbb{E}\left[S_{\mathrm{OPT}}\left(\tau^{*}_{\mathrm{OPT}}-\delta_{1}\right) \right]  &=\Theta\left(N \delta_{1}\right)+\mathcal{O}\left(N \delta_{1}^{2}\right)\\
    &=\Theta\left(N \delta_{1}\right) \end{aligned}\nonumber
\end{equation}
By using the McDiarmid's inequality in \equref{eq:McD2}, we have 
\begin{equation}
\begin{aligned} 
    &\operatorname{Pr}\left[ S_{\mathrm{OPT}} \left(\tau^{*}_{\mathrm{OPT}}-\delta_{1}\right) \geq r \right]\\ 
    =&\operatorname{Pr}\{S_{\mathrm{OPT}}\left(\tau^{*}_{\mathrm{OPT}}-\delta_{1}\right)-\mathbb{E}\left[S_{\mathrm{OPT}}\left(\tau^{*}_{\mathrm{OPT}}-\delta_{1}\right)\right]\geq\\ 
    & \ \ r-\mathbb{E}\left[S_{\mathrm{OPT}}\left(\tau^{*}_{\mathrm{OPT}}-\delta_{1}\right)\right\}\\
    \leq& ~e^ {-\frac{2\left(\mathbb{E}\left[S_{\mathrm{OPT}}\left(\tau^{*}_{\mathrm{OPT}}-\delta_{1}\right)\right]-r\right)^{2}}{\sum_{i=1}^{N} \ell_{\mathrm{OPT}, i}^{2}}}\\
    =&~e^{{-\Theta\left(N \delta_{1}^{2}\right)}}=o\left(\frac{1}{N}\right),
\end{aligned}\nonumber
\end{equation}
which implies that $\mathbb{E}[T_{\mathrm{OPT}}]\geq \tau^{*}_{\mathrm{OPT}}-\delta_1$.

Next, we proceed to prove Ineq.~($b$). Since $\mathbb{E}[S^{*}(t)]$ is the optimal value of $P_{\mathrm{alt}}^{(1)}$, we have $\mathbb{E}[S^{*}(t)]\geq \mathbb{E}[S_{\mathrm{OPT}}(t)]$.  Moreover, as $\mathbb{E}[S^{*}(\tau^{*})]= r$ according to \equref{eq:alter2}, $\mathbb{E}[S_\mathrm{OPT}(\tau^*_{\mathrm{OPT}})]=r$, and both $\mathbb{E}[S^{*}(t)]$ and $\mathbb{E}[S_{\mathrm{OPT}}(t)]$ increase monotonically with $t$, we can derive
\begin{equation}
    \tau^{*}_{\mathrm{OPT}} \geq \tau^{*} \nonumber
    \label{eq:eq567}
\end{equation}
According to \lemmaref{lemma:lm1}, 
\begin{equation}
   \tau^{*} \geq t^{*}-\delta_{2} \nonumber
\end{equation}
Therefore,
\begin{equation}
\begin{gathered}
   \tau^{*}_{\mathrm{OPT}} - \delta_1 \geq t^{*}-\delta_1-\delta_{2} \nonumber
\end{gathered}
\end{equation}
We have now proved $t^{*}-o(1) \leq \mathbb{E}\left[T_{\mathrm{OPT}}\right]$.

\medskip
\noindent\textbf{Step 2: To prove $\mathbb{E}\left[T_{\mathrm{BPCC}}\right] \leq t^{*}+o(1)$.}

Let $T_{\text{max}}$ be a random variable that denotes the time required for all worker nodes to complete their tasks assigned using BPCC. Let $\mathcal{E}_{1}=\left\{T_{\text{max} }>\Theta(N)\right\}$ and $\mathcal{E}_{2}=\left\{T_{\mathrm{BPCC}}>t^{*}\right\}$ be two events. $\mathbb{E}[T_{\mathrm{BPCC}}]$ can then be computed by

\begin{align} 
\label{eq:eqtwoevents}
\mathbb{E}\left[T_{\mathrm{BPCC}}\right]=
& 
\mathbb{E}\left[T_{\mathrm{BPCC}} | \mathcal{E}_{1}\right]      \operatorname{Pr}\left[\mathcal{E}_{1}\right] \nonumber\\ &
+ 
\mathbb{E}\left[T_{\mathrm{BPCC}} | \mathcal{E}_{1}^{c} \cap \mathcal{E}_{2}\right] \operatorname{Pr}\left[\mathcal{E}_{1}^{c} \cap \mathcal{E}_{2}\right] \nonumber\\ &
+
\mathbb{E}\left[T_{\mathrm{BPCC}} | \mathcal{E}_{1}^{c} \cap \mathcal{E}_{2}^{c}\right] \operatorname{Pr}\left[\mathcal{E}_{1}^{c} \cap \mathcal{E}_{2}^{c}\right] 
\end{align}

The first term in the right hand side of \equref{eq:eqtwoevents} can be written as

\begin{align} 
&\mathbb{E}\left[T_{\mathrm{BPCC}} | \mathcal{E}_{1}\right] \operatorname{Pr}\left[\mathcal{E}_{1}\right]\nonumber\\
=~& 
\mathbb{E}\left[T_{\mathrm{BPCC}} | T_{\max }>\Theta(N)\right] \times \operatorname{Pr}\left[T_{\max }>\Theta(N)\right] \nonumber\\ 
\leq ~&
\mathbb{E}\left[T_{\max } | T_{\max }>\Theta(N)\right] \times \operatorname{Pr}\left[T_{\max }>\Theta(N)\right] \nonumber\\
=~&
\int_{\Theta(N)}^{\infty} t f_{\max }(t) d t,
\end{align}

where $f_\text{{max}}(t)$ is the probability density function (PDF) of $T_{\text{max}}$. A stochastic upper bound of $T_{\text{max}}$ can be found by using $N$ worker nodes that all take the smallest straggling parameter $\min\{\mu_i\}$ and the largest shift parameter $\max\{\alpha_i\}$. 
Using the PDF of the maximum of $N$ i.i.d. exponential random variables, we then have
\begin{align}
\label{eq:thirdterm2}
&\mathbb{E}\left[T_{\mathrm{BPCC}} | \mathcal{E}_{1}\right] \operatorname{Pr}\left[\mathcal{E}_{1}\right] \nonumber\\\leq&
\int_{\Theta(N)}^{\infty} t f_{\max }(t) d t \nonumber\\
\leq &\int_{\Theta(N)}^{\infty} tN k_{1} e^{-k_{1} t}\left(1-e^{-k_{1} t}\right)^{N-1} d t \nonumber\\ 
\leq& \int_{\Theta(N)}^{\infty} N k_{1} t e^{-k_{1} t} d t\nonumber\\
=&-N(t+\frac{1}{k_1})e^{-k_1t} |^{\infty}_{t=\Theta(N)}\nonumber\\
=&~o(1)
\end{align}
where $k_1$ is a constant, i.e., $k_1=\Theta(1)$.

The second term in the right hand side of \equref{eq:eqtwoevents} can be written as
    \begin{align}
    \label{eq:secondTerm}
    &\mathbb{E}[T_{\mathrm{BPCC}} | \mathcal{E}_{1}^{c} \cap \mathcal{E}_{2}]
    \operatorname{Pr}\left[\mathcal{E}_{1}^{c} \cap \mathcal{E}_{2}\right] \nonumber\\
    =&~\mathbb{E}\left[T_{\mathrm{BPCC}} | T_{\max } \leq \Theta(N), T_{\mathrm{BPCC}}>t^{*}\right] \nonumber\\&
    \times \operatorname{Pr}\left[T_{\max } \leq \Theta(N), T_{\mathrm{BPCC}}>t^{*}\right] \nonumber\\ 
    \leq&~ \mathbb{E}\left[T_{\max } | T_{\max } \leq \Theta(N), T_{\mathrm{BPCC}}>t^{*}\right]\nonumber\\ & 
    \times \operatorname{Pr}\left[T_{\mathrm{BPCC}}>t^{*}\right]\end{align}
where $\mathbb{E}\left[T_{\max } | T_{\max } \leq \Theta(N), T_{\mathrm{BPCC}}>t^{*}\right]$ can be computed by
\begin{equation}
\begin{aligned}
&\mathbb{E}[T_{\max } | T_{\max } \leq \Theta(N), T_{\mathrm{BPCC}}>t^{*}] \\
=&~\frac{1}{\operatorname{Pr}\left[T_{\max } \leq \Theta(N), T_{\mathrm{BPCC}}>t^{*}\right]} \\ & 
\times \int_{t_{1}=0}^{\Theta(N)} \int_{t_{2}=t^{*}}^{\infty} t_{1} d \operatorname{Pr}\left[T_{\max } \leq t_{1}, T_{\mathrm{BPCC}} \leq t_{2}\right] \\ 
 \leq&~ \frac{\Theta(N)}{\operatorname{Pr}\left[T_{\max } \leq \Theta(N), T_{\mathrm{BPCC}}>t^{*}\right]} \\ & 
 \times \int_{t_{1}=0}^{\Theta(N)} \int_{t_{2}=t^{*}}^{\infty} d \operatorname{Pr}\left[T_{\max } \leq t_{1}, T_{\mathrm{BPCC}} \leq t_{2}\right] \\ 
=&~\Theta(N)
\end{aligned}\nonumber
\end{equation}
Since the master node receives at least $r$ rows of inner product results by time $T_{\mathrm{BPCC}}$, we have $S^{*}(T_{\mathrm{BPCC}})\geq r$. Next, since $S^*(t)$ is a monotonically increasing function with respect to time $t$, we can derive that, if $S^{*}(t^*)< r$, then $T_{\mathrm{BPCC}} > t^*$, which leads to
\begin{equation}\operatorname{Pr}\left[T_{\mathrm{BPCC}}>t^{*}\right] \leq \operatorname{Pr}\left[S^{*}\left(t^{*}\right)<r\right]=o\left(\frac{1}{N}\right)\nonumber
\end{equation}
Therefore, \equref{eq:secondTerm} can be written as
\begin{eqnarray}
{\mathbb{E}\left[T_{\mathrm{BPCC}} | \mathcal{E}_{1}^{c} \cap \mathcal{E}_{2}\right] \operatorname{Pr}\left[\mathcal{E}_{1}^{c} \cap \mathcal{E}_{2}\right]} &\leq&  {\Theta(N) \cdot o\left(\frac{1}{N}\right)}\nonumber\\ 
&=& o(1) \label{eq:second_term}
\end{eqnarray}

The third term in the right hand side of  \equref{eq:eqtwoevents} can be written as:
\begin{align} \label{eq:real_thirdTerm}&\mathbb{E}[T_{\mathrm{BPCC}} | \mathcal{E}_{1}^{c} \cap \mathcal{E}_{2}^{c}] \operatorname{Pr}\left[\mathcal{E}_{1}^{c} \cap \mathcal{E}_{2}^{c}\right] \nonumber\\=&~ \mathbb{E}\left[T_{\mathrm{BPCC}} | T_{\max } \leq \Theta(N), T_{\mathrm{BPCC}} \leq t^{*}\right] \nonumber\nonumber\\ & 
\times \operatorname{Pr}\left[T_{\max } \leq \Theta(N), T_{\mathrm{BPCC}} \leq t^{*}\right] \nonumber\\ \leq &~ \mathbb{E}\left[T_{\mathrm{BPCC}} | T_{\max } \leq \Theta(N), T_{\mathrm{BPCC}} \leq t^{*}\right]\nonumber\\
=&~\frac{1}{\operatorname{Pr}\left[T_{\max } \leq \Theta(N), T_{\mathrm{BPCC}}\leq t^{*}\right]} \nonumber\\ &
\times \int_{t_{1}=0}^{\Theta(N)} \int_{t_2=0}^{t^{*}} t_{2} d \operatorname{Pr}\left[T_{\max } \leq t_{1}, T_{\mathrm{BPCC}} \leq t_{2}\right] \nonumber\\ \leq &~\frac{t^*}{\operatorname{Pr}\left[T_{\max } \leq \Theta(N), T_{\mathrm{BPCC}}\leq t^{*}\right]} \nonumber\\ & 
\times \int_{t_{1}=0}^{\Theta(N)} \int_{t_2=0}^{t^{*}} d \operatorname{Pr}\left[T_{\max } \leq t_{1}, T_{\mathrm{BPCC}} \leq t_{2}\right] \nonumber\\ =&~t^*
\end{align}
Combining \equref{eq:eqtwoevents}, \equref{eq:thirdterm2}, \equref{eq:second_term}, and \equref{eq:real_thirdTerm},   we then have $\mathbb{E}\left[T_{\mathrm{BPCC}}\right] \leq t^{*}+o(1)$.


\subsection*{Proof of Theorem \ref{thm:tau}}
According to \lemmaref{lemma:lm1} and \theoref{thm:1}, we can derive
 \begin{equation}
    \tau^*-o(1)  \leq\mathbb{E}\left[T_{\mathrm{BPCC}}\right] \leq \tau^*+o(1) \nonumber
 \end{equation}
Therefore, 
$\lim _{N \rightarrow \infty} \mathbb{E}\left[T_{\mathrm{BPCC}}\right]=\tau^{*}$.

\subsection*{Proof of Theorem \ref{thm:convergence}}
Before showing the proof of Theorem \ref{thm:convergence}, we first present the following lemma, which will be used to prove this theorem. 

\begin{lemma}
\label{lemma:concave} 
Suppose $g(x)$ is a non-decreasing concave function and $g(x) \geq 0$, then 
\begin{equation}
    \frac{1}{p} \sum_{k=1}^{p} g\left(\frac{k}{p}\right) \geq \frac{1}{p+1} \sum_{k=1}^{p+1} g \left(\frac{k}{p+1}\right),
\end{equation}
for any $p \in \mathbb{Z}^+$.
\end{lemma}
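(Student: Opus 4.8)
The plan is to read $\frac{1}{p}\sum_{k=1}^{p} g(k/p)$ as a right-endpoint Riemann average $A_p$ of $g$ over $[0,1]$ and to establish the stronger monotonicity $A_p \ge A_{p+1}$ directly. The intuition is that for concave $g$ the right-endpoint rule over-estimates $\int_0^1 g$, and refining the uniform grid shrinks this over-estimate; the work is to turn this heuristic into an exact inequality by comparing the $p$-grid and $(p+1)$-grid term by term.

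First I would fix the decomposition in the correct direction. For each $k\in[p]$ one verifies the nesting $\frac{k}{p+1}\le \frac{k}{p}\le \frac{k+1}{p+1}$ and solves for the interpolation weight, getting the identity $\frac{k}{p}=\bigl(1-\tfrac{k}{p}\bigr)\tfrac{k}{p+1}+\tfrac{k}{p}\cdot\tfrac{k+1}{p+1}$. Concavity of $g$ then gives $g\!\left(\tfrac{k}{p}\right)\ge \bigl(1-\tfrac{k}{p}\bigr)g\!\left(\tfrac{k}{p+1}\right)+\tfrac{k}{p}\,g\!\left(\tfrac{k+1}{p+1}\right)$. The direction matters: I expand the coarse nodes over the fine nodes, so concavity bounds the summands of $A_p$ from below; the superficially more natural expansion of fine nodes over coarse nodes would only bound $A_{p+1}$ from below and be useless here.

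Next I would sum these $p$ inequalities, divide by $p$, and collect the coefficient of each fine value $h_j:=g(j/(p+1))$. The crucial computation is that $h_j$ picks up a contribution from $k=j$ and from $k=j-1$, and for every $j\in\{1,\dots,p\}$ these add up to the single constant $\frac{p-1}{p^2}$, whereas the top value $h_{p+1}=g(1)$ receives coefficient $\frac{1}{p}$. Writing $\Sigma:=\sum_{j=1}^{p} h_j$, this collapses the lower bound to $A_p\ge \frac{p-1}{p^2}\Sigma+\frac{1}{p}h_{p+1}$.

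It remains to check $\frac{p-1}{p^2}\Sigma+\frac{1}{p}h_{p+1}\ge A_{p+1}=\frac{1}{p+1}(\Sigma+h_{p+1})$, and a one-line subtraction reduces this to $\frac{1}{p(p+1)}\bigl(h_{p+1}-\frac{1}{p}\Sigma\bigr)\ge 0$, i.e.\ to $g(1)$ dominating the average of $g(1/(p+1)),\dots,g(p/(p+1))$ — which is immediate since $g$ is non-decreasing. Chaining the two inequalities gives $A_p\ge A_{p+1}$. The only real obstacle is choosing the decomposition with the right orientation; after that the coefficient bookkeeping degenerates to a single constant and the leftover inequality is precisely the monotonicity of $g$ (the hypothesis $g\ge 0$ is in fact not needed for this argument).
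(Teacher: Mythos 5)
Your proof is correct and takes essentially the same approach as the paper: you use the identical interpolation identity $\frac{k}{p}=\bigl(1-\frac{k}{p}\bigr)\frac{k}{p+1}+\frac{k}{p}\cdot\frac{k+1}{p+1}$ and the same concavity step expanding the coarse nodes over the fine ones. The only difference is where monotonicity enters --- the paper applies $g\bigl(\frac{k}{p+1}\bigr)\leq g\bigl(\frac{k+1}{p+1}\bigr)$ inside each term to lower the weight from $\frac{k}{p}$ to $\frac{k}{p+1}$ so the sum telescopes exactly to $\frac{p}{p+1}\sum_{k=1}^{p+1}g\bigl(\frac{k}{p+1}\bigr)$, whereas you sum first, collect the constant coefficient $\frac{p-1}{p^2}$, and invoke monotonicity once at the end via $g(1)\geq\frac{1}{p}\sum_{j=1}^{p}g\bigl(\frac{j}{p+1}\bigr)$; both finishes are valid, and your remark that the hypothesis $g\geq 0$ is unused applies equally to the paper's own argument.
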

\begin{proof}

Let $y_{k}= \frac{k}{p}$ and $z_{k}= \frac{k}{p+1}$, where $k\in [p]$. 
We then have 
\begin{equation}
    y_{k}=\left(1-\frac{k}{p}\right) z_{k}+\frac{k}{p} z_{k+1}. \ \  
\end{equation}
As $g(x)$ is a concave function, we have
\begin{align}
    g\left(y_{k}\right) &= g\left[\left(1-\frac{k}{p}\right) z_{k}+\frac{k}{p} z_{k+1}\right]\nonumber\\ &\geq\left(1-\frac{k}{p}\right) g\left(z_{k}\right)+\frac{k}{p} g\left(z_{k+1}\right)
\end{align}
Moreover, as $g(x)$ is also a non-decreasing function, which implies $g(z_k)\leq g(z_{k+1})$, we then have

\begin{align}
     g\left(y_{k}\right) &\geq \left(1-\frac{k}{p}\right) g\left(z_{k}\right)+\frac{k}{p} g\left(z_{k+1}\right)\nonumber\\
     &=  g\left(z_{k}\right)+\frac{k}{p}\left( g\left(z_{k+1}\right)-g\left(z_{k}\right)\right) \nonumber\\
     &\geq g\left(z_{k}\right)+\frac{k}{p+1}\left( g\left(z_{k+1}\right)-g\left(z_{k}\right)\right)\nonumber\\
     &= \left(1-\frac{k}{p+1}\right) g\left(z_{k}\right)+\frac{k}{p+1} g\left(z_{k+1}\right)
  \end{align}  
By summing  over all possible values of $k$ for both sides of the above inequality, we get
\begin{align}
     \sum_{k=1}^{p}  g\left(y_{k}\right) \geq& \frac{1}{p+1} \sum_{k=1}^{p}(p+1-k) g\left(z_{k}\right)\nonumber\\ & +\frac{1}{p+1} \sum_{k=1}^{p} k g\left(z_{k+1}\right)\nonumber\\
     =& \frac{1}{p+1} [ p \sum_{k=1}^{p} g\left(z_{k}\right)+\sum_{k=1}^{p}(1-k) g\left(z_{k}\right)\nonumber\\&+p g\left(z_{p+1}\right)+\sum_{k=2}^{p}(k-1) g\left(z_{k}\right) ]\nonumber\\
    =& \frac{p}{p+1} \sum_{k=1}^{p+1}  g\left(z_{k}\right)
\end{align}
which leads to $\frac{1}{p} \sum_{k=1}^{p} g\left(\frac{k}{p}\right) \geq \frac{1}{p+1} \sum_{k=1}^{p+1} g \left(\frac{k}{p+1}\right)$.
\end{proof}
\medskip

Now let's prove Theorem \ref{thm:convergence}. To prove that $\tau^* = \frac{r}{\beta}$ decreases with the increase of any $p_i$, $i\in [N]$, we just need to prove that $\beta$ increases with the increase of any $p_i$. Note that
\begin{equation}
\label{eq:beta1}
    \displaystyle \beta = \sum_{i=1}^{N}\frac{1}{\lambda_{i}}\left(1-\frac{1}{p_i}\sum_{k=1}^{p_i}e^{-\mu_i(\frac{\lambda_ip_i}{k}-\alpha_i)}\right)
\end{equation}
is dependent on both $\lambda_i$ and $p_i$. In the following, we first show that the change of $\lambda_i$ does not impact $\beta$. 
Particularly, by taking the partial derivative of $\beta$ with respect to $\lambda_i$, we have 
\begin{align}
\frac{\partial \beta}{\partial \lambda_i}=&-\frac{1}{\lambda_i^2}\left(1-\frac{1}{p_i}\sum_{k=1}^{p_i}e^{-\mu_i(\frac{\lambda_ip_i}{k}-\alpha_i)}\right)\nonumber\\&+\frac{1}{\lambda_i}\sum_{k=1}^{p_i}\frac{\mu_i}{k}e^{-\mu_i(\frac{\lambda_ip_i}{k}-\alpha_i)}\nonumber\\
=&-\frac{1}{\lambda_i^2}+\frac{1}{\lambda_i^2}\sum_{k=1}^{p_i}\left(\frac{1}{p_i}+\lambda_i\frac{\mu_i}{k}\right)e^{-\mu_i(\frac{\lambda_ip_i}{k}-\alpha_i)}\nonumber\\
\stackrel{(a)}{=}&-\frac{1}{\lambda_i^2}+\frac{1}{\lambda_i^2}\nonumber\\
=&0,
\end{align}

where $(a)$ is obtained  using \equref{eq:lambda}. Therefore, $\beta$ does not change as $\lambda_i$ varies. 

Next, we prove that $\beta$ increases with the increase of any $p_i$, $i\in[N]$.  
Define the following auxiliary function for each $i \in [N]$,
\begin{equation}
    g_i(x)=e^{-\frac{\mu_i\lambda_i}{x}}
\end{equation}
We can easily verify that $g_i(x)$ is a concave and increasing function, as $\mu_i>0, \lambda_i>0$, and $g_i(x)\geq0$. According to \lemmaref{lemma:concave},  $\frac{1}{p_i} \sum_{k=1}^{p_i} g_i\left(\frac{k}{p_i}\right) \geq \frac{1}{p_i+1} \sum_{k=1}^{p_i+1} g_i \left(\frac{k}{p_i+1}\right)$. Therefore, with the increase of any $p_i$, the term $\frac{1}{p_i}\sum_{k=1}^{p_i}e^{-\mu_i(\frac{\lambda_ip_i}{k}-\alpha_i)}$ in \equref{eq:beta1} decreases, causing $\beta$ to increase.

\subsection*{Proof of \theoref{thm:4}}
According to Theorem \ref{thm:convergence}, $\tau^*$ decreases with the increase of any $p_i$, $i\in [N]$. Therefore, the infimum of $\tau^*$ is attained when $p_i \rightarrow \infty$, $\forall i \in[N]$, i.e., 
\begin{equation}
    \inf \tau^* = \lim_{p_i\rightarrow \infty, \forall i \in[N]} \tau^* = \lim_{p_i\rightarrow \infty, \forall i \in[N]} \frac{r}{\beta}
\end{equation}
Let's now calculate $\lim_{p_i\rightarrow \infty, \forall i \in[N]} \beta$. According to \lemmaref{lemma:lambda}, we have $\lim_{p_i \rightarrow \infty}\lambda_i=\alpha_i$, which leads to 
\begin{align}
\label{eq:beta_converge}
    &\lim_{p_i\rightarrow\infty, \forall i \in [N]}\beta \nonumber\\=&\lim_{p_i\rightarrow\infty, \forall i \in [N]}\sum_{i=1}^{N}\frac{1}{\lambda_{i}}\left(1-\frac{1}{p_i}\sum_{k=1}^{p_i}e^{-\mu_i(\frac{\lambda_ip_i}{k}-\alpha_i)}\right) \nonumber\\
    =& \lim_{p_i\rightarrow\infty, \forall i \in [N]}\sum_{i=1}^{N}\frac{1}{\lambda_{i}}\left(1-e^{\mu_i\alpha_i}\int_{0}^{1}e^{-\frac{\mu_i\lambda_i}{x}}dx \right) \nonumber \\
    =&\sum_{i=1}^N\frac{1}{\alpha_i}(1-e^{\mu_i\alpha_i}\int_{0}^{1}e^{-\frac{\mu_i\alpha_i}{x}}dx)
\end{align}
Therefore, 
\begin{align}
    \inf \tau^*= &\lim_{p_i \rightarrow \infty, \forall i\in [N]} \tau^*\nonumber\\ = &  \frac{r}{\sum_{i=1}^N\frac{1}{\alpha_i}(1-e^{\mu_i\alpha_i}\int_{0}^{1}e^{-\frac{\mu_i\alpha_i}{x}}dx)}
\end{align}

Similarly, from Theorem \ref{thm:convergence}, we can derive that the supremum of $\tau^*$ is attained when $p_i=1$, $\forall i\in[N]$, i.e., 
\begin{equation}
    \sup \tau^* =  \sum_{i=1}^{N}\frac{1}{\sup \lambda_{i}}\left(1-e^{-\mu_i(\sup \lambda_i-\alpha_i)}\right),
\end{equation}
where $\sup \lambda_i$ is given by \equref{eq:lambda_sup}. 

\subsection*{Proof of Corollary \ref{corollary:2}}
As  $\tau^*$ approaches its infimum when $p_i \rightarrow \infty$, $\forall i\in[N]$, we have 
\begin{align}
    \hat{\ell_i}=&\lim_{p_j\rightarrow \infty, \forall j \in [N]}\ell^*_i\nonumber\\ =&\lim_{p_j\rightarrow \infty, \forall j \in [N]} \frac{r}{\beta\lambda_i}\nonumber\\ =& \frac{r}{\alpha_i\sum_{j=1}^N\frac{1}{\alpha_j}(1-e^{\mu_j\alpha_j}\int_{0}^{1}e^{-\frac{\mu_j\alpha_j}{x}}dx)}, 
\end{align}
according to Eq.~\eqref{eq:prob1_solution} and 
\equref{eq:beta_converge} in the proof of Theorem \ref{thm:4}.

\subsection*{Proof of Theorem \ref{thm:3}}

According to Theorem \ref{thm:tau}, we have $\lim _{N \rightarrow \infty} \mathbb{E}\left[T_{\mathrm{BPCC}}\right]=\tau^{*}$. Similarly, according to \cite{reisizadeh19coded}, we can derive $\lim _{N \rightarrow \infty} \mathbb{E}\left[T_{\mathrm{HCMM}}\right]=\tau^*_H$. Since HCMM is a special case of BPCC with $p_i = 1$, $\forall i \in [N]$, 
by applying \theoref{thm:convergence}, 
we have 
 \begin{equation}
         \lim _{N \rightarrow \infty} \mathbb{E}\left[T_{\mathrm{BPCC}}\right]\leq\lim _{N \rightarrow \infty} \mathbb{E}\left[T_{\mathrm{HCMM}}\right] \nonumber
 \end{equation}

\end{document}